\theoremstyle{plain}
\newtheorem{thm}{\protect\theoremname}
\theoremstyle{plain}
\newtheorem{lem}[thm]{\protect\lemmaname}
\theoremstyle{plain}
\theoremstyle{plain}
\newtheorem{defn}[thm]{\protect\definitionname}
\providecommand{\definitionname}{Definition}
\providecommand{\corollaryname}{Corollary}
\providecommand{\lemmaname}{Lemma}
\providecommand{\theoremname}{Theorem}
\begin{document}
\global\long\def\N{\mathbb{N}}%
\global\long\def\R{\mathbb{R}}%
\global\long\def\calb{\mathcal{B}}%
\global\long\def\calp{\mathcal{P}}%
\global\long\def\calq{\mathcal{Q}}%
\global\long\def\calx{\mathcal{X}}%
\global\long\def\OPT{\mathtt{OPT}}%
% \global\long\def\OPT{\mathrm{OPT}}%
\global\long\def\comp{\mathrm{comp}}%
\global\long\def\sr{\mathrm{sr}}%
\global\long\def\out{\mathrm{out}}%
\global\long\def\rad{\mathrm{rad}}%
\newcommand{\creplaced}[1]{\stackon[-2.7pt]{$\overline{#1}$}{\scriptsize$\circ$}}

\newcommand{\aw}[1]{\textcolor{orange}{#1}}
\newcommand{\hugo}[1]{\textcolor{red}{#1}}
\newcommand{\katja}[1]{\textcolor{blue}{#1}}
\newcommand{\moritz}[1]{\textcolor{green}{#1}}

\title{\Large A $(3+\epsilon)$-approximation algorithm for the minimum sum of radii problem with outliers and extensions for generalized lower bounds}
\author{Moritz Buchem\thanks{Technische Universität München.}
    \and Katja Ettmayr\footnotemark[1]
    \and Hugo K. K. Rosado\footnotemark[1]
    \and Andreas Wiese\footnotemark[1]}

\date{\today}

\maketitle

\begin{abstract}
    \small\baselineskip=9pt
    Clustering is a fundamental problem setting with applications in many different areas.
    For a given set of points in a metric space and an integer $k$, we seek to partition the given points into
    $k$ clusters. For each computed cluster, one typically defines one point as the center of the cluster.
    A natural objective is to minimize    the sum of the cluster center's radii, where we assign the smallest
    radius $r$ to each center such that each point in the cluster is at a distance of at most $r$ from the center.
    The best-known polynomial time approximation ratio for this problem is $3.389$.
    In the setting with outliers, i.e., we are given an integer $m$ and allow up to $m$ points that are not in any cluster, the best-known approximation factor~is~$12.365$.

    In this paper, we improve both approximation ratios to $3+\epsilon$.
    Our algorithms are primal-dual algorithms that use fundamentally new ideas to compute solutions and to guarantee the claimed approximation ratios.

    For example, we replace the classical binary search to find the best value of a Lagrangian multiplier~$\lambda$ by a primal-dual routine in which $\lambda$ is a variable that is raised.
    Also, we show that for each connected component due to almost tight dual constraints, we can find one single cluster that covers all its points and we bound its cost via a new primal-dual analysis.
    We remark that our approximation factor of $3+\epsilon$ is a natural limit for the known approaches in the literature.

    Then, we extend our results to the setting of lower bounds.
    There are algorithms known for the case that for each point $i$ there is a lower bound $L_{i}$, stating that we need to assign at least $L_{i}$ clients to $i$ if~$i$ is a cluster center.
    For this setting, there is a $ 3.83$ approximation if outliers are not allowed and a ${12.365}$-approximation with outliers.
    We improve both ratios to $3.5 + \epsilon$ and, at the same time, generalize the type of allowed lower~bounds.
\end{abstract}

\section{Introduction.}

Clustering is an important problem setting with applications in many areas, such as machine learning, image analysis, information retrieval, and data compression.
Given a set of data points, the goal is to group the points to clusters such that similar points (those that lie closely together) are assigned to the same cluster.
Typically, we are given an integer $k$ that denotes (an upper bound on) the desired number of clusters.
It is natural to define one point of each cluster to be its \emph{center}, and to define the \emph{radius} of the cluster to be the distance between the center and the point in the cluster that is furthest away from the center.

In order to assess the quality of a clustering, one needs a suitable objective function.
One possible objective function yields the $k$\textsc{-Center} problem in which we want to minimize the largest radius of the computed clusters.
This~problem is well understood: it admits a $2$-approximation algorithm and it is $\mathsf{NP}$-hard to approximate the problem within a factor of $2-\epsilon$, for any $\epsilon>0$~\cite{Gonzalez1985, Hochbaum1985}.
This~holds even in the setting of \emph{outliers} in which we are additionally given an integer $m$ and we do not need to include all points in the clusters, but we can omit up to $m$ points~\cite{Chakrabarty2020, Charikar2001}.
Therefore, the approximability of the $k$\textsc{-Center} problem is completely understood, even for the case of outliers.

However, in $k$\textsc{-Center}, the objective function value is dominated by the largest radius. In particular, if an instance requires one relatively large cluster, in an optimal solution all other clusters
might as well have the same (large) radius.
However, naturally one would like to choose the other radii as small as possible in such a case.

Therefore, in this paper, we study a different objective, which is to minimize the \emph{sum} of the cluster radii, which yields the \textsc{Minimum Sum of Radii} problem.
Formally, in this problem, we are given a set of points~$X$ in a metric space and an integer $k$.
For any two points $i,j\in X$ we denote by $d(i,j)$ their distance.
Our goal is to select a set of at most $k$ centers $i_{1},...,i_{\ell}$ and radii $r_{1},...,r_{\ell} \ge 0$ such that each point~$j\in X$ is \emph{covered}, i.e.,there is a center $i_{\ell'}$ with $\ell'\in[\ell]$ such that $d(j,i_{\ell'})\le r_{\ell'}$.
The objective function is to minimize~$\sum_{\ell'=1}^{\ell}r_{\ell'}$.
In the \textsc{Minimum Sum of Radii} \textsc{with Outliers} problem, we are given additionally an integer $m$ and allow up to $m$ outliers, i.e., we do not require that all points in $X$ are covered, but require
only that at least $|X|-m$ points in $X$ are covered.

The \textsc{Minimum Sum of Radii} problem is much less understood than the $k$\textsc{-Center} problem.
Without outliers, the best known polynomial time algorithm is due to Friggstad and Jamshidian and gives a $3.389$-approximation~\cite{Friggstad2022}, improving on a previous ${(3.504 + \epsilon)}$-approximation by Charikar and Panigrahy~\cite{Charikar2001a}.
Both algorithms are based on the classical concept of bi-point solutions (see, e.g.,~\cite{Ahmadian2016, Charikar2004, Friggstad2022, Jain2001, Li2013}).
The idea is to run a binary search on a Langrangian multiplier $\lambda$ and to compute two solutions, one with at least $k$ centers (which might have strictly more than $k$ centers, which makes it infeasible) and one with at most $k$ centers.
Then, one ``merges'' these two solutions into one feasible solution (with at most $k$ centers); intuitively, one tries to compute a convex combination of them. For these known approaches, the natural limit is an approximation factor of 3, which one would obtain if one could compute a ``perfect'' convex combination of the two candidate
solutions. However, the respective approximation ratios are worse than $3$ since it is not clear how to compute such a solution; in fact, it might not even exist.

In the setting with outliers, \textsc{Minimum Sum of Radii} is even less understood.
The best-known result is a $12.365$-approximation algorithm by Ahmadian and Swamy~\cite{Ahmadian2016}.
We note that this algorithm works even for a more general problem in which for each point $i\in X$ there is additionally a lower bound $L_{i}$, specifying that if $i$ is the center of a cluster, then at least $L_{i}$ points must be assigned to $i$. In particular, in addition to choosing the cluster centers and their radii, for each point $j\in X$ we need to choose a cluster center $i$ with~$d(i,j)\le r_{i}$ that we assign $j$ to (obeying the required lower bounds).
In the setting with lower bounds but without outliers, there is a $3.83$-approximation due to Ahmadian and Swamy~\cite{Ahmadian2016}.

A related problem is the \textsc{Minimum Sum of Diameters} problem in which we want to partition the given points into $k$ clusters for a given integer $k$, and we seek to minimize the sum of the diameters of the clusters, rather than the radii.
Any $\alpha$-approximation algorithm for \textsc{Minimum Sum of Radii} immediately yields a $2\alpha$-approximation
for \textsc{Minimum Sum of Diameters} problem, and thus we obtain algorithms for this problem via the results for \textsc{Minimum Sum of Radii} above (for the respective cases).
For the setting without outliers and without lower bounds, a $6.546$-approximation is known due to Friggstad and Jamshidian~\cite{Friggstad2022}, which is hence better than the approximation ratio obtained via the best-known algorithm for \textsc{Minimum Sum of Radii} for this~case.

\subsection{Our contribution.}

In this paper, we present a polynomial time ${(3+\epsilon)}$-approximation for the \textsc{Minimum Sum of Radii} problem with \emph{and} without outliers.
Thus, we essentially reach the mentioned limit of $3$, even for the case with outliers for which the best known previous approximation ratio was ${12.365}$~\cite{Ahmadian2016}.
We also study the setting with lower bounds for which we obtain a ${(3.5+\epsilon)}$-approximation algorithm with and without outliers, again improving the best known ${12.365}$-approximation and ${3.83}$-approximation algorithms~\cite{Ahmadian2016}.
In fact, our algorithm works even in a more general setting which we call \emph{generalized lower bounds}.
See~Table~\ref{tab:results} for a list of our results.

\begin{table}
    \setlength{\tabcolsep}{4pt} \resizebox{1.\textwidth}{!}{
        \begin{tabular}{|c|cc|cc|}
            \hline
            \textsc{Min. Sum of Radii}             & \multicolumn{2}{c|}{no lower bounds} & \multicolumn{2}{c|}{with lower bounds}\tabularnewline
            \hline
            \multicolumn{1}{|c|}{without outliers} & ~\textbf{\boldmath$3+\epsilon$}      & ($3.389$~\cite{Friggstad2022})                        & ~\textbf{\boldmath$3.5+\epsilon$} & ($3.83$~\cite{Ahmadian2016}) \tabularnewline
            \hline
            \multicolumn{1}{|c|}{with outliers}    & ~\textbf{\boldmath$3+\epsilon$}      & ($12.365$~\cite{Ahmadian2016})                        & ~\textbf{\boldmath$3.5+\epsilon$} & ($12.365$~\cite{Ahmadian2016}) \tabularnewline
            \hline
        \end{tabular}\hspace{11pt} %
        \begin{tabular}{|ccc|}
            \hline
            \multicolumn{3}{|c|}{\textsc{Min. Sum of Diameters}}\tabularnewline
            \hline
            \multicolumn{1}{|c|}{without outliers} & ~\textbf{\boldmath$6+\epsilon$} & ($6.546$~\cite{Friggstad2022})\tabularnewline
            \hline
            \multicolumn{1}{|c|}{with outliers}    & ~\textbf{\boldmath$6+\epsilon$} & ($24.73$~\cite{Ahmadian2016})\tabularnewline
            \hline
        \end{tabular}}\caption{\label{tab:results}Our approximation ratios compared to the previously best-known results (in parenthesis).}
\end{table}

First, we present our $(3+\epsilon)$-approximation for \textsc{Minimum Sum of Radii} without outliers.
It differs drastically from the mentioned previous approaches for the problem~\cite{Charikar2001a,Friggstad2022}.
In contrast to those, we do \emph{not} do a binary search on the Lagrangian multiplier $\lambda$.
Instead, we use a primal-dual approach in which $\lambda$ is a \emph{variable}, rather than a constant, and $\lambda$
is raised like all the other variables in the dual linear program~(LP).
In the process, there are pairs $(i,r)$ for a point $i\in X$ and a radius $r$ whose corresponding dual constraints become (essentially) tight; we select their corresponding variables in the primal LP.
For such a pair $(i,r)$, one may visualize a ball of radius $r$ around $i$.
At the end, we obtain one single value for $\lambda$ and a solution for the primal LP such that the balls of the selected tight pairs $(i,r)$ form at most $k$ connected components (in the graph obtained by introducing a vertex for each center and connecting two centers by an edge if their corresponding balls overlap).
However, there is one special selected pair $(i^{*},r^{*})$ such that without $(i^{*},r^{*})$, we would obtain strictly more than $k$ components.
Thus, intuitively, we obtain two solutions that are very similar: one solution including $(i^{*},r^{*})$ with at most $k$ components, and another solution without $(i^{*},r^{*})$ with more than $k$ components.

In the previous approaches, two candidate solutions are computed (that are possibly much less similar than ours).
Then, they are transformed into integral solutions as follows.
The selected pairs in the primal LP are ordered non-increasingly by their respective radii $r$.
Then, for each considered pair $(i,r)$, one creates a cluster with center $i$ and radius $3r$ (this is why an approximation factor of 3 is a natural limit), and one deletes all pairs $(i',r')$ for which the ball corresponding
to~$(i',r')$ intersects with the ball corresponding to~$(i,r)$ (i.e., for which there is a point $j\in X$ that is at distance at most $r$ to $i$ and at distance at most $r'$ to~$i'$).
However, even if one starts with two solutions that differ by only one selected pair like $(i^{*},r^{*})$ in the primal LP, the resulting solutions can look very differently due to cascading effects.
Then, it is not clear how to merge them while keeping the resulting cost bounded.

Instead, we use a different and algorithmically much simpler approach.
For each connected component of selected tight pairs $(i,r)$, we define \emph{one} cluster that contains all points covered by pairs in the component.
One key technical contribution is to show that we can choose a corresponding cluster center and a radius such that the resulting cost is paid by the dual variables corresponding to the points in the cluster.
This allows for a very simple algorithmic routine to select the cluster center and its radius: we simply choose the smallest such pair that covers all points of the component.
Recall that without the special pair $(i^{*},r^{*})$ we would have more than $k$ components.
We use this property to argue via a primal-dual analysis that our approximation ratio is $3+\epsilon$. In particular, we need at least $k$ components in order to pay for a term in dual objective corresponding to the Lagrangian multiplier $\lambda$.

Then we present our $(3+\epsilon)$-approximation algorithm for the setting with outliers.
Again, we compute one value for the Lagrangian multiplier $\lambda$ and a solution to the primal~LP with at most
$k$ components.
However, the setting with outliers is more complicated, and because of this, we need a solution with more technical properties to make the primal-dual analysis work.
Before, we had only one single special pair~$(i^{*},r^{*})$ that we included or not, yielding two different solutions.
Now, we order the selected tight pairs and construct a new solution by taking a well-chosen prefix of them, possibly adding a special pair~$(i^{*},r^{*})$, similarly as~before.
Our algorithm for computing the primal solution is different from the case without outliers.
For each fixed value for the Lagrangian multiplier $\lambda$, we consider a primal-dual process in which the dual variables are raised simultaneously and we stop raising a dual variable once one of its constraints becomes tight (similar to~\cite{Ahmadian2016,Charikar2001a}).
It depends on the initial (fixed) value of $\lambda$ how this process behaves, which dual constraints become tight, in which order they become tight, etc.
We do a binary search on $\lambda$; however, we do it over discrete break-points at which intuitively the mentioned process starts to behave differently, e.g., because the order changes in which the dual constraints become tight.
Once the binary search has found the ``right'' value for $\lambda$, we need to perform additional steps to ensure the mentioned technical properties for our solution.

In the end, we obtain one value for $\lambda$ and a set of selected pairs $(i,r)$ in the primal LP.
Our algorithm is then again very easy: we consider the connected components formed by these pairs and form one cluster for each of them.
However, the difficult part is the analysis where we need the additional properties of the computed solution to bound our approximation ratio.
In particular, the dual objective contains a term corresponding to $\lambda$ and an additional term corresponding to the outliers.
Intuitively, both terms need to be paid for in the primal-dual analysis, so that the cost of our primal solution is essentially by at most a factor~$3+\epsilon$ larger than the constructed dual solution.

As mentioned above, the ratio of $3+\epsilon$ is the natural limit for the previous approaches to solve \textsc{Minimum Sum of Radii}, even in the setting without outliers~\cite{Charikar2001a,Friggstad2022}.
In particular, we provide a family of tight instances for the class of primal-dual algorithms that are based on
raising the dual variables simultaneously (while using a binary search framework to estimate $\lambda$).
This also yields a tight example for our~approach (see Appendix~\ref{apx:tight_example}).

Finally, we extend our algorithms from above to $(3.5+\epsilon)$-approximation algorithms (with and without outliers) for a setting that we call \emph{generalized lower bounds}.
It comprises many cases that go beyond requiring for each center $i$ that at least $L_{i}$ clients are assigned to $i$, for some given value $L_{i}$.
For example, we can handle weighted clients with different colors such that each possible center has a lower bound for the needed weight of the assigned clients from each color class.
Also, it captures a generalization of \textsc{Minimum Sum of Radii} in which only a subset of the given points are allowed as cluster centers.
Formally, we can handle any setting in which for each possible center $i\in X$ there is a (maybe implicitly given) set~${\calx_{i}\subseteq2^{X}}$ of allowed sets of clients for~$i$, such that if $X'\in\calx_{i}$ then any superset of $X'$ is also contained in~$\calx_{i}$.
Our techniques above are sufficiently robust such that we can extend them to this new setting.
The main difference is that in the primal LP we include a variable for a pair $(i,r)$ only if there are enough points at distance $r$ from $i$ such that the lower bound requirement can be satisfied; also, we select the cluster centers
for our solution only from centers of pairs in the support of the LP.
This results in the (only slightly larger) approximation ratio of $3.5+\epsilon$.

Our results above for \textsc{Minimum Sum of Radii} immediately yield $(6+\epsilon)$-approximation algorithms for
\textsc{Minimum Sum of Diameters} with and without outliers.
This improves the best-known approximation ratios for the problem in both settings.
Note that the setting of lower bounds is not well-defined for \textsc{Minimum Sum of Diameters} since there are no cluster centers.

\subsection{Other related work.}

For \textsc{Minimum Sum of Radii} (without outliers and without lower bounds) there is a $(3+\epsilon)$-pseudo-approximation algorithm due to Charikar and Panigrahy that computes a pseudo-solution with up to
$k(1+1/\epsilon)$ clusters, but compares its cost to an optimal solution with only $k$ clusters~\cite{Charikar2001a}.
For the special case of a shortest-path metric in an unweighted graph, the problem is even solvable in polynomial time in the special setting where radii of length $0$ are not allowed~\cite{Behsaz2015}, while for the case of a shortest-path metric in an edge-weighted graph, the problem admits an $\mathsf{XP}$ algorithm with parameter treewidth~\cite{Drexler2023}.
Also, the problem can be solved in polynomial time in Euclidean spaces of constant dimension~\cite{Gibson2010}, but it is $\mathsf{NP}$-hard in shortest-path metrics induced by weighted planar graphs, in metrics of doubling dimension, and even when the locations of the centers are known beforehand~\cite{Drexler2023, Gibson2010}.
Moreover, the problem admits a randomized $(1+\epsilon)$-approximation algorithm in quasi-polynomial time~\cite{Gibson2010}.
This directly implies in a randomized $(2+\epsilon)$-approximation algorithm in quasi-polynomial time for the \textsc{Minimum Sum of Diameters}, which is $\mathsf{NP}$-hard to approximate within a factor of $2-\epsilon$, for any $\epsilon>0$~\cite{Doddi2000}.
For the capacitated version (i.e., each cluster center can serve only a bounded number of clients) there are FPT approximation algorithms known with constant approximation ratios~\cite{Bandyapadhyay2023,Inamdar2020},
while no $O(1)$-approximation algorithm is known with polynomial running~time.

There are also other objective functions for clustering studied in the literature yielding, e.g., the $k$-\textsc{Median} problem (where we minimize the sum of the distances of each point to its respective nearest center) and the $k$-\textsc{Means} problem (where we minimize the sum of the \emph{squared} distances of each point to its nearest center).
For $k$-\textsc{Median}, Byrka et al. provided the currently best approximation ratio of $2.675+ \epsilon$~\cite{Byrka2017}, while for $k$-\textsc{Means} the best-known approximation ratio is $6.357$ due to Ahmadian et al. \cite{Ahmadian2020}.

\section{$(3+\epsilon)$-approximation algorithm without outliers.}
\label{sec:3-approx}

In this section, we present our $(3+\epsilon)$-approximation algorithm for the \textsc{Minimum Sum of Radii} problem, i.e., for the setting without outliers.

Let $\epsilon>0$ and assume w.l.o.g.~that $1/\epsilon$ is an integer.
Assume we are given an instance of \textsc{Minimum Sum of Radii} and let $\OPT$ denote its optimal solution.
Note that w.l.o.g.~we can assume that in $\OPT$, each radius $r_{\ell}$ of some center $i_{\ell}$ equals $d(i_{\ell},j)$ for some point $j\in X$ (since otherwise we could reduce~$r_{\ell}$).
Hence, if $k\le1/\epsilon$ we can solve the problem by complete enumeration in time $n^{O(1/\epsilon)}$.
Assume now that~$k>1/\epsilon$. We guess the $1/\epsilon$ centers $i_{1},...,i_{1/\epsilon}$ from $\OPT$ with largest radii and their corresponding radii~$r_{1},...,r_{1/\epsilon}$, where $r_{1} \geq \dots \geq r_{1/\epsilon}$.
Thus, the remaining instance consists of the points $X':=X\setminus\bigcup_{\ell=1}^{1/\epsilon}B(i_{\ell},r_{\ell})$
where for each pair of a point $i\in X$ and a value $r\ge0$ we define $B(i,r):=\{j\in X:d(i,j)\le r\}$.
Also, in the remaining instance, we can open up to $k':=k-1/\epsilon$ centers, and, w.l.o.g., we assume that $|X'| > k'$.
Although a solution for instance $(X',k')$ needs to cover only $X'$, we allow solutions to have centers in $X$.

\begin{lem}
    \label{lem:biggest-radius}
    For the instance $(X',k')$ there exists an optimal solution with some centers $i_{1},...,i_{\ell}\in X$ and radii $r_{1},...,r_{\ell}$ such that $r_{\ell'}\le r_{1/\epsilon}\le\epsilon\cdot\OPT$ for each $\ell'\in[\ell]$.
\end{lem}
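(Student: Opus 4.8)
The plan is \emph{not} to modify an arbitrary optimal solution of $(X',k')$, but to exhibit a concrete one: the part of $\OPT$ that remains after deleting the $1/\epsilon$ guessed balls. Concretely, write the centers and radii of $\OPT$ as $(i_1,r_1),\dots,(i_k,r_k)$ with $r_1\ge\cdots\ge r_k$ (w.l.o.g.\ $\OPT$ uses exactly $k$ centers, padding with radius-$0$ centers placed at existing centers if needed, which changes neither cost nor feasibility), so that $(i_1,r_1),\dots,(i_{1/\epsilon},r_{1/\epsilon})$ are exactly the guessed pairs. Set $S:=\{(i_\ell,r_\ell):1/\epsilon<\ell\le k\}$. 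I will show that $S$ is a valid solution for $(X',k')$, that it is optimal, and that every radius occurring in it is at most $r_{1/\epsilon}\le\epsilon\cdot\OPT$.

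First, the bound $r_{1/\epsilon}\le\epsilon\cdot\OPT$ is a one-line averaging argument: $r_1,\dots,r_{1/\epsilon}$ are among the radii used by $\OPT$, hence $\sum_{\ell=1}^{1/\epsilon}r_\ell\le\OPT$, and since these radii are sorted non-increasingly, $\frac1\epsilon\, r_{1/\epsilon}\le\sum_{\ell=1}^{1/\epsilon}r_\ell\le\OPT$. Next, $S$ uses $k-1/\epsilon=k'$ centers, and all its radii are at most $r_{1/\epsilon}$ by the sorting. For feasibility, let $p\in X'$; by definition of $X'$, $p$ lies in none of the balls $B(i_\ell,r_\ell)$ with $\ell\le1/\epsilon$, but $\OPT$ covers $p$ (there are no outliers), so $p$ is covered by one of the balls of $S$. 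Thus $S$ is a valid solution for $(X',k')$, of cost $\sum_{\ell>1/\epsilon}r_\ell=\OPT-\sum_{\ell=1}^{1/\epsilon}r_\ell$.

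It remains to see that $S$ is optimal for $(X',k')$, which follows from the reverse operation: given any valid solution $S'$ for $(X',k')$, the collection $S'\cup\{(i_\ell,r_\ell):\ell\le1/\epsilon\}$ has at most $k'+1/\epsilon=k$ centers and covers $X'\cup\bigcup_{\ell\le1/\epsilon}B(i_\ell,r_\ell)=X$, so it is a valid solution of the original instance and therefore has cost at least $\OPT$; hence $\mathrm{cost}(S')\ge\OPT-\sum_{\ell=1}^{1/\epsilon}r_\ell=\mathrm{cost}(S)$. Together with the radius bound this yields the lemma. I do not foresee a genuine obstacle here --- this is essentially a warm-up reduction --- but the two points to handle carefully are (i) that deleting the guessed balls from $\OPT$ is only claimed to cover $X'$, not all of $X$, and (ii) that re-attaching the $1/\epsilon$ guessed balls to a solution of $(X',k')$ stays within the budget of $k$ centers; the fact that the $1/\epsilon$ guessed pairs can be taken to be the largest-radius ones of $\OPT$ is provided by the enumeration over guesses described just before the lemma.
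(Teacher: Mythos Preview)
Your proof is correct and follows essentially the same approach as the paper's: both take the residual balls of $\OPT$ after removing the $1/\epsilon$ largest ones, argue this is a feasible solution for $(X',k')$, prove optimality by swapping (you do it directly, the paper phrases it as a contradiction), and bound $r_{1/\epsilon}\le\epsilon\cdot\OPT$ by averaging over the $1/\epsilon$ largest radii. Your treatment is slightly more explicit about feasibility and the padding to exactly $k$ centers, but there is no substantive difference.
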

\begin{proof}
    For the instance $(X', k')$, let $\OPT'$ be an optimal solution.
    Let $\calb'$ denote the solution for $(X',k')$ that is obtained from $\OPT$ by disregarding the $1/\epsilon$ centers with largest radii and their corresponding radii.
    Assume for contradiction that $\calb'$ is not an optimal solution for $(X',k')$; then, $\OPT'$ together with these $1/\epsilon$ centers and corresponding radii form a feasible solution to $(X,k)$ of strictly smaller cost than $\OPT$, giving a contradiction.
    Thus, $\calb'$ has to be an optimal solution to $(X',k')$.
    Now, because the largest radius $r'$ of a center in $\calb'$ is smaller than the largest $r_1,\dots,r_{1/\epsilon}$ largest radii of centers in $\OPT$, it follows that
    \[
        \OPT
        \geq \sum_{i=1}^{1/\epsilon} r_i
        \geq \frac{1}{\epsilon} r_{1/\epsilon}
        \geq \frac{1}{\epsilon} r',
    \]
    completing the proof.
\end{proof}

In the following, we solve the instance $(X',k')$.
Let $\OPT'$ denote its optimal solution.
Since we may disregard radii that are larger than $r_{1/\epsilon}$, we define $\calb:=\{(i,r):i\in X \wedge \exists j\in X' \text{ s.t. } r=d(i,j) \leq r_{1/\epsilon} \}$ which hence contains all pairs $(i,r)$ in $\OPT'$.
Observe that ${|\calb| \in O(|X|^2)}$.
We say that a pair $(i,r)$ \emph{covers} a point $j$ if $j \in B(i,r)$ and that a set of pairs $\calb'$ \emph{covers} $j$ if $j \in B(i,r)$ for some $(i,r) \in \calb'$.
We consider the LP-relaxation $(P)$ of the remaining problem where each variable $x_{i,r}$ indicates whether the pair $(i,r)$ is chosen, i.e., whether the point $i$ is selected as a center with radius $r$.
We denote by $(D)$ its dual and define~$\OPT'_{LP}$ to be the optimal solution value of both~LPs.
Observe that $\OPT'_{LP} \leq \OPT' \leq \OPT$.

\vspace*{-14pt}
\begin{minipage}[t]{1\textwidth}
    \begin{minipage}[t]{0.45\textwidth}
        \begin{alignat*}{3}
            (P)\quad & \text{min}  & \displaystyle \sum_{(i,r)\in\calb}r\cdot x_{i,r} \hspace*{-13pt}         &         & \quad &                         \\
                     & \text{s.t.} & \displaystyle \sum_{(i,r) \in \calb:j\in B(i,j)} \hspace*{-17pt} x_{i,r} & \geq 1  &       & \forall j \in X'        \\
                     &             & \displaystyle \sum_{(i,r)\in\calb} x_{i,r}                               & \leq k' &       &                         \\
                     &             & x_{i,r}                                                                  & \geq 0  &       & \forall (i,r) \in \calb
        \end{alignat*}
    \end{minipage}
    \quad
    \begin{minipage}[t]{0.45\textwidth}
        \begin{alignat*}{3}
            (D)\quad & \text{max}  & \displaystyle \sum_{j\in X'}\alpha_{j}-k'\cdot\lambda \hspace*{-32pt} &                  & \quad &                        \\
                     & \text{s.t.} & \displaystyle \sum_{j\in B(i,r) \cap X'} \hspace*{-10pt} \alpha_{j}   & \leq r + \lambda &       & \forall(i,r) \in \calb \\
                     &             & \alpha_{j}                                                            & \geq 0           &       & \forall j \in X'       \\
                     &             & \lambda                                                               & \geq 0           &       &
        \end{alignat*}
    \end{minipage}
\end{minipage}
\medskip

Our goal is now to compute a solution $\left(\lambda,\alpha\right)$ to $(D)$ with a certain structure.
For a given solution $\left(\lambda,\alpha\right)$, we define that for a pair $(i,r)\in\calb$ the corresponding dual constraint is almost tight if $\sum_{j\in B(i,r)\cap X'}\alpha_{j}\ge r+\lambda-\mu$ for $\mu:=\frac{r_{1/\epsilon}}{|X|^2}$.
In this case, we say that this pair $(i,r)$ is \emph{almost tight}\footnote{
    At first sight, it might seem more intuitive to talk about almost tight \emph{balls}, rather than almost tight pairs.
    However, this could lead to ambiguities since balls are point sets: it is possible that for two pairs $(i,r),(i'r')$ with $r>r'$ we have $B(i,r)=B(i',r')$ while $(i,r)$ is almost tight but $(i',r')$ is not.}.
Intuitively, we will be interested in the \emph{components} formed by these almost tight pairs.
These correspond to the connected components of the graph we obtain by introducing one vertex for each tight pair,
and connecting two vertices by an edge if the corresponding balls share a point.

\begin{defn}
    \label{defn:components}
    Let $\calb'\subseteq\calb$ denote a set of pairs.
    Define a graph $G$ with $V(G):=\{v_{i,r}:(i,r)\in\calb'\}$ and $E(G):=\left\{ \{v_{i,r},v_{i',r'}\}:B(i,r)\cap B(i',r')\ne\emptyset\right\} $.
    A set of pairs $(i_{1},r_{1}),...,(i_{\ell},r_{\ell})\in\calb'$ forms a \emph{component} of $\calb'$ if the vertices $v_{i_{1},r_{1}},...,v_{i_{\ell},r_{\ell}}$ form a connected component of $G$, and $\comp(\calb')$ denotes the collection of components~of~$\calb'$.
\end{defn}

In the following, we are interested in solutions to $(D)$ that have a certain structure.
This structure will help us later to transform such solutions into feasible solutions to our given instance, while increasing the cost essentially only by a factor of at most $3+\epsilon$.

\begin{defn}
    Let $\left(\lambda,\alpha\right)$ be a solution to $(D)$ and let $\calb'\subseteq\calb$.
    We say that \emph{$\calb'$} is a set of \emph{structured pairs} for~$\left(\lambda,\alpha\right)$ if there is an almost tight pair $(i^{*},r^{*}) \in \calb'$ such that
    \begin{enumerate}[label=\normalfont SP$\arabic*.$,ref={\mbox{\rm SP$\arabic*$}}, noitemsep, leftmargin=1.5cm]
        \item each $(i,r)\in\calb'$ is almost tight in $\left(\lambda,\alpha\right)$,
        \item $\calb' \setminus \{ (i^*,r^*) \}$ covers $X'$,
        \item $|\comp(\calb'\setminus\{(i^{*},r^{*})\})| > k' \geq |\comp(\calb')|$.
    \end{enumerate}
\end{defn}

We will show that we can compute solutions for which there is a corresponding set of structured pairs~$\calb'$.
In particular, we will prove the following lemma in Subsection~\ref{subsec:Computing-well-structured}.

\begin{restatable}{lem}{lemcomputenooutliers}
    \label{lem:compute}
    In polynomial time, we can compute a solution $\left(\lambda,\alpha\right)$ to $(D)$, together with a set $\calb'$ of structured pairs for $\left(\lambda,\alpha\right)$.
\end{restatable}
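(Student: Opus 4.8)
The plan is to run a primal-dual process in which all dual variables $\alpha_j$ are raised uniformly together with $\lambda$, which I will treat as an additional variable that is also raised. Concretely, I would parametrize time by $t\ge 0$ and set $\alpha_j(t)$ to grow at unit rate until the point $j$ becomes ``frozen'' (i.e. lies in some ball whose dual constraint has become almost tight), and I would let $\lambda$ grow at a rate that keeps the ``active'' configuration consistent. First I would formalize the freezing rule: once a pair $(i,r)$ becomes almost tight, all points in $B(i,r)\cap X'$ stop increasing their $\alpha_j$. Because $\calb$ is finite ($|\calb|\in O(|X|^2)$) and the slacks $r+\lambda-\sum_{j\in B(i,r)\cap X'}\alpha_j$ change piecewise-linearly in $t$, there are only polynomially many ``events'' (a pair becoming almost tight, a point becoming frozen, a component merge), so the process can be simulated in polynomial time. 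Throughout, $(\lambda(t),\alpha(t))$ remains dual-feasible by construction, since we never let a constraint slack go below $0$ (we stop at the almost-tight threshold $\mu$, and I would check that the granularity $\mu=r_{1/\epsilon}/|X|^2$ is small enough not to interfere).

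Next I would define the stopping time. Let $\calb'(t)$ be the set of almost-tight pairs at time $t$. As $t$ increases, more pairs become almost tight, so $|\comp(\calb'(t))|$ is (eventually) non-increasing in the relevant regime; moreover, once every point of $X'$ is frozen, $\calb'(t)$ covers $X'$. I would stop at the first time $t^*$ at which $\calb'(t^*)$ covers $X'$ and $|\comp(\calb'(t^*))|\le k'$. The key claim is that at the event immediately preceding $t^*$ — or by removing a single pair responsible for the last component-merge/cover event at $t^*$ — we can identify one pair $(i^*,r^*)$ whose deletion leaves a set $\calb'$ that still covers $X'$ (SP2) but has $>k'$ components (the first inequality of SP3), while $\calb'\cup\{(i^*,r^*)\}$ has $\le k'$ components (the second inequality of SP3); all pairs in this set are almost tight by construction (SP1). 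Here I would use that $|X'|>k'$, so at the very start there are more than $k'$ singleton components, and that component count drops by at most something controllable at each event; by a discrete intermediate-value argument over the polynomially many events, there is an event at which the count crosses the threshold $k'$, and the pair added at that event is the desired $(i^*,r^*)$. If at $t^*$ the set is too large (more pairs are almost tight than needed), I would prune it: keep a minimal sub-collection of almost-tight pairs that still covers $X'$ and whose component structure witnesses SP3, which is a purely combinatorial post-processing step that preserves almost-tightness.

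I expect the main obstacle to be making the ``one special pair'' statement precise: the naive process may cross the threshold $k'$ by merging several components at a single event (several constraints going tight simultaneously, or one ball bridging many components), so that removing a single pair does not restore $>k'$ components. To handle this I would perturb the process so that events happen one at a time — e.g. break ties by a lexicographic rule on pairs, or infinitesimally perturb the thresholds $\mu_{(i,r)}$ within $[\mu/2,\mu]$ so that no two constraints become almost tight simultaneously and no single newly-tight ball merges more than two components at once (a single ball $B(i^*,r^*)$ can still intersect several existing components, so more care is needed: I would instead track, at the threshold-crossing event, a specific pair whose \emph{removal} disconnects the component structure back above $k'$, arguing by minimality of the covering sub-collection that such a pair exists). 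A secondary technical point is verifying that the granularity $\mu=r_{1/\epsilon}/|X|^2$ is fine enough that ``almost tight'' behaves like ``tight'' for all the counting arguments, and coarse enough that the number of events stays polynomial — both follow from $|\calb|\in O(|X|^2)$ and the scale $r_{1/\epsilon}$ of the instance, but I would state the bookkeeping carefully. The remaining steps (dual feasibility, polynomial running time, SP1) are routine given the event-driven simulation.
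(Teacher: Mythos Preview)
Your proposal has a genuine gap in the dynamics of the dual process, specifically in how $\lambda$ is raised relative to the $\alpha_j$'s. You say you will ``let $\lambda$ grow at a rate that keeps the active configuration consistent,'' but under your freezing rule (once $(i,r)$ becomes almost tight, every $\alpha_j$ with $j\in B(i,r)\cap X'$ stops) this cannot be done. If $\lambda$ continues to rise while all points of an almost-tight pair $(i,r)$ are frozen, the slack $r+\lambda-\sum_{j\in B(i,r)\cap X'}\alpha_j$ strictly increases, so $(i,r)$ \emph{ceases} to be almost tight. Hence the set $\calb'(t)$ of almost-tight pairs is not monotone in $t$, your claim that ``more pairs become almost tight'' fails, the component count need not be eventually non-increasing, and frozen points need not remain covered by almost-tight pairs (so SP2 is in jeopardy). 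Your polynomial-time argument via ``polynomially many events'' also breaks, since pairs can oscillate in and out of almost-tightness.

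The paper resolves exactly this tension with an idea you are missing: in each iteration it does \emph{not} freeze points permanently but instead selects a maximal set $X''\subseteq X'$ such that every currently almost-tight pair contains \emph{at most one} point of $X''$, and then raises $\lambda$ together with $\alpha_j$ for $j\in X''$ at the same rate. This guarantees that for every almost-tight pair containing a point of $X''$ the two sides of its constraint rise in lockstep, so it stays almost tight; maximality of $X''$ then ensures every point of $X'$ remains covered by some almost-tight pair across the iteration. Polynomial time follows not from event-counting but from the dual objective increasing by at least $\mu/|X|$ per iteration against the upper bound $\OPT'_{LP}\le k'\cdot r_{1/\epsilon}$. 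Finally, the ``single special pair'' $(i^*,r^*)$ is obtained cleanly---without perturbation---by taking the almost-tight sets $\calb_0,\calb_1$ from the penultimate and last iterations, starting from $\calb_0\cap\calb_1$ (which still covers $X'$ and has more than $k'$ components), and adding pairs of $\calb_1\setminus\calb_0$ one at a time until the component count first drops to at most $k'$; the last added pair is $(i^*,r^*)$, and SP2 holds because $(i^*,r^*)\notin\calb_0\cap\calb_1$. Your perturbation/pruning workaround is unnecessary once the raising rule is fixed, and the obstacle you anticipated (a single ball merging many components) is a non-issue since SP3 only requires the count to cross $k'$, not to drop by exactly one.
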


Moreover, we will show that given a solution $\left(\lambda,\alpha\right)$ to $(D)$ and a corresponding set of structured pairs, we can compute a feasible solution whose cost is essentially by at most a factor $3+\epsilon$ larger than the objective function value of $\left(\lambda,\alpha\right)$.
We will prove the following lemma in Subsection~\ref{subsec:Rounding-well-structured}.

\begin{restatable}{lem}{lemcomputenooutliersfinalcost}
    \label{lem:no-outliers-final-cost-bound}
    Let $\calb'$ be a set of structured pairs for a solution $\left(\lambda,\alpha\right)$ to $(D)$.
    In polynomial time we can compute a feasible solution whose cost is at most $3 \cdot \OPT'_{LP} + O(\epsilon) \cdot \OPT \leq 3\cdot \OPT' + O(\epsilon) \cdot \OPT$.
\end{restatable}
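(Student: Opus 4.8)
The plan is to produce the feasible solution by picking, for each component $C \in \comp(\calb')$, a single pair $(i_C, r_C)$ that covers all points covered by $C$, and then argue that the total cost $\sum_C r_C$ is at most $3\cdot\OPT'_{LP} + O(\epsilon)\cdot\OPT$ via a primal-dual accounting against $(\lambda,\alpha)$. First I would fix notation: for a component $C = \{(i_1,r_1),\dots,(i_t,r_t)\}$ let $P(C) := \bigcup_{s=1}^t B(i_s,r_s)\cap X'$ be the points it must cover, and let $\rho(C) := \max_s r_s$ be its largest radius. A standard observation about connected components of overlapping balls is that if we take the pair $(i_s,r_s)$ realizing $\rho(C)$ as a tentative center, then every point of $P(C)$ lies within distance at most $3\rho(C)$ of $i_s$ (each ball in $C$ has radius $\le \rho(C)$, so along the connecting path in $G$ the displacement telescopes, but since $C$ can have many pairs one needs the sharper fact that two balls in the \emph{same} component that both touch a common "backbone" contribute at most an additive $2\rho(C)$; I will use the cleaner route below). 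The cleaner route: I claim there is a pair $(i_C, r_C) \in \calb$ with $r_C \le 3\rho(C)$ (in fact we simply take the \emph{smallest} pair in $\calb$ whose ball contains all of $P(C)$; such a pair exists because $\calb$ contains $(i_s, d(i_s,j))$ for the farthest $j\in P(C)$, and $d(i_s,j)\le 3\rho(C) \le 3 r_{1/\epsilon} \le 3\epsilon\cdot\OPT$, so in particular $d(i_s,j)$ is a legal radius in $\calb$). This is exactly the "simple algorithmic routine" promised in the introduction. Doing this for every component and taking $(i^*,r^*)$'s component specially (see below) yields a feasible solution with at most $|\comp(\calb')| \le k'$ centers.

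The heart is the cost bound. I would bound $\sum_{C} r_C \le 3\sum_C \rho(C)$ and then charge $\sum_C \rho(C)$ to the dual. For a component $C$, pick the almost-tight pair $(i_C', \rho(C))$ achieving the maximum radius; almost tightness gives $\rho(C) \le \sum_{j \in B(i_C',\rho(C))\cap X'}\alpha_j + \mu - \lambda \le \sum_{j\in P(C)}\alpha_j + \mu - \lambda$. Summing over all components, and using that the sets $P(C)$ for distinct components are \emph{disjoint} (two pairs in different components have disjoint balls, so their $X'$-restrictions are disjoint), we get $\sum_C \rho(C) \le \sum_{j\in X'}\alpha_j + |\comp(\calb')|\cdot\mu - |\comp(\calb')|\cdot\lambda$. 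Now we invoke property SP3: $|\comp(\calb')| \le k'$, and since $\lambda \ge 0$ we may replace $|\comp(\calb')|\cdot\lambda$ by the \emph{smaller} quantity $k'\cdot\lambda$ only if we are subtracting it — i.e. $-|\comp(\calb')|\lambda \ge -k'\lambda$, which goes the wrong way. The fix is to note that we genuinely need at least $k'$ "active" objects; this is where $(i^*,r^*)$ enters. Excluding the component $C^*$ containing $(i^*,r^*)$, the set $\calb'\setminus\{(i^*,r^*)\}$ still has $> k'$ components by SP3, but all of them except possibly $C^*$ split into the same components as in $\calb'$; so $|\comp(\calb')| \ge k'$ as well, hence $|\comp(\calb')| = k'$ (combined with SP3 the other way — wait, SP3 says $k' \ge |\comp(\calb')|$, and the split bound gives $|\comp(\calb')| \ge |\comp(\calb'\setminus\{(i^*,r^*)\})| - (\text{number of pieces } C^* \text{ splits into}) > k' - (\text{that number})$; one must check $C^*$ splits into at most one more piece than... ). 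I would handle this carefully: removing one pair from a component splits it into at most as many pieces as that pair had neighbors, but the \emph{net} change in component count is what we control — actually the right statement is $|\comp(\calb')| \ge |\comp(\calb'\setminus\{(i^*,r^*)\})| - 1 > k' - 1$, hence $|\comp(\calb')| \ge k'$, which with SP3 forces equality $|\comp(\calb')| = k'$. Wait — removing a vertex can \emph{increase} component count by more than 1, not decrease it; so $|\comp(\calb'\setminus\{(i^*,r^*)\})| \ge |\comp(\calb')| - 1$ is the correct direction (adding back one vertex merges the components it touches into one, decreasing the count by at most... no, it can only decrease or keep; re-adding $v_{i^*,r^*}$ connects to $\le$ several components and fuses them plus itself into one, so $|\comp(\calb')| \le |\comp(\calb'\setminus\{(i^*,r^*)\})|$, trivially true and useless). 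The usable direction: $|\comp(\calb'\setminus\{(i^*,r^*)\})| \le |\comp(\calb')| + \deg(v_{i^*,r^*})$, too weak. I think the intended argument is simply: $|\comp(\calb')| = k'$ is \emph{not} needed; instead we bound $\sum_{C \ne C^*}\rho(C)$ over the $\ge k'$ components of $\calb'\setminus\{(i^*,r^*)\}$ (using SP2 that these cover $X'$), getting $\sum_{C\ne C^*}\rho(C) \le \sum_{j\in X'}\alpha_j + (\#\text{comps})\mu - (\#\text{comps})\lambda \le \sum_j\alpha_j + O(\mu\cdot|X|) - k'\lambda = \OPT'_{LP} + O(\epsilon\cdot\OPT)$, since $(\#\text{comps}) \ge k'$ and $\lambda \ge 0$ make $-(\#\text{comps})\lambda \le -k'\lambda$. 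Then $r_{C^*} \le 3\rho(C^*) \le 3\epsilon\cdot\OPT$ is absorbed into the $O(\epsilon)\cdot\OPT$ slack, and $3\sum_{C\ne C^*}\rho(C) + r_{C^*} \le 3\OPT'_{LP} + O(\epsilon)\cdot\OPT$. Finally $\sum_C r_C \le 3\sum_C\rho(C)$ needs the telescoping-within-a-component bound $r_C \le 3\rho(C)$, which holds because the smallest enclosing pair of $P(C)$ can be centered at the max-radius ball's center $i$ with radius $\le \rho(C) + 2\rho(C) = 3\rho(C)$ — here I'd prove that \emph{any} point of $P(C)$ is within $3\rho(C)$ of $i$: it lies in some ball $B(i_s,r_s)$, $r_s\le\rho(C)$, and $B(i_s,r_s)$ and $B(i,\rho(C))$ being in the same component means there is a \emph{direct} overlap only along a path, so this actually needs the standard lemma that in a component of balls all of radius $\le\rho$, every pair of balls' centers is within... that's false for long chains.

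So the genuinely correct center-distance bound must be: we do \emph{not} claim one ball covers the component; rather Lemma~\ref{lem:no-outliers-final-cost-bound} as stated allows the returned cluster radius to be up to $3$ times the component's \emph{total} or we use a more clever center. I will therefore structure the proof around the \textbf{main obstacle}: showing that there is a pair $(i_C, r_C) \in \calb$ covering $P(C)$ with $\sum_C r_C \le 3\sum_C \rho(C) + O(\epsilon)\OPT$ — equivalently, a single ball of radius $\le 3\rho(C)$ covering the whole component. This is \emph{not} true for arbitrary components, so the real content (the "one key technical contribution" flagged in the introduction) is that our \emph{almost-tight} components have bounded diameter: I would prove that if a component $C$ consisted of balls spanning distance more than $3\rho(C)$, then the dual constraint of the large enclosing pair could be strictly increased, contradicting almost tightness combined with the structure of how $(\lambda,\alpha)$ was built in Lemma~\ref{lem:compute}. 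Concretely, I expect the argument to use that almost-tight balls in a component, together with the $\alpha_j$ supported on them, satisfy a "laminar-like" or "greedy saturation" property forcing the component to be coverable by a single ball of radius at most $3\rho(C)$; then the cost bound follows by the disjointness-of-$P(C)$ summation above, with the $\mu$-slack ($|\calb|\cdot\mu = O(r_{1/\epsilon}) = O(\epsilon\cdot\OPT)$) and the special component $C^*$ both folded into $O(\epsilon)\cdot\OPT$. The inequality $3\OPT'_{LP} \le 3\OPT'$ is immediate from $\OPT'_{LP}\le\OPT'$. The main obstacle, to be explicit, is this single-ball-covers-the-component claim with radius $3\rho(C)$: getting the constant down to exactly $3$ (rather than, say, $5$ from a naive two-hop argument) is what pins the final ratio at $3+\epsilon$, and it is the step where I expect to need the fine structure of the computed dual solution rather than just almost-tightness in isolation.
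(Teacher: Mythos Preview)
Your overall skeleton---cover each component of $\calb'$ by a single ball, charge the radii against the dual, use the $>k'$ components of $\calb'\setminus\{(i^*,r^*)\}$ to pay for the $-k'\lambda$ term, and absorb $r^*\le r_{1/\epsilon}\le\epsilon\cdot\OPT$ into the error---matches the paper. The gap is exactly where you flag your ``main obstacle,'' but your proposed resolution is wrong.

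The bound $r_C\le 3\rho(C)$ is simply false, and it does \emph{not} become true for almost-tight components: nothing in the construction of $(\lambda,\alpha)$ prevents a component from being a long chain of overlapping balls of comparable radii, with diameter of order $t\cdot\rho(C)$ for $t$ pairs. Your hoped-for ``laminar-like / bounded-diameter'' property does not hold. Consequently the charging scheme you write down, which bounds $\sum_C r_C$ by $3\sum_C\rho(C)$ and then charges each $\rho(C)$ to a single almost-tight constraint, cannot work.

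The correct replacement for $\rho(C)$ is not the \emph{maximum} radius in $C$ but the quantity $\sr(C_d):=\sum_{(i',r')\in C_d}r'$ for a well-chosen subset $C_d\subseteq C$ whose corresponding balls are \emph{pairwise disjoint}. The paper proves (its Lemma~\ref{lem:normal-component}) that the smallest ball enclosing $X'(C)$ has radius at most $3\cdot\sr(C_d)$; this is the genuine technical contribution, and it is established by a graph-theoretic argument on a bipartite ``component graph'' (pair-vertices and point-vertices, edge weights equal to radii), analyzing nice shortest paths from a graph center and doing a case split to extract the disjoint subfamily $C_d$. Once you have $r_C\le 3\cdot\sr(C_d)$, the dual accounting goes through cleanly: since the balls in $C_d$ are disjoint and almost tight, $\sr(C_d)\le\sum_{j\in X'(C)}\alpha_j-\lambda+|C_d|\mu$, and summing over the $>k'$ components of $\calb'\setminus\{(i^*,r^*)\}$ (whose $X'(C)$ sets are disjoint) gives the $-k'\lambda$ term you need. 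The special component containing $(i^*,r^*)$ is handled by observing $\sr(C^*_d)\le\sum_{C\in\mathcal C_2}\sr(C_d)+r^*$, again absorbing $r^*$ into $O(\epsilon)\cdot\OPT$.
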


The solution due to Lemma~\ref{lem:no-outliers-final-cost-bound} and the guessed pairs $(i_{1},r_{1}),...,(i_{1/\epsilon},r_{1/\epsilon})$ from $\OPT$ together yield a solution to the original instance $(X,k)$ of cost at most $(3+O(\epsilon))\OPT$.

\begin{thm}
    \label{thm:3+eps}
    For any $\epsilon>0$ there is a $(3+\epsilon)$-approximation algorithm for the \textsc{Minimum Sum of Radii}~problem.
\end{thm}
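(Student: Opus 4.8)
The plan is to assemble Theorem~\ref{thm:3+eps} directly from the machinery already set up in this section, treating the two previously-stated lemmas (Lemma~\ref{lem:compute} and Lemma~\ref{lem:no-outliers-final-cost-bound}) as black boxes. The overall structure of the argument is a case distinction on the size of $k$, followed by a guessing step that reduces the original instance $(X,k)$ to the residual instance $(X',k')$, and finally an application of the primal--dual rounding pipeline to $(X',k')$.

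First I would dispose of the easy regime: if $k \le 1/\epsilon$, then by the observation that w.l.o.g.\ every radius in $\OPT$ equals $d(i_\ell,j)$ for some $j \in X$, there are only polynomially many candidate pairs $(i,r)$, and an optimal solution uses at most $1/\epsilon$ of them; enumerating all size-$(\le 1/\epsilon)$ subsets of candidate pairs takes time $n^{O(1/\epsilon)}$ and finds an exact optimum. So assume $k > 1/\epsilon$. Next I would justify the guessing step. We guess the $1/\epsilon$ centers $i_1,\dots,i_{1/\epsilon}$ of $\OPT$ with largest radii together with their radii $r_1 \ge \dots \ge r_{1/\epsilon}$; since each radius is a distance $d(i_\ell,j)$, there are only $n^{O(1/\epsilon)}$ possibilities, so we may assume we guessed correctly. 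This defines $X' := X \setminus \bigcup_{\ell=1}^{1/\epsilon} B(i_\ell,r_\ell)$ and $k' := k - 1/\epsilon$, and by Lemma~\ref{lem:biggest-radius} the instance $(X',k')$ admits an optimal solution $\OPT'$ in which every radius is at most $r_{1/\epsilon} \le \epsilon \cdot \OPT$. This is precisely why it is legitimate to restrict the pair set to $\calb = \{(i,r) : i \in X,\ \exists j \in X' \text{ with } r = d(i,j) \le r_{1/\epsilon}\}$, so that $\OPT'_{LP} \le \OPT' \le \OPT$.

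Then I would invoke the two lemmas in sequence. By Lemma~\ref{lem:compute}, in polynomial time we obtain a dual solution $(\lambda,\alpha)$ to $(D)$ together with a set $\calb'$ of structured pairs for it. Feeding $(\lambda,\alpha)$ and $\calb'$ into Lemma~\ref{lem:no-outliers-final-cost-bound} yields, in polynomial time, a feasible solution for $(X',k')$ of cost at most $3 \cdot \OPT'_{LP} + O(\epsilon)\cdot \OPT \le 3 \cdot \OPT' + O(\epsilon)\cdot \OPT \le (3 + O(\epsilon))\OPT$. Adjoining the $1/\epsilon$ guessed centers $i_1,\dots,i_{1/\epsilon}$ with radii $r_1,\dots,r_{1/\epsilon}$ restores coverage of all of $X$ (every point removed from $X$ lies in some $B(i_\ell,r_\ell)$), uses at most $k' + 1/\epsilon = k$ centers, and adds at most $\sum_{\ell=1}^{1/\epsilon} r_\ell \le \OPT$ to the cost; hence the total cost is at most $(4 + O(\epsilon))\OPT$ — but note that the $\sum r_\ell$ term is already part of $\OPT$ and the rounding guarantee of Lemma~\ref{lem:no-outliers-final-cost-bound} is stated against $\OPT'_{LP}$, so a slightly more careful accounting (charging the guessed radii to the portion of $\OPT$ they contribute, and the residual solution to $3\,\OPT'$ where $\OPT' \le \OPT - \sum_\ell r_\ell$ would be ideal) gives the clean bound $(3 + O(\epsilon))\OPT$. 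Finally, rescaling $\epsilon$ by a constant factor at the outset absorbs the $O(\epsilon)$ into $\epsilon$, and the total running time is $n^{O(1/\epsilon)}$ times a polynomial, which is polynomial for fixed $\epsilon$; this proves the theorem.

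The main obstacle in writing this out cleanly is the cost bookkeeping in the last step: one must verify that adding back the guessed top-$1/\epsilon$ radii does not push the ratio from $3+\epsilon$ to $4+\epsilon$. The resolution is that Lemma~\ref{lem:biggest-radius} and the discussion preceding it are engineered so that $\OPT' = \OPT - \sum_{\ell=1}^{1/\epsilon} r_\ell$ (the residual solution $\calb'$ is exactly $\OPT$ minus its largest $1/\epsilon$ balls, and it is optimal for $(X',k')$), so the final solution has cost at most $3\,\OPT' + O(\epsilon)\OPT + \sum_\ell r_\ell \le 3(\OPT - \sum_\ell r_\ell) + \sum_\ell r_\ell + O(\epsilon)\OPT \le 3\,\OPT + O(\epsilon)\OPT$. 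Everything else is a routine assembly of the stated components, and the two genuinely substantive arguments — constructing the structured dual solution and rounding it — are deferred to Subsections~\ref{subsec:Computing-well-structured} and~\ref{subsec:Rounding-well-structured}.
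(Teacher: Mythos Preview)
Your proposal is correct and follows essentially the same assembly as the paper: handle $k \le 1/\epsilon$ by enumeration, guess the top $1/\epsilon$ balls, apply Lemma~\ref{lem:compute} and Lemma~\ref{lem:no-outliers-final-cost-bound} to the residual instance, and recombine. The one place where you add something the paper leaves implicit is the final accounting: you correctly observe (via the proof of Lemma~\ref{lem:biggest-radius}) that $\OPT' = \OPT - \sum_{\ell=1}^{1/\epsilon} r_\ell$, so that $3\,\OPT' + \sum_\ell r_\ell + O(\epsilon)\,\OPT \le 3\,\OPT + O(\epsilon)\,\OPT$, avoiding the spurious factor~$4$; the paper simply asserts the $(3+O(\epsilon))\OPT$ bound in one sentence without writing this out.
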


We show in Appendix~\ref{apx:tight_example} that this approximation factor is tight.

\subsection{\label{subsec:Computing-well-structured} Computing a solution and well-structured pairs.}

We describe an algorithm computing a solution to $(D)$ and a corresponding set of structured pairs, yielding the proof of Lemma~\ref{lem:compute}.

We start with a solution to $(D)$ in which all variables are zero, i.e., $\alpha_{j}=0$ for each point $j\in X'$ and also~$\lambda=0$.
Observe that for each point $j\in X'$ the pair $(j,0)$ is almost tight and that $j\in B(j,0)$.
If the almost tight pairs form at most $k'$ components, then we are essentially done. We initialize $\calb':= \{(j,0) : j\in X'\}$ (having thus strictly more than $k'$ components).
Then, we greedily add more almost tight pairs to~$\calb'$ until the number of connected components is at most $k'$.

Suppose now that the almost tight pairs form more than $k'$ connected components and that every point $j \in X'$ is covered by at least one almost tight pair.
Our algorithm runs in iterations.
In each iteration, we first select a maximal set $X''\subseteq X'$ such that each almost tight pair $(i,r)$ covers at most one point in $X''$.
We do this greedily: we initialize $X'':=\emptyset$ and consider the points in~$X'$ in an arbitrary order. We insert a point $j\in X'$ into $X''$ if all almost tight pairs $(i,r) \in \calb$ covering $j$ do not cover some other point that is already in $X''$.
Let $X''$ denote the resulting set. Then, we uniformly raise~$\lambda$ and $\alpha_{j}$ for each $j\in X''$ until
there is an almost tight pair $(i,r)$ that contains at least two points from~$X''$.
Formally, we define a value $\delta >0$ as the maximum amount by which we can raise the mentioned variables without violating a dual constraint, which is
\[
    \delta=\min_{(i,r) \in \calb} \left\{ \frac{r + \lambda - \sum_{j \in B(i,r)\cap X'} \alpha_j}{|B(i,r) \cap X''|-1} : |B(i,r) \cap X''|\geq 2 \wedge (i,r) \text{ is not almost tight} \right\}.
\]

We update $\lambda:=\lambda+\delta$ and for each $j\in X''$ we update $\alpha_{j}:=\alpha_{j}+\delta$.
For each $j\in X'\setminus X''$, we do not change the value of $\alpha_{j}$.
After each iteration, we consider the components of the almost tight pairs of the current solution.
We stop if they form at most $k'$ components.
Furthermore, by the choice of $X''$ every point~$j\in X'$ is covered by at least one almost tight pair at the end of the iteration since it is covered by at least one almost tight pair at the start of an iteration, thus the set of almost tight pairs always cover $X'$.
Note that each almost tight pair contains at most one point from $X''$; hence, for no such pair the corresponding dual constraint can be violated for any choice of $\delta>0$.
Furthermore, because of the slack in the constraints of the almost tight pairs, one can show that in each iteration the objective function increases by at least~$\frac{\mu}{|X|} = \frac{r_{1/\epsilon}}{|X|^3}$, and since ${\OPT'_{LP} \leq k' \cdot r_{1/\epsilon}}$, the number of iterations bounded by~$O(|X|^4)$.

It remains to describe how we define the set of structured pairs $\calb'$.
Let $\calb_{0}$ be the set of almost tight pairs after the penultimate iteration; thus, they form strictly more than $k'$ components.
Let $\calb_{1}$ be the set of almost tight pairs after the last iteration; they form at most $k'$ components.
We initialize $\calb':=\calb_{0}\cap\calb_{1}$.
We consider the almost tight pairs in $\calb_{1}\setminus\calb_{0}$ in an arbitrary order.
We add these pairs to $\calb'$ until the number of connected components is at most $k'$.
Let $(i^{*},r^{*})$ denote the last added pair.
Thus, $\calb'$ has at most $k'$ connected components, while $\calb'\setminus\{(i^{*},r^{*})\}$ has more than $k'$~components.

\begin{proof}[Proof of Lemma~\ref{lem:compute}]
    We argue that the algorithm runs in polynomial time.
    The rest of the statement follows from the fact that the algorithm terminates when it finds a set of almost tight pairs forming at most $k'$ solutions and that this can be used to construct structured pairs.

    In each iteration of our algorithm we can construct, in time $O(|X|^3)$, the set $X''$ of points whose $\alpha$-values we raise. Therefore, it remains to be shown that the number of iterations is bounded by $O(|X|^4)$.
    First, notice that $\delta \geq \frac{1}{|X|}\mu$ because in each iteration the increase in the variables by $\delta$ always leads at least one non-almost tight pair to become tight.
    Second, observe that $\OPT'_{LP}$ can be bounded from above by $k'\cdot r_{1/\epsilon}$ implying that also the optimal value of the dual is bounded by this value.
    Now, consider some iteration $\ell$ of the algorithm. Suppose that the current set of almost tight pairs forms more than $k'$ components. Then, we are guaranteed that $X''$ contains more than $k'$ points since we include at least one point per component. Therefore, after raising $\lambda$ and $\alpha_j$ for each $j \in X''$, the objective value of the dual incresed by at least $\delta \geq \frac{1}{|X|}\mu = \frac{1}{|X|^3} r_{1/\epsilon}$. By the upper bound on $\OPT'_{LP}$ we have that the number of iterations before finding a set of almost tight pairs forming at most $k'$ components is bounded by $O(|X|^4)$.
\end{proof}

\subsection{\label{subsec:Rounding-well-structured} Computing a feasible solution.}

Suppose we are given a solution $\left(\lambda,\alpha\right)$ to $(D)$ together with a set of structured pairs $\calb'\subseteq \calb$.
We describe an algorithm that computes a feasible solution whose cost is at most $3\cdot\OPT'+O(\epsilon)\cdot\OPT$.
This yields the proof of Lemma~\ref{lem:no-outliers-final-cost-bound}.

Recall that $\calb'$ has at most $k'$ components.
For each component $C=\left\{ (i_{1},r_{1}),...,(i_{\ell},r_{\ell})\right\} $ we want to do the following: We define one center $i\in X$ and assign it a radius $r$ such that $B(i,r)$ contains all points that are covered by the pairs in $C$, i.e., such that $X'(C):=\bigcup_{\ell'=1}^{\ell}B(i_{\ell'},r_{\ell'})\subseteq B(i,r)$.
If we assign a radius of $r$ to $i$, this yields a cost of $r$.
We want to ensure that $r\le3\cdot\sum_{j\in X'(C)}\alpha_{j}$, the intuition being that the~$\alpha_{j}$-values of the points in $X(C)$ pay for $r$.

However, if we do this for each component of $\calb'$, this	is not enough since this would yield only an upper bound of $3\cdot\sum_{j\in X'}\alpha_{j}$ of our total cost, but we need essentially an upper bound of $3\cdot\left(\sum_{j\in X'}\alpha_{j}-k'\cdot\lambda\right)$	to compare our cost to the value of the dual objective.
To this end, we identify a set $C_d \subseteq C$ for which we show an improved bound of $r \le 3 \cdot \sr(C_d) \le 3 \cdot \left(\sum_{j \in X(C)}\alpha_{j}-\lambda\right) + \frac{3\epsilon}{|X|}\OPT$ where  for any set $C' \subseteq \calb$ we define $\sr(C') := \sum_{(i',r') \in C'} r'$.
The balls corresponding to the pairs in $C_d$ are pairwise disjoint, i.e., $B(i,r) \cap B(i',r') = \emptyset$ for each $(i,r), (i',r') \in C_d$.
This yields the improvement over the bound of $3\cdot\sum_{j\in X'}\alpha_{j}$ that we need.

\begin{restatable}{lem}{lemnormalcomponent}
    \label{lem:normal-component}
    Let $C$ be a component of a set of tight pairs $\calb'$.
    Then there is a pair $(i,r)$ with $i\in X$ and $r \in \R$ and a set $C_d \subseteq C$ whose corresponding balls are pairwise disjoint such that
    \begin{enumerate}[label=$\arabic*.$,ref={\mbox{\rm $\arabic*$}}, noitemsep, leftmargin=1.5cm]
        \item $X(C)\subseteq B(i,r)$, and
        \item $r \le 3 \cdot \sr(C_d) \le 3\cdot\left(\sum_{j\in X'(C)}\alpha_{j}-\lambda\right) + \frac{3\epsilon}{|X|}\OPT$.
    \end{enumerate}
\end{restatable}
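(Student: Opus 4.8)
My plan is to work inside a single component $C = \{(i_1,r_1),\dots,(i_\ell,r_\ell)\}$ of the almost‑tight pairs and proceed in two movements: first build the pair $(i,r)$ that covers $X(C)$ with radius bounded by $3\sr(C_d)$ for a well‑chosen sub‑collection $C_d$ of pairwise disjoint balls, and then convert that bound into the dual expression. For the first movement I would order the pairs of $C$ by non‑increasing radius, $r_1 \ge r_2 \ge \dots \ge r_\ell$, and greedily extract a maximal sub‑collection $C_d$ of pairwise disjoint balls in the classical way: scan the pairs in this order, add $(i_t,r_t)$ to $C_d$ if its ball is disjoint from all balls already chosen. By maximality, every pair $(i_t,r_t)\in C$ has $B(i_t,r_t)$ intersecting some $B(i_s,r_s)\in C_d$ with $r_s \ge r_t$; a triangle‑inequality argument then shows $B(i_t,r_t)\subseteq B(i_s, 3r_s)$. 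Since $C$ is connected in the intersection graph, I would argue that the union of the enlarged balls $B(i_s,3r_s)$ over $(i_s,r_s)\in C_d$ is itself ``connected enough'': picking the pair $(i,r_{\max})$ with the largest radius $r_{\max} = r_1$ as a tentative center and walking along a spanning tree of $C$, one shows that every point of $X(C)$ lies within distance $3\sum_{(i_s,r_s)\in C_d} r_s$ of $i_1$. Hence we may take $i := i_1$ and $r := 3\,\sr(C_d)$, which immediately gives item~1 and the first inequality $r \le 3\cdot\sr(C_d)$ of item~2. (One should double‑check the constant: the standard bound for merging a connected cluster of balls via disjoint representatives is exactly a factor $3$ on the sum of the representative radii, which is why $3$ is the natural limit mentioned in the introduction.)

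For the second movement I must show $\sr(C_d) \le \sum_{j\in X'(C)}\alpha_j - \lambda + \frac{\epsilon}{|X|}\OPT$. Here I would use that each $(i_s,r_s)\in C_d$ is almost tight: $\sum_{j\in B(i_s,r_s)\cap X'}\alpha_j \ge r_s + \lambda - \mu$, i.e. $r_s \le \sum_{j\in B(i_s,r_s)\cap X'}\alpha_j - \lambda + \mu$. Summing over $C_d$ and using that the balls in $C_d$ are pairwise disjoint — so the sets $B(i_s,r_s)\cap X'$ are disjoint subsets of $X'(C)$ — the $\alpha$‑contributions add up to at most $\sum_{j\in X'(C)}\alpha_j$; the $-\lambda$ terms give $-|C_d|\cdot\lambda \le -\lambda$ (this uses $|C_d|\ge 1$, true since $C$ is non‑empty, and $\lambda\ge 0$); and the error terms give $|C_d|\cdot\mu \le |X|\cdot\mu = |X|\cdot \frac{r_{1/\epsilon}}{|X|^2} = \frac{r_{1/\epsilon}}{|X|} \le \frac{\epsilon\,\OPT}{|X|}$, where the last step is Lemma~\ref{lem:biggest-radius}. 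Combining, $\sr(C_d)\le \sum_{j\in X'(C)}\alpha_j - \lambda + \frac{\epsilon}{|X|}\OPT$, and multiplying by $3$ yields the second inequality of item~2. The fact that the $\alpha_j$ are nonnegative lets me harmlessly bound $\sum_{j\in X'(C)} B(i_s,r_s)$-contributions by $\sum_{j\in X'(C)}\alpha_j$ even if the union of the $C_d$‑balls does not exhaust $X'(C)$.

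The main obstacle I anticipate is the geometric argument that the enlarged balls $\{B(i_s,3r_s): (i_s,r_s)\in C_d\}$ together with the choice $r = 3\sr(C_d)$ around a single center $i_1$ actually cover all of $X(C)$ — a naive "$3r$ per ball" bound only covers each ball's local neighborhood, and one needs to exploit connectivity of $C$ to chain these together while charging the detour to $\sr(C_d)$ rather than to $\sr(C)$. The right way to handle this is probably to root a spanning tree of the intersection graph of $C_d$ (not of all of $C$): consecutive disjoint representative balls that are "linked" through intermediate pairs of $C$ are at bounded distance, and the sum of radii along any root‑to‑leaf path in that tree is at most $\sr(C_d)$, so every representative ball — and hence, via the $B(i_t,r_t)\subseteq B(i_s,3r_s)$ containments, every point of $X(C)$ — sits within $3\sr(C_d)$ of the root center. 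I would need to be careful that "linked through intermediate pairs" really does bound the center‑to‑center distance by a controlled multiple of the two representatives' radii plus the intermediate radii, and that these intermediate radii can be absorbed; tightening exactly which radii appear is where the constant $3$ is won or lost, so that is the step that deserves the most care.
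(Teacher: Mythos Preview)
Your second movement (the conversion $\sr(C_d)\le \sum_{j\in X'(C)}\alpha_j-\lambda+\frac{\epsilon}{|X|}\OPT$ via almost-tightness, disjointness of the $C_d$-balls, $|C_d|\ge 1$, and $|C_d|\cdot\mu\le\frac{r_{1/\epsilon}}{|X|}\le\frac{\epsilon}{|X|}\OPT$) is correct and is exactly what the paper does.

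The first movement, however, has a genuine gap, and it is precisely the one you flagged as ``the main obstacle''. Your explicit choice of center $i:=i_1$ (the center of the largest-radius ball) does not work. Take a chain $B_1,\dots,B_n$ on the real line, all with radius $1$, consecutive centers at distance $2-\delta$ (so consecutive balls overlap, non-consecutive balls do not). The greedy $C_d$ picks every other ball, giving $\sr(C_d)\approx n/2$, while the farthest point of $X(C)$ from $i_1$ sits at distance about $(n-1)(2-\delta)+1\approx 2n$. For small $\delta$ this exceeds $3\sr(C_d)\approx \tfrac{3n}{2}$, so $B(i_1,3\sr(C_d))$ does \emph{not} cover $X(C)$. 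Your fallback (``root a spanning tree of the intersection graph of $C_d$'') does not repair this: you never say where to root, and the accounting ``sum of radii along any root-to-leaf path $\le \sr(C_d)$'' does not control the metric distance from the root to the far end, because each hop between consecutive $C_d$-balls costs the radii of the two representatives \emph{plus} the intermediate skipped balls, and those intermediates are not in $C_d$. More basically, you are conflating the standard fact ``$\bigcup_s B(i_s,3r_s)$ covers $X(C)$'' (true, and what the classical greedy gives with \emph{many} centers) with the much stronger claim ``a \emph{single} ball of radius $3\sr(C_d)$ around $i_1$ covers $X(C)$'' (false for end-rooted chains).

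The paper's route is genuinely different from yours. It builds an auxiliary weighted bipartite graph $G$ on pair-vertices, point-vertices, and ``frontier'' vertices (edge weight $=$ the relevant radius), takes $i$ to be the point corresponding to a \emph{minimum-eccentricity} vertex of $G$, and sets $r:=\rad(G)$; this automatically gives $X(C)\subseteq B(i,r)$. The crux is then the purely combinatorial inequality $\rad(G)\le 3\sr(C_d)$, where $C_d$ is a max-weight pairwise-disjoint subfamily. This is proved not by a spanning-tree charge but by an analysis of ``nice'' shortest paths in $G$ (paths whose pair-vertices at distance $>2$ have disjoint balls): for any nice path $P$ one has $w(P)\le 4\sr(C_d)$, and a separate lemma manufactures, from a long path through the largest-radius pair $(\bar i,\bar r)$, a nice path (or a pair of nice paths) of weight at least $\rad(G)+\min\{\tfrac12 w(P)-\tfrac12\bar r,\ w(P)-3\bar r\}$. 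A short case split on whether some radius is $\ge\tfrac13\rad(G)$ then yields $\rad(G)\le 3\sr(C_d)$. In the chain example above this picks the center near the middle, where the needed radius is $\approx n\le \tfrac{3n}{2}=3\sr(C_d)$, which is exactly why the constant $3$ survives. So the missing idea in your plan is the choice of center via the graph radius together with the nice-path machinery; the greedy $C_d$ with an end-located center cannot deliver the factor $3$.
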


The pair $(i,r)$ can be found in polynomial time by trying all pairs $(i',r')$ with $i' \in X$ and $r' = d(i,j)$ for some $j \in X'$ such that $X(C)\subseteq B(i',r')$; we select such a pair $(i',r')$ that minimizes~$r'$.

\begin{proof}[Proof sketch of Lemma~\ref{lem:normal-component}]
    Given a component $C$, we construct a bipartite graph $G$ with a \emph{pair vertex} for each pair $(i',r')\in C$ and a \emph{point vertex}for each point $p\in X'$ covered by $C$.
    In $G$, a pair vertex for a pair $(i',r')\in C$ is connected by an edge to the point vertex for a point $p$ if and only if $p\in B(i',r')$, and the length of this edge is $r'$.
    Let $v$ be a vertex of $G$ that minimizes the length of the longest shortest path to the other vertices in $G$.
    If $v$ is a point vertex for a point $p\in X'$, we define $i:=p$; if $v$ is a pair vertex for a pair $(i',r')\in C$, we define $i:=i'$.
    We define $r$ to be the smallest radius such that $X(C)\subseteq B(i,r)$.

    We need to prove that $r\le3\cdot\sr(C_{d})\le3\cdot\left(\sum_{j\in X'(C)}\alpha_{j}-\lambda\right) + \frac{3\epsilon}{|X|} \OPT$ for a suitable set $C_{d}\subseteq C$.
    Recall that all pairs in $C$ are almost tight; hence, for an almost tight pair $(i',r')\in C$ the points in $B(i',r')$ satisfy that $\sum_{j\in B(i',r')\cap X'}\alpha_{j}\ge r+\lambda-\mu$.
    The set $C_{d}$ is constructed via long paths in $G$.
    The lengths of the edges of $G$ correspond to radii of pairs $(i',r')\in C$.
    Thus, if we have a very long path, then it contains pair vertices for which the sum of their radii is also large.
    We select such pair vertices for which the corresponding balls are pairwise disjoint since we want to argue that
    \begin{align*}
        \sr(C_{d})
         & = \sum_{(i',r')\in C_{d}} r'
        \leq \sum_{(i',r')\in C_{d}} \left( \sum_{j\in B(i',r')\cap X'} \alpha_{j} - \lambda + \mu \right) \\
         & \leq \sum_{j\in X'(C)} \alpha_{j} - \lambda + |C_{d}| \mu
        \leq \sum_{j\in X'(C)} \alpha_{j} - \lambda + \frac{\epsilon}{|X|}\OPT.
    \end{align*}
    Requiring these balls to be pairwise disjoint makes the construction of $C_d$ challenging.
    We do this via a careful case distinction in which we ensure that $r'\le3\cdot\sr(C_{d})$ (see Appendix~\ref{apx:lemma7}).
\end{proof}

Intuitively, the proof of Lemma~\ref{lem:no-outliers-final-cost-bound} now follows from applying Lemma~\ref{lem:normal-component} to each connected component and using that~${r^* \le r_{1/\epsilon} \le \epsilon\cdot \OPT}$.

\begin{proof}[Proof of Lemma~\ref{lem:no-outliers-final-cost-bound}]
    Let $\mathcal{C}_1$ be the set of components of the computed set of almost tight pairs $\calb'$ such that the corresponding balls of pairs in $\mathcal{C}_1$ are disjoint from the special almost tight pair $(i^*,r^*)$ and $\mathcal{C}_2$ be the set of components of $\calb' \setminus (i^*,r^*)$ that contain points covered by $(i^*,r^*)$. Let $r_C$ be the radius we use to cover component $C$ based on Lemma~\ref{lem:normal-component}. Then, for $\mathcal{C}_1$, it holds that
    \[
        \sum_{C \in \mathcal{C}_1}r_C \stackrel{{\scriptsize \mathrm{(Lem.\ref{lem:normal-component})}}}{\leq} 3 \left( \sum_{C \in \mathcal{C}_1}\sum_{j \in X'(C)} \alpha_j - \lambda \right) + |\mathcal{C}_1| \frac{3 \epsilon}{|X|}\OPT \leq 3\left(\sum_{j \in X'(\mathcal{C}_1)} \alpha_j - |\mathcal{C}_1|\lambda \right) + 3 \epsilon \cdot \OPT.
    \]
    Now, consider $\mathcal{C}_2$ as well as the component formed by $\mathcal{C}_2$ and $(i^*,r^*)$.
    We denote this component by $C^*$.
    Let $C^*_d$ denote a subset of $C^*$ whose corresponding balls are pairwise disjoint and maximizes~$\sr(C_d)$, and, for each $C \in \mathcal{C}_2$ let $C_d$ denote a subset of $C$ whose corresponding balls are pairwise disjoint and maximizes~$\sr(C_d)$.
    By Lemma~\ref{lem:normal-component}, we know that
    \[
        r_{C^*} \stackrel{{\scriptsize \mathrm{(Lem.\ref{lem:normal-component})}}}{\leq} 3 \cdot \sr(C^*_d).
    \]
    We will now show how to bound $\sr(C^*_d)$ from above using $\mathcal{C}_2$ and the almost tight pair $(i^*,r^*)$. To do this consider the following case distinction. First, assume that for each $C \in \mathcal{C}_2$ the corresponding balls of pairs in $C_d$ are disjoint from $B(i^*,r^*)$ as well.
    Then, since $\bigcup_{C \in \mathcal{C}_2}C_d \cup (i^*,r^*)$ is a set of pairwise disjoint almost tight pairs, we know that
    \[
        \sr(C^*_d) \leq \sum_{C \in \mathcal{C}_2}\sr(C) + \epsilon \cdot \OPT.
    \]
    Next, assume that $(i^*,r^*)$ is not disjoint from $C_d$ for some $C \in \mathcal{C}_2$. Then, either $\bigcup_{C \in \mathcal{C}_2}C_d$ is a set of disjoint almost tight pairs in $C^*$ maximizing the sum of radii or $(i^*,r*)$ is included in the set of disjoint almost tight pairs maximizing the sum of radii in $C^*$. In either case, we again have
    \[
        \sr(C^*_d) \leq \sum_{C \in \mathcal{C}_2}\sr(C_d) + \epsilon\cdot\OPT.
    \]
    By Lemma~\ref{lem:normal-component}, we have the following bound
    \[
        \sum_{C\in \mathcal{C}_2}\sr(C_d) \stackrel{{\scriptsize \mathrm{(Lem.\ref{lem:normal-component})}}}{\leq} \left( \sum_{C \in \mathcal{C}_2}\sum_{j \in X'(C)}  \alpha_j - \lambda \right) + |\mathcal{C}_2| \frac{\epsilon}{|X|}\OPT \leq \left(\sum_{j \in X'(\mathcal{C}_2)} \alpha_j - |\mathcal{C}_2|\lambda \right) + \epsilon \cdot \OPT.
    \]
    Therefore,
    \[
        r_{C^*} \leq 3 \cdot \sr(C^*_d) \leq 3 \left(\sum_{j \in X'(\mathcal{C}_2)} \alpha_j - |\mathcal{C}_2|\lambda \right) + 6 \epsilon \cdot \OPT.
    \]
    Finally, since each point $j \in X'$ is only covered by one of the components of $\mathcal{C}_1 \cup \mathcal{C}_2$ and $|\mathcal{C}_1|+|\mathcal{C}_2| > k$, we know that the final solution obtained by choosing a single ball to cover each component in $\mathcal{C}_1$ and a single ball to cover $C^*$ has cost
    \[
        \sum_{C \in \mathcal{C}_1}r_C + r_{C^*} \leq 3\left(\sum_{j \in X'} \alpha_j - k\lambda \right) + 9 \epsilon \cdot \OPT \leq 3 \cdot \OPT'_{LP} + O(\epsilon) \cdot \OPT.
    \]
    This concludes the proof.
\end{proof}

\section{$(3+\epsilon)$-approximation algorithm with outliers.}

In this section, we study the \textsc{Minimum Sum of Radii with Outliers} problem.
Like before, let $\epsilon>0$, assume w.l.o.g.~that $1/\epsilon$ is an integer, and we solve the problem exactly by enumeration in
time $n^{O(1/\epsilon)}$ if~${k\le1/\epsilon}$. Assuming that $k>1/\epsilon$, we guess the $1/\epsilon$ centers $i_{1},...,i_{1/\epsilon}$ from
$\OPT$ with largest radii and their corresponding radii $r_{1},...,r_{1/\epsilon}$, and solve the remaining problem with input points $X':=X\setminus\bigcup_{\ell=1}^{1/\epsilon}B(i_{\ell},r_{\ell})$, $k':=k-1/\epsilon$ centers, and up to $m$ outliers.
W.l.o.g., we assume that $|X'| > k' + m$.
Although a solution for the instance $(X',k',m)$ needs to cover only points in $X'$, we also allow solutions using centers on points in $X$.
Lemma~\ref{lem:outliers-biggest-radius} is analogous to the one for~Lemma~\ref{lem:biggest-radius}.

\begin{lem}
    \label{lem:outliers-biggest-radius}
    For the instance $(X',k',m)$ there exists an optimal solution with some centers $i_{1},...,i_{\ell}\in X$ and radii $r_{1},...,r_{\ell}$ such that $r_{\ell'}\le r_{1/\epsilon} \le\epsilon\cdot\OPT$ for each $\ell'\in[\ell]$.
\end{lem}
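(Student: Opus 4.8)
The plan is to mirror the proof of Lemma~\ref{lem:biggest-radius} almost verbatim, with the only change being that every feasible solution now also respects the outlier budget $m$, so I must check that this extra constraint is preserved under the two operations used: (a) deleting the $1/\epsilon$ largest-radius centers of $\OPT$ to get a solution for $(X',k',m)$, and (b) re-attaching those $1/\epsilon$ centers to an optimal solution $\OPT'$ of $(X',k',m)$ to get a solution for $(X,k,m)$. First I would let $\OPT$ be an optimal solution for $(X,k,m)$, and guess (as already done in the preamble of this section) the $1/\epsilon$ centers $i_1,\dots,i_{1/\epsilon}$ of $\OPT$ with the largest radii $r_1\ge\dots\ge r_{1/\epsilon}$. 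Removing these centers and their balls from $\OPT$ leaves a solution that covers all points of $X'$ that were covered by $\OPT$; since $\OPT$ left at most $m$ points of $X$ uncovered and $X'\subseteq X$, at most $m$ points of $X'$ are uncovered, so this restricted solution $\calb'$ is feasible for $(X',k',m)$ and uses at most $k'=k-1/\epsilon$ centers.

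Next I would run the optimality-exchange argument. Let $\OPT'$ be an optimal solution for $(X',k',m)$. Adding back the $1/\epsilon$ guessed centers with their radii $r_1,\dots,r_{1/\epsilon}$ to $\OPT'$ yields a solution for $(X,k,m)$: it has at most $k'+1/\epsilon=k$ centers, and it covers every point of $X$ except possibly the at most $m$ points of $X'$ left uncovered by $\OPT'$ — every point of $X\setminus X'$ lies in one of the $1/\epsilon$ re-added balls by definition of $X'$. If the cost of $\calb'$ were strictly larger than the cost of $\OPT'$, then this re-assembled solution would have cost strictly less than that of $\OPT$, contradicting optimality of $\OPT$. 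Hence $\calb'$ is also an optimal solution for $(X',k',m)$, and so every optimal solution for $(X',k',m)$ has the same cost as $\calb'$.

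Finally I would bound the radii. By construction $\calb'$ is an optimal solution for $(X',k',m)$ whose largest radius is at most $r_{1/\epsilon}$ (it was obtained by discarding the $1/\epsilon$ \emph{largest} radii of $\OPT$). It remains to bound $r_{1/\epsilon}$ itself. Since the $1/\epsilon$ guessed radii are all at least $r_{1/\epsilon}$ and are part of $\OPT$,
\[
    \OPT \ge \sum_{i=1}^{1/\epsilon} r_i \ge \frac{1}{\epsilon}\, r_{1/\epsilon},
\]
so $r_{1/\epsilon}\le \epsilon\cdot\OPT$. Thus $\calb'$ witnesses the claimed optimal solution with all radii bounded by $r_{1/\epsilon}\le\epsilon\cdot\OPT$. (If some optimal solution of $(X',k',m)$ uses radii larger than $r_{1/\epsilon}$, one may simply replace $\calb'$ by $\calb'$ itself, which does not; the statement only asserts existence of one such optimal solution.) I do not anticipate a genuine obstacle here: the only subtlety is the bookkeeping that the outlier budget transfers correctly in both directions of the exchange argument, which works because $X'\subseteq X$ and because the discarded points of $X\setminus X'$ are all covered by the guessed balls, so no new outliers are introduced when passing between the two instances.
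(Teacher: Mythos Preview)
Your proposal is correct and follows essentially the same exchange argument as the paper's proof; in fact, you are more explicit than the paper in verifying that the outlier budget $m$ is preserved when removing and re-adding the $1/\epsilon$ guessed balls. No changes are needed.
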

\begin{proof}
    For the instance $(X', k', m)$, let $\OPT'$ be an optimal solution.
    Let $\calb'$ denote the solution for $(X',k', m)$ that is obtained from $\OPT$ by disregarding the $1/\epsilon$ centers with largest radii and their corresponding radii.
    Assume for contradiction that $\calb'$ is not an optimal solution for $(X',k',m)$; then, $\OPT'$ together with these $1/\epsilon$ centers and corresponding radii form a feasible solution to $(X,k,m)$ of strictly smaller cost than $\OPT$, giving a contradiction.
    Thus, $\calb'$ has to be an optimal solution to $(X',k',m)$.
    Now, because the largest radius $r'$ of a center in $\calb'$ is smaller than the largest $r_1,\dots,r_{1/\epsilon}$ largest radii of centers in $\OPT$, it follows that
    \begin{align*}
        \OPT
         & \geq \sum_{i=1}^{1/\epsilon} r_i
        \geq \frac{1}{\epsilon} r_{1/\epsilon}
        \geq \frac{1}{\epsilon} r',
    \end{align*}
    completing the proof.
\end{proof}

Let $\OPT'$ denote the optimal solution to $(X',k',m)$.
As before, we can disregard radii larger than $r_{1/\epsilon}$, and thus we define $\calb:=\{(i,r):i\in X \wedge \exists j\in X' \text{ s.t. } r=d(i,j) \leq r_{1/\epsilon} \}$ which contains all pairs $(i,r)$ in $\OPT'$.
Observe that $|\calb| \in O(|X|^2)$.
We use the LP-relaxation $(P')$ stated below.
For each point $j\in X'$ the variable $y_{j}$ models whether $j$ is an outlier (and hence does not need to be covered by the selected pairs).
We define $(D')$ to be its dual, and we define $\OPT'_{LP}$ to be their optimal solution values.
Observe that $\OPT'_{LP} \leq \OPT' \leq \OPT$.

\vspace*{-14pt}
\begin{minipage}[t]{1\textwidth}
    \begin{minipage}[t]{0.45\textwidth}
        \begin{alignat*}{3}
            (P')\quad & \text{min}  & \displaystyle \sum_{(i,r)\in\calb} r \cdot x_{i,r} \hspace*{-13pt}               &         & \quad &                         \\
                      & \text{s.t.} & \displaystyle \sum_{(i,r) \in \calb:j\in B(i,r)} \hspace*{-17pt} x_{i,r} + y_{j} & \geq 1  &       & \forall j \in X'        \\
                      &             & \displaystyle \sum_{(i,r)\in\calb} x_{i,r}                                       & \leq k' &       &                         \\
                      &             & \displaystyle \sum_{j'\in X'} y_j                                                & \leq m  &       &                         \\
                      &             & x_{i,r}                                                                          & \geq 0  &       & \forall (i,r) \in \calb \\
                      &             & y_{j}                                                                            & \geq 0  &       & \forall j \in X'
        \end{alignat*}
    \end{minipage}
    \quad
    \begin{minipage}[t]{0.45\textwidth}
        \begin{alignat*}{3}
            (D')\quad & \text{max}  & \displaystyle \sum_{j\in X'}\alpha_{j}-k'\cdot\lambda -m\cdot\gamma \hspace*{-70pt} &                  & \quad &                        \\
                      & \text{s.t.} & \displaystyle \sum_{j\in B(i,r) \cap X'} \hspace*{-10pt} \alpha_{j}                 & \leq r + \lambda &       & \forall(i,r) \in \calb \\
                      &             & \alpha_{j}                                                                          & \leq \gamma      &       & \forall j\in X'        \\
                      &             & \alpha_{j}                                                                          & \geq 0           &       & \forall j\in X'        \\
                      &             & \lambda                                                                             & \geq 0           &       &                        \\
                      &             & \gamma                                                                              & \geq 0           &       &
        \end{alignat*}
    \end{minipage}
\end{minipage}
\medskip

Again, we are interested in a solution to the dual LP $(D')$ together with a set of tight pairs that is structured in a certain way.
However, since the setting with outliers is more complicated, we require more properties of such solutions.
Given a solution $\left(\lambda,\gamma,\alpha\right)$ to $(D')$, we say that a pair $(i,r) \in \calb$ is \emph{tight} if $\sum_{j\in B(i,r) \cap X'}  \alpha_{j} = r + \lambda$ and that a point $j \in X'$ is \emph{tight} if $\alpha_j = \gamma$.
For a set of pairs $\calb'$, let $\out(\calb') = X' \setminus X'(\calb')$ denote the set of point in $X'$ not covered by $\calb'$.
For an ordered set of pairs $\calb' = \{(i_1,r_1),(i_2,r_2),\dots\}$ we define $\calb'_q := \{(i_{q'}, r_{q'}) \in \calb' : q' \leq q \}$ for each $q \in \N$.

\begin{restatable}{defn}{defnoutliersorderlystructured}
    \label{defn:outliers-orderly-structured}
    Let $\calb' \subseteq \calb$ be an ordered set of pairs.
    We say that $\calb'$ is \emph{orderly structured} for a solution $(\lambda,\gamma,\alpha)$ to $(D')$ if there is a pair of indexes $(\ell,\ell')$ with $1 \leq \ell$ and $0 \leq \ell' \leq \ell$ and a pair $(i^*,r^*) \in \calb' \setminus \calb'_\ell$~and
    \begin{enumerate}[label=\normalfont OS$\arabic*.$,ref={\mbox{\rm OS$\arabic*$}}, noitemsep, leftmargin=1.5cm]
        \item \label{defn:os2} every point $j \in \out(\calb'_{\ell'})$ is tight and every pair $(i,r) \in \calb'$ is tight,
        \item \label{defn:os3} $|\out(\calb'_h)| > m \geq |\out(\calb'_{h'} \cup \{(i^*,r^*)\})|$ for every $h < \ell$ and $\ell' \leq h' \leq \ell$, and
        \item \label{defn:os4} $|\comp(\calb'_{\ell})| > k \geq |\comp(\calb'_{\ell'} \cup \{(i^*,r^*)\})|$.
    \end{enumerate}
\end{restatable}

We will show in Section~\ref{subsec:Computing-well-structured-outlier} that we can compute a solution to $(D')$ and a corresponding orderly structured set of pairs $\calb'$.
Our algorithm is completely different from the algorithm in Section~\ref{subsec:Computing-well-structured}, since the case with outliers is more complicated.

\begin{restatable}{lem}{lemoutlierscomputeso}
    \label{lem:outliers-compute-so}
    In polynomial time, we can compute a solution $\left(\lambda,\gamma,\alpha\right)$ to $(D')$ together with an orderly structured set of pairs $\calb'$ for $\left(\lambda,\gamma,\alpha\right)$.
\end{restatable}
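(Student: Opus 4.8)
The plan is to run, for each candidate value of the Lagrangian multiplier $\lambda$, the classical simultaneous dual-raising process of Ahmadian--Swamy / Charikar--Panigrahy on $(D')$: start from $\alpha\equiv 0$, $\gamma=0$, raise all $\alpha_j$ uniformly together with $\gamma$, freezing $\alpha_j$ as soon as $j$ lies in a tight pair and stopping the raise of $\gamma$ only implicitly through the constraint $\alpha_j\le\gamma$. For a fixed $\lambda$ this produces a canonical dual solution and a canonical set of tight pairs; ordering the tight pairs by the time at which they became tight (breaking ties consistently) gives an ordered set. As $\lambda$ increases continuously, the combinatorial ``shape'' of this process — which pairs become tight, in which order, how the components and the uncovered-point counts $|\out(\calb'_q)|$ evolve — changes only at finitely many break-points; the first step is to define these break-points precisely (values of $\lambda$ where two pairs become tight simultaneously, or where $|\out(\calb'_q)|$ or $|\comp(\calb'_q)|$ crosses $m$ or $k$ for some prefix $q$) and to argue there are only polynomially many of them, each computable in polynomial time by comparing the tightness times, which are piecewise-linear (in fact affine) functions of $\lambda$ on each interval.

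Next I would perform a binary search over these break-points for the ``right'' value $\lambda$. The intended invariant is a complementary pair of behaviours, exactly as in the no-outliers case: for $\lambda$ slightly too small, even after using all $m$ outliers the tight pairs still form more than $k$ components (the solution is ``too expensive'' / infeasible), whereas for $\lambda$ slightly too large the tight pairs collapse into at most $k$ components but then fail to cover $|X'|-m$ points. At the critical break-point these two regimes meet, and by taking the ordered prefix of tight pairs one can read off indices $\ell'\le\ell$: $\calb'_\ell$ is the shortest prefix whose components already number more than $k$ (from the ``small-$\lambda$ side''), while $\calb'_{\ell'}$ is the longest prefix that still leaves more than $m$ points uncovered. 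The special pair $(i^*,r^*)$ is then the next tight pair after position $\ell$ in the order whose addition simultaneously brings the component count down to at most $k$ and, together with $\calb'_{\ell'}$, the uncovered count down to at most $m$ — its existence is exactly what the critical-break-point property of $\lambda$ guarantees. Property~\ref{defn:os2} (all pairs in $\calb'$ tight; all points outside $\calb'_{\ell'}$ tight, i.e. $\alpha_j=\gamma$) is then built in: pairs in $\calb'$ are tight by construction, and a point outside any tight pair never gets its $\alpha_j$ frozen, so it equals $\gamma$ at the end.

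The main obstacle, and the part I expect to require real work, is the third step: at the critical $\lambda$ the raw output of the dual-raising process need not satisfy all three structural conditions of Definition~\ref{defn:outliers-orderly-structured} simultaneously — in particular the monotone ``staircase'' in \ref{defn:os3}, $|\out(\calb'_h)|>m\ge|\out(\calb'_{h'}\cup\{(i^*,r^*)\})|$ for all $h<\ell$ and $\ell'\le h'\le\ell$, couples the outlier count and the component count along the \emph{same} ordering of pairs, and the order in which pairs naturally become tight may not be the order in which we want to add them. I would fix this by a post-processing phase analogous to the ``we need additional steps'' remark in the text: after the binary search freezes $\lambda$, re-sort the tight pairs and, if necessary, raise a few more dual variables (points that are still below $\gamma$) by a negligible amount $O(\mu)$ to make the relevant extra pairs tight and to make the prefix counts monotone, while arguing this changes the dual objective by at most $O(\epsilon)\cdot\OPT$ and does not destroy feasibility of $(\lambda,\gamma,\alpha)$. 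Establishing that such a reordering/padding always exists — essentially an exchange argument showing the two prefix-monotonicity requirements are compatible at the critical break-point — is the crux; the running-time bound is then routine, since there are polynomially many break-points, the binary search has logarithmically many steps, and each step runs one polynomial-time dual-raising process (bounded as in the proof of Lemma~\ref{lem:compute} via the slack argument giving $\delta=\Omega(\mu/|X|)$ and $\OPT'_{LP}\le k'\cdot r_{1/\epsilon}$).
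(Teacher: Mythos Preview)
Your overall architecture is right---run the simultaneous dual-raising subroutine for fixed $\lambda$, search over $\lambda$ for a critical value where the component count crosses $k'$, then post-process to extract the orderly structured set---and you correctly identify the post-processing as the crux. But two technical pieces differ from the paper in ways that matter.

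First, your global break-point enumeration is not obviously polynomial. You say the tightness times are ``piecewise-linear \dots\ on each interval,'' but the intervals themselves are what you are trying to define: the time at which a pair becomes tight in phase $s$ depends on which pairs became tight in phases $1,\dots,s-1$, and that branching could in principle compound. The paper avoids this by an \emph{iterative} refinement: compute the $O(|\calb|)$ break-points of $g_1$ (end of phase 1), binary-search to an interval $(b_L^{(2)},b_R^{(2)})$ on which phase~1 is combinatorially fixed; then on that interval compute the break-points of $g_2$; and so on. Each step contributes only $O(|\calb|)$ break-points and there are at most $|\calb|$ phases, so the total work is polynomial without ever claiming a global polynomial bound on the number of distinct behaviors.

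Second, and more substantively, your post-processing idea---re-sort the tight pairs and raise a few $\alpha_j$ by $O(\mu)$ to manufacture the needed tightness---does not match what the paper does and is risky: Definition~\ref{defn:outliers-orderly-structured} requires \emph{exact} tightness of all pairs in $\calb'$, and raising some $\alpha_j$ to tighten one constraint can violate another. The paper instead exploits a continuity property of the subroutine (if $|\lambda-\lambda'|\le\delta$ then all $\alpha_j$ differ by at most $\delta\cdot 2^{|X'|}$): since the interior $\lambda\in(b_L^{(s^*)},b_R^{(s^*)})$ can be taken arbitrarily close to the endpoint, every pair tight for the interior solution $\calp_M$ is also \emph{exactly} tight for the endpoint solution $(\lambda,\gamma,\alpha)$ with $\lambda=b_L^{(s^*)}$ (or $b_R^{(s^*)}$). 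Thus both $\calp_L$ and $\calp_M$ consist of pairs tight for the \emph{same} dual solution, and the paper then runs a ``covering procedure'' that interpolates between orderings of $\calp_L\cup\calp_M$: it builds a one-parameter family of sequences $S^{(1)},\dots,S^{(s'+1)}$ sliding from the $\calp_M$-order to the $\calp_L$-order, observes that the resulting component count goes from $\le k$ to $>k$ along this family, and picks the transition index $i^*$. The special pair $(i^*,r^*)$ is then the pair $p'_{i^*}$ that was dropped at that transition---it comes from the \emph{other} solution, not from ``the next tight pair after position $\ell$'' in a single ordering as you propose. This two-solution mixing is what simultaneously delivers \ref{defn:os3} and \ref{defn:os4} without any perturbation of the dual.
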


Given a solution $(\lambda,\gamma,\alpha)$ to $(D')$ and a corresponding orderly structured set $\calb'$ we can compute a feasible solution to $(X',k',m)$ in polynomial time whose cost is essentially by at most a factor $3+\epsilon$ larger than the objective function value of $(\lambda,\gamma,\alpha)$ in $(D')$.
The core idea is, similarly as before, to compute a single pair $(i,r)$ for each connected component of tight balls, such that this pair covers all points covered by the component.
We~show that with an orderly structured set $\calb'$, we can compute a feasible solution from an unfeasible one whose cost can be bounded by the inclusion (and/or replacement) of a constant number of tight balls, increasing the cost by only a factor of $1+O(\epsilon)$.
Due to the outliers, this routine is more involved than the corresponding algorithm from~Section~\ref{subsec:Rounding-well-structured}.
\begin{restatable}{lem}{lemoutliersfinalcostbound}
    \label{lem:outliers-final-cost-bound}
    Let $\calb'$ be an orderly structured set for a solution $\left(\lambda,\gamma,\alpha\right)$ to $(D')$.
    In polynomial time we can compute a feasible solution whose cost is at most $3 \cdot \OPT'_{LP} + O(\epsilon) \cdot \OPT \leq 3 \cdot \OPT' + O(\epsilon) \cdot \OPT$.
\end{restatable}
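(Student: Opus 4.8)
The plan is to mimic the structure of the proof of Lemma~\ref{lem:no-outliers-final-cost-bound}, but now taking the outlier set and the second special index $\ell'$ into account. Given the orderly structured set $\calb'$ with indices $(\ell,\ell')$ and the special pair $(i^*,r^*)$, I would first fix the final set of pairs that we actually turn into clusters: namely $\calb'_{\ell'} \cup \{(i^*,r^*)\}$. By property~\ref{defn:os3} this set leaves at most $m$ points uncovered, so declaring $\out(\calb'_{\ell'}\cup\{(i^*,r^*)\})$ to be the outliers is feasible with respect to the outlier budget; by property~\ref{defn:os4} it has at most $k'$ connected components, so forming one cluster per component (via Lemma~\ref{lem:normal-component}) respects the center budget. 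The remaining task is the cost bound. For the components of $\calb'_{\ell'}\cup\{(i^*,r^*)\}$ that are disjoint from $B(i^*,r^*)$, Lemma~\ref{lem:normal-component} directly gives, summing over them, a bound of $3\big(\sum_{j} \alpha_j - |\mathcal{C}_1|\lambda\big) + O(\epsilon)\OPT$ where the sum ranges over points covered by these components and $\mathcal{C}_1$ is their collection. For the single component $C^*$ that contains $(i^*,r^*)$, I would argue exactly as in the no-outlier case: using a maximum-weight pairwise-disjoint subfamily $C^*_d$, the bound $r_{C^*} \le 3\,\sr(C^*_d)$ from Lemma~\ref{lem:normal-component}, and a case distinction on whether $(i^*,r^*)$ is disjoint from the disjoint subfamilies of the constituent components, to get $r_{C^*} \le 3\big(\sum_{j\in X'(\mathcal{C}_2)}\alpha_j - |\mathcal{C}_2|\lambda\big) + O(\epsilon)\OPT$, using $r^* \le r_{1/\epsilon} \le \epsilon\OPT$ to absorb the extra pair.

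The new ingredient compared to Lemma~\ref{lem:no-outliers-final-cost-bound} is that we must also ``pay'' for the $-m\gamma$ term in the dual objective, not just the $-k'\lambda$ term. Here is where property~\ref{defn:os2} enters: every point $j \in \out(\calb'_{\ell'})$ is tight, meaning $\alpha_j = \gamma$. Since we are declaring (essentially) these points as outliers, we have $\sum_{j \in X'} \alpha_j = \sum_{j \text{ covered}} \alpha_j + \sum_{j \in \out(\calb'_{\ell'}\cup\{(i^*,r^*)\})} \alpha_j$, and the second sum is at most $m\gamma$ because each such point is tight (those in $\out(\calb'_{\ell'})$ directly; the at most $|B(i^*,r^*)\cap X'|$ extra points newly covered by $(i^*,r^*)$ only shrink the outlier set), and there are at most $m$ of them. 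Hence $\sum_{j \text{ covered}} \alpha_j \ge \sum_{j \in X'} \alpha_j - m\gamma$. Combining this with the component-wise cost bounds and $|\mathcal{C}_1| + |\mathcal{C}_2| \ge k'$ (from property~\ref{defn:os4}, since $\calb'_{\ell'}\cup\{(i^*,r^*)\}$ would need $\ge k'+1$ components without $(i^*,r^*)$ wait — more precisely $|\comp(\calb'_\ell)| > k'$ forces enough components; I would verify $|\mathcal{C}_1|+|\mathcal{C}_2| \ge k'$ carefully) gives
\[
    \sum_{C \in \mathcal{C}_1} r_C + r_{C^*} \le 3\Big(\sum_{j \text{ covered}} \alpha_j - k'\lambda\Big) + O(\epsilon)\OPT \le 3\Big(\sum_{j\in X'}\alpha_j - k'\lambda - m\gamma\Big) + O(\epsilon)\OPT = 3 \cdot \OPT'_{LP} + O(\epsilon)\OPT,
\]
where the last equality uses weak LP duality for $(D')$ and the fact that $(\lambda,\gamma,\alpha)$ is a feasible dual solution whose objective value we have been tracking.

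The main obstacle I anticipate is the bookkeeping in the case analysis for $C^*$ combined with correctly accounting for the points that switch from ``outlier'' to ``covered'' when we add $(i^*,r^*)$: we need the count $|\out(\calb'_{\ell'}\cup\{(i^*,r^*)\})| \le m$ to be the one we use for the outlier budget, while simultaneously using that all of $\out(\calb'_{\ell'})$ is tight to bound the $\alpha$-mass we ``lose'' to outliers by $m\gamma$. Since $\out(\calb'_{\ell'}\cup\{(i^*,r^*)\}) \subseteq \out(\calb'_{\ell'})$, every point we actually treat as an outlier is tight, so the bound $\sum_{j \in \out} \alpha_j \le m\gamma$ is clean. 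The other delicate point is ensuring the $O(\epsilon)\OPT$ additive terms really do stay $O(\epsilon)\OPT$ after summing over up to $k'$ components: each component contributes a $\frac{3\epsilon}{|X|}\OPT \cdot |C_d|$ error, and since the disjoint families $C_d$ across distinct components are themselves pairwise disjoint collections of balls, their total size is at most $|X|$, so the total error is $O(\epsilon)\OPT$ — this is the same mechanism as in Lemma~\ref{lem:no-outliers-final-cost-bound} and I would reuse it verbatim.
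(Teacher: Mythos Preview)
Your proposal has a genuine gap: the direction of the outlier accounting is backwards, and this is not a bookkeeping slip but the crux of why the paper needs the full case analysis.

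Concretely, after summing the component bounds you obtain an upper bound of the form $3\big(\sum_{j \text{ covered}} \alpha_j - |\mathcal{C}|\lambda\big) + O(\epsilon)\OPT$, and you want this to be at most $3\big(\sum_{j\in X'}\alpha_j - k'\lambda - m\gamma\big)$. Rewriting, this requires
\[
\sum_{j \in \out} \alpha_j \;\ge\; m\gamma
\qquad\text{and}\qquad
|\mathcal{C}| \;\ge\; k',
\]
i.e.\ the set whose components you are summing over must have \emph{at least} $m$ tight outliers and \emph{at least} $k'$ components. But the set you chose, $\calb'_{\ell'}\cup\{(i^*,r^*)\}$, is selected precisely because it has \emph{at most} $m$ outliers and \emph{at most} $k'$ components (properties~\ref{defn:os3} and~\ref{defn:os4}). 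Your derived inequality $\sum_{j\text{ covered}}\alpha_j \ge \sum_j\alpha_j - m\gamma$ therefore points the wrong way, and your displayed chain $3(\sum_{\text{covered}}\alpha_j - k'\lambda) \le 3(\sum_j\alpha_j - k'\lambda - m\gamma)$ is false in general. The same problem hits the component count: nothing guarantees $|\comp(\calb'_{\ell'})| \ge k'$; only $|\comp(\calb'_{\ell})| > k'$ is given, and $\calb'_{\ell'}$ can have strictly fewer components than $\calb'_{\ell}$ (pairs in $\calb'_{\ell}\setminus\calb'_{\ell'}$ may form their own components).

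The tension is structural: feasibility wants a set with \emph{few} components and \emph{few} outliers, while the dual bound wants a set with \emph{many} of both. The paper resolves this by decoupling the two roles. It identifies (via a four-case analysis on the relation between $\ell'$ and $\ell$) a set $\calb_1$ --- typically $\calb'_{\ell-1}$, or some $\calb'_h$ with $\ell' \le h < \ell$ --- that simultaneously has $\ge k'$ components and $> m$ tight outliers; this drives the dual bound (packaged as Lemma~\ref{lem:outliers-parital-cost-bound}). Feasibility is then obtained by adjoining at most two extra tight pairs (among $(i_\ell,r_\ell)$, $(i_{h+1},r_{h+1})$, $(i^*,r^*)$), each of radius $\le r_{1/\epsilon} \le \epsilon\,\OPT$, which is where the $O(\epsilon)\OPT$ slack is actually spent. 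Your single-set approach cannot be patched without introducing this separation.
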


The feasible solution due to Lemma~\ref{lem:outliers-final-cost-bound} and the guessed pairs $(i_{1},r_{1}),...,(i_{1/\epsilon},r_{1/\epsilon})$ then yield a solution to the original instance $(X,k,m)$ of cost at most $(3+O(\epsilon)) \cdot \OPT$.

\begin{thm}
    \label{thm:3+eps-outlier}
    For any $\epsilon>0$ there is a $(3+\epsilon)$-approximation algorithm for the \textsc{Minimum Sum of Radii with Outliers} problem.
\end{thm}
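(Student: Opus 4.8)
The plan is to assemble the theorem from the three ingredients already set up in this section, mirroring the structure of the proof of Theorem~\ref{thm:3+eps} for the outlier-free case. First, I would dispose of the easy case $k \le 1/\epsilon$: here an optimal solution has at most $1/\epsilon$ centers, and since w.l.o.g.\ every radius equals $d(i_\ell, j)$ for some $j \in X$, there are only $|X|^{O(1/\epsilon)}$ candidate sets of (center, radius) pairs; for each such candidate we check in polynomial time whether at most $m$ points remain uncovered, and among the feasible ones we return the cheapest. This takes time $n^{O(1/\epsilon)}$ and is exact.

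For the main case $k > 1/\epsilon$, I would follow the guessing step already described: enumerate all choices of the $1/\epsilon$ pairs $(i_1,r_1),\dots,(i_{1/\epsilon},r_{1/\epsilon})$ of largest radius in $\OPT$ (again $n^{O(1/\epsilon)}$ possibilities), and for the ``correct'' guess work with the residual instance $(X',k',m)$ where $X' = X \setminus \bigcup_{\ell=1}^{1/\epsilon} B(i_\ell,r_\ell)$ and $k' = k - 1/\epsilon$. By Lemma~\ref{lem:outliers-biggest-radius}, for this residual instance there is an optimal solution $\OPT'$ all of whose radii are at most $r_{1/\epsilon} \le \epsilon \cdot \OPT$, which justifies restricting attention to the pair set $\calb$ and hence to the LPs $(P')$ and $(D')$ with $\OPT'_{LP} \le \OPT' \le \OPT$. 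Then I invoke Lemma~\ref{lem:outliers-compute-so} to compute, in polynomial time, a solution $(\lambda,\gamma,\alpha)$ to $(D')$ together with an orderly structured set of pairs $\calb'$, and feed this into Lemma~\ref{lem:outliers-final-cost-bound} to obtain, in polynomial time, a feasible solution to $(X',k',m)$ of cost at most $3\cdot\OPT'_{LP} + O(\epsilon)\cdot\OPT \le 3\cdot\OPT' + O(\epsilon)\cdot\OPT$. Finally I add back the $1/\epsilon$ guessed pairs $(i_1,r_1),\dots,(i_{1/\epsilon},r_{1/\epsilon})$: this re-covers all of $\bigcup_\ell B(i_\ell,r_\ell) = X \setminus X'$ and uses $1/\epsilon + k' = k$ centers total, so the union is feasible for $(X,k,m)$, and its cost is at most $\sum_{\ell=1}^{1/\epsilon} r_\ell + 3\cdot\OPT' + O(\epsilon)\cdot\OPT \le \OPT + 3\cdot\OPT' + O(\epsilon)\cdot\OPT$. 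Since $\OPT'$ is the optimum of a residual instance obtained by deleting part of $\OPT$, one has $\sum_{\ell=1}^{1/\epsilon} r_\ell + \OPT' \le \OPT$ (the guessed pairs together with $\OPT'$ form a feasible solution to $(X,k,m)$ of cost $\ge \OPT$, and conversely $\OPT$ restricted to $X'$ is feasible for the residual instance), hence the total cost is at most $(3 + O(\epsilon))\cdot\OPT$. Rescaling $\epsilon$ by a constant factor at the outset turns $3 + O(\epsilon)$ into $3 + \epsilon$.

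The only genuine subtlety I would flag explicitly is the bookkeeping of outliers across the guessing step: the residual instance is allowed the full budget of $m$ outliers, and I must argue that the ``correct'' guess of the top-$1/\epsilon$ pairs from $\OPT$ leaves a residual instance whose optimum $\OPT'$ plus the cost of those pairs does not exceed $\OPT$ — which holds because $\OPT$ itself, with its top $1/\epsilon$ radii removed, is a feasible solution to $(X',k',m)$ using at most $k'$ centers and at most $m$ outliers (the outliers of $\OPT$ lying in $X'$ stay outliers, those lying in $\bigcup_\ell B(i_\ell,r_\ell)$ are simply not in $X'$). Everything else is a routine combination of the lemmas; the real work — the primal-dual construction behind Lemma~\ref{lem:outliers-compute-so} and the component-to-cluster rounding behind Lemma~\ref{lem:outliers-final-cost-bound} — is deferred to the corresponding subsections.
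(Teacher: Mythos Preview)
Your proposal is correct and follows exactly the same approach as the paper: handle $k\le 1/\epsilon$ by enumeration, otherwise guess the $1/\epsilon$ largest-radius pairs, invoke Lemma~\ref{lem:outliers-compute-so} and Lemma~\ref{lem:outliers-final-cost-bound} on the residual instance, and add the guessed pairs back. One cosmetic remark: the intermediate bound $\sum_\ell r_\ell + 3\,\OPT' + O(\epsilon)\,\OPT \le \OPT + 3\,\OPT' + O(\epsilon)\,\OPT$ is too loose to reach the conclusion by itself; you should instead use $\sum_\ell r_\ell + 3\,\OPT' \le 3(\sum_\ell r_\ell + \OPT') \le 3\,\OPT$ directly from the inequality $\sum_\ell r_\ell + \OPT' \le \OPT$ that you correctly justify.
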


In Appendix~\ref{apx:generalized-lb} we extend this algorithm to a $(3.5+\epsilon)$-approximation algorithm for the setting of generalized lower bounds (with and without outliers).

\subsection{\label{subsec:Computing-well-structured-outlier}Computing a orderly structured set solution.}

We present our algorithm for computing a solution $\left(\lambda,\gamma,\alpha\right)$ to $(D')$ and corresponding orderly structured set $\calb'$.
This will yield the proof of Lemma~\ref{lem:outliers-compute-so}.

We first describe a subroutine within our algorithm.
This routine and variations of it have been used in prior work on \textsc{Minimum Sum of Radii} and its variants~\cite{Ahmadian2016,Charikar2004,Friggstad2022}.
We assume that we are given a fixed value for $\lambda$.
We initialize a set of pairs $\calp$ by $\calp = \emptyset$ and a solution to $(D')$ by $\alpha_{j}:=0$ for each $j\in X'$ and~$\gamma:=\infty$.
Our~subroutine is a continuous process that is subdivided into phases.
At the beginning of each phase, we define $X''$ to be the set of points that are not covered by pairs $(i,r) \in \calb$ that are tight in the current dual solution.
For each point $j\in X''$, we raise $\alpha_{j}$ simultaneously and uniformly at the same rate.
The phase ends when there is a pair $(i,r) \notin \calp$ that is tight in the current dual solution.
At the end of the phase, we add all such pairs to $\calp$, one at a time.
We assume that the set of pairs $\calb$ is totally ordered, and the order in which the tight pairs are added to $\calp$ follows this ordering.
We stop adding pairs to $\calp$ when there are at most $m$ points $j\in X'$ that are not covered by $\calp$. Finally, we define $\gamma:=\max_{j\in X'}\alpha_{j}$.
Notice that if a point $j \in X'$ is not covered by $\calp$ at the beginning of the last phase, then $\alpha_j$ is raised during the last phase. Thus, $j$ is tight when the algorithm terminates, even if $j$ is covered by $\calp$ at the end.
Also, given the final values for $\lambda$ and $\alpha_j$ for each $j\in X'$, our choice for $\gamma$ optimizes the dual objective function value (while satisfying all constraints of~$(D')$).

Each pair $(i,r)$ might become tight at some point in time in the routine above, depending on $\lambda$.
For each~$(i,r) \in \calb$, we define a function $f_{(i,r)}$ such that $f_{(i,r)}(\lambda)$ equals this point in time for each $\lambda$; if $(i,r)$ never becomes tight for some value of $\lambda$, then $f_{(i,r)}(\lambda)=\infty$. Also, we define a function $g_{1}:[0,\infty)\rightarrow\R$ which defines for each $\lambda$ the point in time when the first phase ends.
One can easily show that $g_{1}$ is the lower envelope (i.e., the point-wise minimum), of the functions $f_{(i,r)}(\lambda)$.

\begin{restatable}{lem}{lemoutliersaffinefuncg}
    \label{lem:outliers-affine-func-g}
    For each $\lambda \ge 0$ we have that $g_{1}(\lambda)=\min_{(i,r) \in \calb}f_{(i,r)}(\lambda)$.
    Also, $g_{1}$ is a piecewise affine function with at most $O(|\calb|)$ many pieces.
    Moreover, we can compute $g_{1}$ in polynomial time.
\end{restatable}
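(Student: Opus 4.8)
The plan is to analyze the subroutine from the perspective of a single fixed $\lambda$ and understand, as a function of $\lambda$, at what time each pair $(i,r)$ first becomes tight \emph{during the first phase}. During the first phase, $X''$ is exactly the set of points not covered by any tight pair at the start; initially no pair is tight, so $X'' = X'$ and every $\alpha_j$ is raised at unit rate. Hence at time $t$ we have $\alpha_j = t$ for every $j \in X'$, and the constraint for a pair $(i,r)$ reads $\sum_{j \in B(i,r) \cap X'} \alpha_j = |B(i,r) \cap X'| \cdot t \le r + \lambda$. Therefore $(i,r)$ becomes tight at time $f_{(i,r)}(\lambda) = (r+\lambda)/|B(i,r)\cap X'|$ if $B(i,r) \cap X' \ne \emptyset$, and $f_{(i,r)}(\lambda) = \infty$ otherwise (a pair covering no point in $X'$ never has its left-hand side grow). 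First I would record this explicit formula; it immediately shows each $f_{(i,r)}$ is affine in $\lambda$ with nonnegative slope $1/|B(i,r)\cap X'|$ (or the constant $\infty$).

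Next I would argue that $g_1(\lambda) = \min_{(i,r) \in \calb} f_{(i,r)}(\lambda)$. The first phase ends precisely when \emph{some} pair not yet in $\calp = \emptyset$ becomes tight; since raising stops for all points at the same instant and the process is continuous and monotone, the first such time is the minimum over all pairs of the time at which that pair's constraint first becomes tight, which is exactly $\min_{(i,r)} f_{(i,r)}(\lambda)$. One should note that the minimum is attained (it is a minimum of finitely many affine/constant functions over $\calb$, and $|\calb| \in O(|X|^2)$ is finite), so $g_1$ is well-defined and finite whenever at least one pair covers a point of $X'$, which holds since $X' \ne \emptyset$ and $(j, 0)$ with $r = d(i,j)$ for $i=j$ — or more simply any pair with large enough radius — has nonempty intersection with $X'$.

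Then I would conclude the structural and algorithmic claims. Being the pointwise minimum (lower envelope) of $O(|\calb|)$ affine functions, $g_1$ is piecewise affine, concave, with at most $O(|\calb|)$ pieces — in fact the number of pieces of a lower envelope of $N$ lines is at most $N$, giving the stated $O(|\calb|)$ bound. For the computation: each $f_{(i,r)}$ is determined in polynomial time by computing $|B(i,r)\cap X'|$ (a single pass over $X'$) and the coefficients $r$ and $1$; then the lower envelope of these $O(|X|^2)$ lines can be computed in polynomial time by a standard convex-hull / envelope sweep (e.g., sort by slope and process, discarding lines that never attain the minimum), yielding an explicit list of breakpoints and affine pieces. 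I would also remark that pairs with $f_{(i,r)} = \infty$ can simply be discarded from the envelope computation since they never attain the minimum (as some finite-valued pair always does).

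The main obstacle, and the only place requiring care, is making the claim "$g_1$ is the lower envelope" fully rigorous about tie-breaking and about the behavior exactly at breakpoints: when several pairs become tight simultaneously at the end of the first phase, the subroutine adds all of them to $\calp$ in the fixed total order of $\calb$, but this does not affect the \emph{time} $g_1(\lambda)$ at which the phase ends — it only affects which pairs are in $\calp$ afterward and hence the dynamics of subsequent phases, which $g_1$ does not describe. So I would be explicit that $g_1$ tracks only the phase-one end time, for which the identity with $\min_{(i,r)} f_{(i,r)}$ is exact and unambiguous; the remaining bookkeeping (slopes, piece count, polynomial-time envelope construction) is then routine.
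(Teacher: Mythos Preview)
Your proposal is correct and follows essentially the same approach as the paper: both derive the explicit affine formula $(r+\lambda)/|B(i,r)\cap X'|$ for the time a pair would become tight while all $\alpha_j$ are being raised uniformly, and identify $g_1$ as the lower envelope of these $O(|\calb|)$ lines, computable in polynomial time. One small notational point: your identification $f_{(i,r)}(\lambda) = (r+\lambda)/|B(i,r)\cap X'|$ is literally valid only for the pair(s) attaining the minimum (for others the first phase may already have ended, so their actual tightness time $f_{(i,r)}$ can differ), and the paper handles this by introducing a separate symbol $h_{(i,r)}$ for the explicit formula and observing $\min_{(i,r)} h_{(i,r)} = \min_{(i,r)} f_{(i,r)}$; this does not affect the correctness of your argument.
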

\begin{proof}
    Let $(i,r) \in \calb$ be a pair whose corresponding ball $B(i,r)$ contain at least one point in $X'$ and consider its respective constraint in $(D')$.
    We define the affine function $h_{(i,r)}(\lambda) = \frac{r + \lambda}{|B(i,r) \cap X'|}$.
    Notice that $h_{(i,r)}(\lambda)$ measures the time that $(i,r)$ would take to become tight if there was no other constraints in $(D')$.
    Also, notice that if $(i,r)$ becomes tight for some $\lambda \geq 0$ in the algorithm when the first phase ends, then $h_{(i,r)}(\lambda) = f_{(i,r)}(\lambda)$.
    Therefore, the lower envelope of the functions $h_{(i,r)}(\lambda)$ is the same as the lower envelope of the functions $f_{(i,r)}(\lambda)$, hence $g_1(\lambda) = \min_{(i,r) \in \calb} f_{(i,r)}(\lambda) = \min_{(i,r) \in \calb} h_{(i,r)}(\lambda)$.
    Then, by applying standard polynomial time line crossing algorithms for every pair of affine functions $h_{(i,r)}(\lambda)$ and $h_{(i',r')}(\lambda)$ one can compute the lower envelope $g_1$.
    Furthermore, since each $h_{(i,r)}(\lambda)$ is affine, it follows that $g_1(\lambda)$ is piecewise affine and that the number of pieces of $g_1(\lambda)$ is bounded by the number of pairs in $\calb$.
\end{proof}

We compute $g_{1}$ and identify a polynomial number of intervals $(b_{0},b_{1}),\dots,(b_{t-1},b_t)$ with $b_{0}=0$ and $b_{t} > 2 \cdot |X'| \cdot k' \cdot r_{1/\epsilon}$ such that $g_{1}$ is an affine function when we restrict it to any of these intervals.

We now do a binary search over the values $b_0, b_1 ,...,b_t$.
For each candidate value $b_{\ell}$, we run our subroutine with~${\lambda = b_{\ell}}$ and we compute the number of components in $\calp$.
The proof for the following Lemma~\ref{lem:outliers-small-big-lambda} is similar to a proof from Ahmandian and Swamy~\cite{Ahmadian2016}.

\begin{lem}
    \label{lem:outliers-small-big-lambda}
    If we run our subroutine with $\lambda=b_{0} = 0$, then $|\comp(\calp)| > k'$.
    If we run it with ${\lambda = b_{t} > 2 \cdot |X'| \cdot k' \cdot r_{1/\epsilon}}$, then $|\comp(\calp)| \leq k'$.
\end{lem}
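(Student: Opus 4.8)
The plan is to establish the two extremes of the binary search by a direct analysis of the subroutine's behavior for the smallest and the largest candidate value of $\lambda$.

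For the case $\lambda = b_0 = 0$, I would argue as follows. When $\lambda = 0$, a pair $(i,r)$ becomes tight only when $\sum_{j \in B(i,r)\cap X'} \alpha_j = r$. Since we assumed $|X'| > k' + m$ and the subroutine stops adding pairs to $\calp$ only once at most $m$ points remain uncovered, $\calp$ must cover more than $k'$ points of $X'$; in particular $\calp \neq \emptyset$ and $\calp$ has at least one pair with positive radius. The key point is that with $\lambda = 0$ the tight pairs are ``expensive'' relative to the dual mass they accumulate: for any component $C$ of $\calp$, summing the tightness equalities over a maximal set of pairs in $C$ with pairwise disjoint balls, together with the fact that $\alpha_j \le \gamma = \max_{j'}\alpha_{j'}$, forces the number of components to be large. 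More concretely, I would bound the total radius $\sr(\calp) = \sum_{(i,r)\in\calp} r$ from above (using the tightness constraints and $\lambda=0$, so each tight pair's radius is exactly the $\alpha$-mass inside its ball, and these masses are bounded by the total $\sum_j \alpha_j$), and bound it from below in terms of the number of components (each component must have enough radius to ``connect'' and to have been added, but more usefully: if there were at most $k'$ components, the structure of the process would have terminated earlier, or the dual objective $\sum_j \alpha_j - k'\lambda - m\gamma = \sum_j\alpha_j - m\gamma$ could be pushed higher). The cleanest route is the one already used in Lemma~\ref{lem:outliers-small-big-lambda}'s cited source \cite{Ahmadian2016}: since $\lambda = 0$, opening one ball per component and charging $3r$ gives a feasible solution covering all but $m$ points with cost $\le 3\sum_{j\in X'}\alpha_j$, and if this used at most $k'$ centers it would already be a valid solution; but an optimal fractional solution with at most $k'$ centers and the stopping rule imply the process cannot have reached $\le k'$ components, because the only reason the $\alpha_j$'s stopped rising is that $m$ points became coverable, which by a counting argument over the components (each covering at least one ``private'' point, via a maximal independent set as in the construction of $X''$ in Section~\ref{subsec:Computing-well-structured}) leaves strictly more than $k'$ of them.

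For the case $\lambda = b_t > 2|X'|k'r_{1/\epsilon}$, I would show that a \emph{single} ball suffices, so $|\comp(\calp)| \le 1 \le k'$. The intuition is that with $\lambda$ this large, every pair becomes tight very quickly, and in particular all pairs become tight essentially ``at the same time'' before any point can accumulate more than $r_{1/\epsilon}$ worth of $\alpha$: the function $h_{(i,r)}(\lambda) = \frac{r+\lambda}{|B(i,r)\cap X'|}$ from Lemma~\ref{lem:outliers-affine-func-g}, for $\lambda$ large, is dominated by $\lambda / |B(i,r)\cap X'|$, so the pairs with the largest balls become tight first. Once a pair $(i,r)$ with $B(i,r)\cap X'$ equal to all of $X'$ (or a large subset) becomes tight, $\calp$ covers $|X'| - m$ or more points immediately and the process stops. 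I would make this quantitative: the first phase ends at time $g_1(b_t) \le \frac{r_{1/\epsilon} + b_t}{\max_{(i,r)}|B(i,r)\cap X'|}$, and at that time every point $j \in X''$ has $\alpha_j = g_1(b_t)$; one checks that because $b_t > 2|X'|k'r_{1/\epsilon}$, in fact the pair(s) that become tight at the end of the first phase already cover all but at most $m$ points — otherwise the dual constraint $\sum_{j\in B(i,r)\cap X'}\alpha_j \le r + \lambda$ would be violated for the ball covering everything, since that ball's left-hand side would exceed $r_{1/\epsilon} + b_t$ only if $|X'|\cdot g_1 > r_{1/\epsilon}+b_t$, contradicting the choice of $b_t$. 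Hence $\calp$ after the first phase is a single clique of pairs all sharing the common covered points, so it forms at most one component, giving $|\comp(\calp)| \le 1 \le k'$.

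The main obstacle I anticipate is the $\lambda = 0$ direction: the $\lambda = b_t$ case is a robust ``everything collapses'' argument, but showing $|\comp(\calp)| > k'$ requires carefully combining the stopping rule (at most $m$ uncovered points), the assumption $|X'| > k' + m$, the structure of how pairs are added to $\calp$ one at a time in a fixed order, and a lower bound on the number of components via a private-point / independent-set argument. I would handle it by contradiction: if $|\comp(\calp)| \le k'$ at the end of the subroutine run with $\lambda = 0$, then opening one ball of radius $3r$ per component (as in Lemma~\ref{lem:normal-component}) yields a feasible integral solution to $(X',k',m)$ with at most $k'$ centers and cost $\le 3\sum_{j\in X'}\alpha_j = 3(\text{dual value at }\lambda=0,\text{ since } \gamma\text{-term and }\lambda\text{-term vanish or are accounted for})$, which is fine; the actual contradiction must instead come from the fact that the subroutine would have \emph{stopped earlier} — i.e., at $\lambda = 0$ the process of raising all uncovered $\alpha_j$'s simultaneously produces many small tight balls before any large ball closes up, so that when the $m$-uncovered threshold is first reached, the pairs in $\calp$ still span more than $k'$ components. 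Making ``more than $k'$ components'' precise is exactly the delicate counting step, and I expect it to mirror the argument in \cite{Ahmadian2016} that a minimum-cost solution needs more than $k'$ balls when $\lambda$ is too small.
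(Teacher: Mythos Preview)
Your plan has the two cases reversed in difficulty, and both directions contain a genuine gap.

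\textbf{The case $\lambda = 0$ is trivial, and your approach is wrong.} At time $0$, for every $j \in X'$ the pair $(j,0)$ lies in $\calb$ and satisfies $\sum_{j' \in B(j,0)\cap X'} \alpha_{j'} = \alpha_j = 0 = 0 + 0 = r + \lambda$, so it is tight \emph{before any raising happens}. The first phase therefore ends immediately, the subroutine adds these radius-$0$ pairs one at a time until at most $m$ points remain uncovered, and $\calp$ consists of singleton components. Since $|X'| > k' + m$, this gives $|\comp(\calp)| > k'$. Your claim that ``$\calp$ has at least one pair with positive radius'' is simply false, and the elaborate cost-charging and private-point arguments you sketch are unnecessary and would not close as stated.

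\textbf{The case $\lambda = b_t$ is the nontrivial one, and your argument does not work.} You rely on a pair $(i,r) \in \calb$ whose ball covers all (or almost all) of $X'$, but every radius in $\calb$ is at most $r_{1/\epsilon}$, and there is no reason a ball of that radius should cover $X'$; the points of $X'$ can be spread across many well-separated clusters. So the claim $|\comp(\calp)| \le 1$ is unjustified and generally false. The paper instead argues by contradiction: assume $|\comp(\calp)| > k'$ at termination. From any tight pair one gets $\gamma \ge \lambda/|X'|$. Removing the last-added pair leaves at least $k'$ components and strictly more than $m$ uncovered tight points; picking one tight pair per remaining component (these are disjoint) plus $m$ tight outlier points and one extra tight point, and plugging into the dual objective, yields
\[
k' \cdot r_{1/\epsilon} \;\ge\; \OPT'_{LP} \;\ge\; \sum_{j\in X'}\alpha_j - k'\lambda - m\gamma \;\ge\; \gamma \;\ge\; \frac{\lambda}{|X'|} \;>\; 2k' r_{1/\epsilon},
\]
a contradiction. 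This is where the specific constant $2|X'|k' r_{1/\epsilon}$ comes from, and nothing in your sketch reproduces it.
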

\begin{proof}
    If the subroutine is run with $\lambda = 0$, then at the first iteration every pair $(i,0)$ for $i \in X'$ becomes tight and it stops.
    Hence the number of components of tight pairs is $|X'|$.

    Assume that the subroutine is run with $\lambda > 2 \cdot |X'| \cdot k' \cdot r_{1/\epsilon}$.
    Let $(\lambda, \gamma, \alpha)$ be the computed solution to~$(D')$ and notice that for any tight pair $(i,r)$, it follows that
    \begin{align}
        \label{lem:outliers-small-big-lambda:ineq-1}
        \frac{\lambda}{|X'|}
         & = \frac{1}{|X'|} \left( \sum_{j \in B(i,r) \cap X'} \alpha_j - r \right)
        \leq \frac{1}{|X'|} \left( |X'| \cdot \gamma - r \right)
        \leq \gamma.
    \end{align}

    Let $\calb'$ denote the set of pairs that are tight picked by the algorithm and let $(i^*,r^*)$ be the last pair picked, and for each component $C \in \comp(\calb')$, let $(i_C,r_C)$ denote a pair in $C$.
    From the stop condition, we have that $|\out(\calb' \setminus \{i^*,r^*\})| > m$, and, from the definition of the algorithm, we also have that $\alpha_j = \gamma$ for every $j \in \out(\calb' \setminus \{i^*,r^*\})$.
    Let $M$ denote a set of exactly $m$ points such that $\out(\calb') \subseteq M \subsetneq \out(\calb' \setminus \{i^*,r^*\})$ and let $j' \in \out(\calb' \setminus \{i^*,r^*\}) \setminus M$, and recall that from Lemma~\ref{lem:outliers-biggest-radius} we have that no radius in $\OPT'_{LP}$ is larger than $r_{1/\epsilon}$.
    Now, assume, for a contradiction, that by the end of the $|\comp(\calb')| > k'$.
    Then, because removing a pair from $\calb'$ can decrease the number of components by at most one, it follows that
    \begin{align}
        \label{lem:outliers-small-big-lambda:ineq-2}
        |\comp(\calb' \setminus \{i^*,r^*\})|
         & \geq |\comp(\calb')| - 1
        \geq k'.
    \end{align}
    Then,
    \begin{align*}
        \frac{\lambda}{2 |X'|}
         & \stackrel{\phantom{\scriptsize \mathrm{(ineq.\ref{lem:outliers-small-big-lambda:ineq-2})}}}{>} k' \cdot r_{1/\epsilon}
        \stackrel{{\scriptsize \mathrm{(Lem.\ref{lem:outliers-biggest-radius})}}}{\geq} \OPT'_{LP}
        \stackrel{\phantom{\scriptsize \mathrm{(ineq.\ref{lem:outliers-small-big-lambda:ineq-2})}}}{\geq} \sum_{j \in X'} \alpha_j - m \cdot \gamma - k' \cdot \lambda                                           \\
         & \stackrel{\phantom{\scriptsize \mathrm{(ineq.\ref{lem:outliers-small-big-lambda:ineq-2})}}}{=} \sum_{j \in X' \setminus M} \alpha_j + \sum_{j \in M} \alpha_j - m \cdot \gamma - k' \cdot \lambda     \\
        % = \sum_{j \in X' \setminus M} \alpha_j + m \cdot \gamma - m \cdot \gamma - k' \cdot \lambda
         & \stackrel{\phantom{\scriptsize \mathrm{(ineq.\ref{lem:outliers-small-big-lambda:ineq-2})}}}{=} \sum_{j \in X' \setminus M} \alpha_j - k' \cdot \lambda
        \stackrel{\phantom{\scriptsize \mathrm{(ineq.\ref{lem:outliers-small-big-lambda:ineq-2})}}}{\geq} \sum_{j \in X' \setminus \out(\calb' \setminus \{i^*,r^*\})} \alpha_j + \alpha_{j'} - k' \cdot \lambda \\
         & \stackrel{\phantom{\scriptsize \mathrm{(ineq.\ref{lem:outliers-small-big-lambda:ineq-2})}}}{\geq} \sum_{\substack{j \in B(i_C,r_C)                                                                    \\ C \in \comp(\calb' \setminus \{i^*,r^*\})}} \alpha_j + \gamma - k' \cdot \lambda
        \stackrel{\phantom{\scriptsize \mathrm{(ineq.\ref{lem:outliers-small-big-lambda:ineq-2})}}}{\geq} |\comp(\calb' \setminus \{i^*,r^*\})| \cdot \lambda + \gamma - k' \cdot \lambda                        \\
         & \stackrel{{\scriptsize \mathrm{(ineq.\ref{lem:outliers-small-big-lambda:ineq-2})}}}{\geq} \gamma
        \stackrel{{\scriptsize \mathrm{(ineq.\ref{lem:outliers-small-big-lambda:ineq-1})}}}{\geq} \frac{\lambda}{|X'|},
    \end{align*}
    which is a contradiction.
\end{proof}

A consequence of Lemma~\ref{lem:outliers-small-big-lambda} is that we can find a value $b_{\ell^{*}}$ such that
\begin{enumerate}[noitemsep, leftmargin=1.5cm]
    \item for $\lambda=b_{\ell^{*}}=:b_{L}^{(2)}$, the subroutine computes a set $\calp$ such that $|\comp(\calp)| > k'$, but
    \item for $\lambda=b_{\ell^{*}+1}=:b_{R}^{(2)}$, the subroutine computes a set $\calp$ such that $|\comp(\calp)| \leq k'$.
\end{enumerate}
We continue iteratively maintaining certain invariants.
At the beginning of each iteration $s$, we assume that we are given an interval $(b_{L}^{(s)},b_{R}^{(s)})$ and a set of pairs $\calb_{s'}$ for each $s'< s$.
Also, we assume that if we run our subroutine with $\lambda = b_L^{(s)}$ we compute a set of pairs $\calp$ with more than $k$ components, and if we run our subroutine with $\lambda = b_R^{(s)}$ the computed set $\calp$ has at most $k$ components.
Furthermore, we assume that if we run our subroutine with any parameter $\lambda \in (b_{L}^{(s)},b_{R}^{(s)})$, for each $s'< s$ the pairs in $\calb_{s'}$ become tight during phase $s'$.
Notice that all these invariants are satisfied after the first iteration, i.e., for $s=2$.

We describe now our routine for iteration $s$.
We define a function $g_{s}: (b_{L}^{(s)},b_{R}^{(s)}) \rightarrow \mathbb{R}$ which defines for each $\lambda \in (b_{L}^{(s)},b_{R}^{(s)})$ the point in time when phase $s$ ends if we execute our subroutine with parameter $\lambda$.

\begin{restatable}{lem}{lemoutliersaffinefuncgs}
    \label{lem:outliers-affine-func-gs}
    The function $g_{s}: (b_{L}^{(s)},b_{R}^{(s)}) \rightarrow \mathbb{R}$ is a piecewise affine function with at most $O(|\calb|)$ many pieces.
    Moreover, we can compute $g_{s}$ in polynomial time.
\end{restatable}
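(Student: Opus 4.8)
The plan is to mimic the proof of Lemma~\ref{lem:outliers-affine-func-g}: express $g_{s}$ as the lower envelope of $O(|\calb|)$ affine functions of $\lambda$, one for each pair that could be the one ending phase $s$. The point is that, by the invariants, on the interval $(b_{L}^{(s)},b_{R}^{(s)})$ the first $s-1$ phases of the subroutine behave combinatorially identically for every $\lambda$, so everything relevant to phase $s$ depends affinely on $\lambda$.

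First I would spell out what the invariants give. On $(b_{L}^{(s)},b_{R}^{(s)})$, in each phase $s'<s$ exactly the pairs of $\calb_{s'}$ become tight; hence the set of tight pairs at the end of phase $s'$ is $\calp_{s'} := \calb_{1}\cup\dots\cup\calb_{s'}$, and the set $X''$ whose $\alpha$-values are raised during phase $s'$ is the fixed set $Y_{s'} := X'\setminus X'(\calp_{s'-1})$ (with $\calp_{0}:=\emptyset$). Moreover, since $(b_{L}^{(s)},b_{R}^{(s)})$ is a sub-interval on which the previously computed $g_{s-1}$ is affine (and $\calb_{s-1}$ is constant), by induction every $g_{s'}$ with $s'<s$ is an affine function of $\lambda$ on $(b_{L}^{(s)},b_{R}^{(s)})$; set $g_{0}\equiv 0$. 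Consequently, the dual solution at the start of phase $s$ is affine in $\lambda$: since during phase $s'$ each $j\in Y_{s'}$ has $\alpha_{j}$ increased by exactly $g_{s'}(\lambda)-g_{s'-1}(\lambda)$ while the other $\alpha$-values are untouched, the value of $\alpha_{j}$ at the start of phase $s$ equals
\[
A_{j}(\lambda)\;:=\;\sum_{\,s'<s\,:\,j\in Y_{s'}}\bigl(g_{s'}(\lambda)-g_{s'-1}(\lambda)\bigr),
\]
a sum of affine functions, hence affine.

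Next I would analyze phase $s$ itself. During this phase the set $X''=Y_{s}=X'\setminus X'(\calp_{s-1})$ is fixed, each point of $Y_{s}$ has $\alpha_{j}$ raised at unit rate, and the remaining $\alpha$-values stay at $A_{j}(\lambda)$. Every pair $(i,r)\in\calb\setminus\calp_{s-1}$ is not tight at the start of phase $s$ (all tight pairs lie in $\calp_{s-1}$); if in addition $|B(i,r)\cap Y_{s}|\ge 1$, its dual slack decreases at rate $|B(i,r)\cap Y_{s}|$ during the phase, so it becomes tight at time
\[
\phi_{(i,r)}(\lambda)\;=\;g_{s-1}(\lambda)\;+\;\frac{\,r+\lambda-\sum_{j\in B(i,r)\cap X'}A_{j}(\lambda)\,}{|B(i,r)\cap Y_{s}|},
\]
which is affine in $\lambda$; pairs with $|B(i,r)\cap Y_{s}|=0$ never become tight during phase $s$ and are irrelevant. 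Phase $s$ ends exactly when the first such pair becomes tight, i.e.
\[
g_{s}(\lambda)\;=\;\min\bigl\{\phi_{(i,r)}(\lambda)\;:\;(i,r)\in\calb\setminus\calp_{s-1},\ |B(i,r)\cap Y_{s}|\ge 1\bigr\}.
\]
This is the lower envelope of at most $|\calb|$ affine functions, hence piecewise affine with at most $|\calb|$ pieces, i.e. $O(|\calb|)$, and it is computed in polynomial time by the same pairwise line-crossing / lower-envelope procedure as for $g_{1}$ in Lemma~\ref{lem:outliers-affine-func-g}.

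The main obstacle is the bookkeeping behind the invariants rather than any calculation: one must argue that for all $\lambda$ in $(b_{L}^{(s)},b_{R}^{(s)})$ the first $s-1$ phases really do behave combinatorially identically, and — crucially — that each $g_{s'}$ with $s'<s$ is affine (not merely piecewise affine) on this interval, since otherwise substituting piecewise-affine functions into $A_{j}$ and $\phi_{(i,r)}$ would inflate the piece count past $O(|\calb|)$; this is exactly what the iterative refinement of the interval is designed to guarantee. Degenerate cases (a pair being exactly tight throughout a sub-interval, or two functions $\phi_{(i,r)}$ coinciding) do not affect the argument: they only force some of the affine functions to coincide, which changes neither the lower envelope nor its complexity.
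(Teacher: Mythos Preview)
Your proposal is correct and follows essentially the same approach as the paper: both arguments establish by induction that the earlier $g_{s'}$ are affine on the current interval, express the slack of each non-tight pair at the start of phase $s$ as an affine function of $\lambda$, and obtain $g_{s}$ as the lower envelope of at most $|\calb|$ affine functions. Your formulation is in fact slightly more careful (you include the additive $g_{s-1}(\lambda)$ term explicitly and restrict to pairs with $|B(i,r)\cap Y_{s}|\ge 1$), but the underlying argument is the same.
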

\begin{proof}
    This proof is similar to the proof of Lemma~\ref{lem:outliers-affine-func-g} (but more involved).
    For convenience, define $\calb_0 = \emptyset$ and $g_{0}(\lambda) = 0$, as there is no phase $0$.
    We show by induction that at each iteration $s'$, for each $s'' < s'$ the pairs in $\calb_{s''}$ become tight at the end of phase $s''$ when executing the subroutine for $\lambda \in (b_{L}^{(s')},b_{R}^{(s')})$ and that $g_{s''}(\lambda)$ is affine in $(b_{L}^{(s')},b_{R}^{(s')})$.

    For $s' = 0$, we have that $g_{0}(\lambda)$ is affine but no pair has become tight prior to the first phase.
    For some $s'$, assume that the inductive hypothesis holds.
    For each $s'' \leq s'$, let $X''_{s''}$ denote the set of points that are not covered by pairs in $\calb_{s''-1}$ during phase~$s''$, and let $(i,r) \in \calb$ be a pair whose ball corresponding ball $B(i,r)$ contain at least one point in $X''_{s''}$.
    Observe that since $\calb_{s''}$ are fixed for every $s'' < s'$, it follows that $X''_{s''}$ is fixed for every $s'' \leq s'$.
    Because for every $s'' < s'$ we have that $g_{s''}(\lambda)$ is affine for $\lambda \in (b_L^{(s')},b_R^{(s')})$, we have that $\Delta_{s''}(\lambda) = g_{s''}(\lambda) - g_{s''-1}(\lambda)$ is also affine and that this function measures the time between start and end of phase $s''$.
    Now, consider the respective constraint in $(D')$ of the pair $(i,r)$, and notice that at phase $s''$, the slack in the constraint reduces in exactly $|B(i,r) \cap X''_{s''}| \Delta_{s''}(\lambda)$.
    Thus, we can define the affine function $h_{(i,r)}(\lambda) = \frac{r + \lambda - \sum_{s'' < s'} |B(i,r) \cap X''_{s''}| \Delta_{s''}(\lambda)}{|B(i,r) \cap X''_{s'}|}$ which measures the time that $(i,r)$ would need to become tight if we disregarded the other constraints in $(D')$ during phase $s'$.
    Therefore, the lower envelope of the functions $h_{(i,r)}(\lambda)$ is exactly $g_{s'}(\lambda)$, i.e., $g_{s'}(\lambda) = \min_{(i,r) \in \calb} h_{(i,r)}(\lambda)$, and it is piecewise affine in $\in (b_L^{(s')},b_R^{(s')})$ and the number of its affine pieces is bounded by the number of pairs in $\calb$.
    By applying standard polynomial time line crossing algorithms for every pair of affine functions $h_{(i,r)}(\lambda)$ and $h_{(i',r')}(\lambda)$ one can compute the function $g_{s'}(\lambda)$.
    Then, as the algorithm defines $b_{L}^{(s'+1)},b_{R}^{(s'+1)}$ to be one of the affine pieces of $g_{s'}(\lambda)$, it follows that for each $s'' < s'+1$ the pairs in $\calb_{s''}$ become tight at the end of phase $s''$ and that $g_{s''}(\lambda)$ is affine in $(b_{L}^{(s'+1)},b_{R}^{(s'+1)})$.
    This completes the induction.

    Because for each $s'' < s$ the same pairs in $\calb_{s''}$ become tight at the end of phase $s''$ when executing the subroutine for $\lambda \in (b_{L}^{(s)},b_{R}^{(s)})$ and $g_{s''}(\lambda)$ is affine in $(b_{L}^{(s)},b_{R}^{(s)})$, with the same arguments as in the inductive step we conclude that $g_{s}(\lambda)$ is piecewise affine in $(b_{L}^{(s)},b_{R}^{(s)})$ and with at most $|\calb|$ affine~pieces.
\end{proof}

Like before, we compute $g_{s}$ and identify a polynomial number of intervals $(b_{0}^{(s)},b_{1}^{(s)}),\dots,(b_{t'-1}^{(s)},b_{t'}^{(s)})$ with $b_{0}^{(s)}=b_{L}^{(s)}$ and $b_{t'}^{(s)}=b_{R}^{(s)}$ such that $g_{s}$ is an affine function when we restrict it to any of the intervals.
We do a binary search to find a value $b_{\ell^{*}}^{(s)}$ such that
\begin{enumerate}[noitemsep, leftmargin=1.5cm]
    \item for $\lambda=b^{(s)}_{\ell^{*}}=:b_{L}^{(s+1)}$, the subroutine computes a set $\calp$ with $|\comp(\calp)| > k'$, but
    \item for $\lambda=b^{(s)}_{\ell^{*}+1}=:b_{R}^{(s+1)}$, the subroutine computes a set $\calp$ with $|\comp(\calp)| \leq k'$.
\end{enumerate}

Hence, at the beginning of the next iteration $s+1$ we satisfy our invariants mentioned above.
Let $s^{*}$ denote the iteration in which we are given an interval $(b_{L}^{(s^{*})},b_{R}^{(s^{*})})$ such that if we execute our subroutine for any $\lambda \in (b_{L}^{(s^{*})},b_{R}^{(s^{*})})$, the routine ends at the end of phase $s^*-1$ and $g_{s^*-1}: (b_{L}^{(s^*)},b_{R}^{(s^*)}) \rightarrow \mathbb{R}$ is an affine function.

\begin{restatable}{lem}{lemoutliersalgends}
    \label{lem:outliers-alg-ends}
    We can check in polynomial time whether iteration $s^{*}$ has been reached.
    Also, $s^{*} \leq |\calb|$.
\end{restatable}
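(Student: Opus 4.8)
The plan is to prove the two claims separately. For the polynomial-time check, I would observe that iteration $s^*$ is, by definition, the first iteration $s$ such that for \emph{every} $\lambda$ in the current interval $(b_L^{(s)},b_R^{(s)})$ the subroutine terminates during or at the end of phase $s-1$ — equivalently, phase $s$ never actually starts because the stopping condition (at most $m$ uncovered points) is already met once all pairs of $\calb_{s-1}$ have been added to $\calp$. To check this, I would run the subroutine once at any interior sample point $\lambda \in (b_L^{(s)},b_R^{(s)})$ (for instance, at a rational point strictly between the two endpoints, which is computable since both endpoints are breakpoints of the piecewise-affine function $g_{s-1}$): by the invariants maintained across iterations, the combinatorial behaviour of the subroutine — which pairs become tight in which phase, and hence when the stopping condition triggers — is the same for all $\lambda$ in the open interval. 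So a single run reveals whether phase $s$ starts; if it does not, we have reached $s^*$. Computing $g_{s-1}$ and the sample point takes polynomial time by Lemma~\ref{lem:outliers-affine-func-gs}, and a single run of the subroutine is polynomial, so the whole check is polynomial.

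For the bound $s^* \le |\calb|$, the key observation is a monotonicity/progress argument: in each iteration $s < s^*$, at least one new pair of $\calb$ becomes tight during phase $s$ and is added permanently to the sets $\calb_s$ that we carry forward, i.e. $\calb_s \ne \emptyset$, and the pairs in $\bigcup_{s' \le s}\calb_{s'}$ are pairwise distinct (a pair that has already been added to $\calp$ in an earlier phase is tight and will not be re-added — this is exactly the invariant that "for each $s' < s$ the pairs in $\calb_{s'}$ become tight during phase $s'$"). Since each phase that actually executes contributes at least one fresh pair from the finite set $\calb$, there can be at most $|\calb|$ phases that execute before the stopping condition is met, and hence $s^* \le |\calb|$.

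The step I expect to require the most care is justifying that the sets $\calb_{s'}$ are genuinely disjoint across phases and each nonempty — that is, that every executed phase makes strictly positive progress in the number of tight pairs. One must rule out a "degenerate" phase in which the phase boundary $g_s(\lambda)$ coincides with $g_{s-1}(\lambda)$, so that no time passes and no new pair becomes tight; and one must confirm that when a phase ends, the pair(s) triggering its end are not already in $\calp$. The first point follows because a phase ends precisely when a pair \emph{not} in $\calp$ becomes tight, so by construction the triggering pair is new; the second follows because once a pair is in $\calp$ it is tight and stays tight (the $\alpha_j$ of its covered points are only raised while uncovered, hence never after the pair enters $\calp$ unless a later pair covers the same point, but tightness of the pair itself is preserved since its constraint $\sum_{j \in B(i,r)\cap X'}\alpha_j \le r+\lambda$ already holds with equality and the raised points are those \emph{not} covered, which are disjoint from $B(i,r)$). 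Assembling these observations gives both the disjointness and the nonemptiness, and the count $s^* \le |\calb|$ follows immediately.
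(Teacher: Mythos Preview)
Your proposal is correct and follows essentially the same approach as the paper's proof: for the polynomial-time check you run the subroutine at a single interior sample point and use the invariant that the combinatorial behaviour is constant on the open interval, and for the bound $s^*\le|\calb|$ you use the progress argument that every executed phase adds at least one fresh tight pair from $\calb$. Your treatment is in fact more careful than the paper's about ruling out degenerate zero-duration phases and verifying that tight pairs stay tight, but the underlying ideas are the same.
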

\begin{proof}
    Let $g_{s-1}(\lambda)$ be the piecewise affine function on $(b_{L}^{(s-1)},b_{R}^{(s-1)})$ and consider the interval $(b_{L}^{(s)},b_{R}^{(s)})$ chosen by the algorithm on which $g_{s-1}(\lambda)$ is affine.
    If the subroutine ends at phase $s$ for some $\lambda \in (b_{L}^{(s)},b_{R}^{(s)})$, then the same pairs are tight at the end of phase $s$ for any $\lambda \in (b_{L}^{(s)},b_{R}^{(s)})$.
    Recall that these tight pairs are then added to the set $\calp$, one at a time, and that the subroutine stops if there are at most $m$ points in $X'$ not covered by $\calp$, and also recall that these tight pairs are always considered in the same fixed order to be added to $\calp$.
    Therefore, if the subroutine ends at phase at this phase, we conclude that $s^* = s$.
    Also, since there cannot be more phases than the number of possible pairs that could become tight, it follows that $s^{*}$ is upper bounded by the total number of pairs in $\calb$.
\end{proof}

Our invariant implies that if we run our subroutine with $\lambda=b_{L}^{(s^{*})}$ we compute a set of pairs $\calp_{L}$ with more than $k$ components, and if we run it with $\lambda=b_{R}^{(s^{*})}$ we compute a set of pairs $\calp_{R}$ with at most $k$ components.
Also, if we run the subroutine for some $\lambda\in(b_{L}^{(s^{*})},b_{R}^{(s^{*})})$ it computes a set $\calp_{M}$, and this is the same set for each $\lambda\in(b_{L}^{(s^{*})},b_{R}^{(s^{*})})$.
Suppose that $\calp_{M}$ has at most $k$ components.
We set $\lambda:=b_{L}^{(s^{*})}+\mu$ for a small $\mu>0$.
With a continuity argument, we can show that every pair in $\calp_{M}$ and every point not covered by $\calp_{M}$ are essentially tight in the execution of the subroutine that yields $\calp_{L}$, assuming that $\mu$ is sufficiently small.
We devise a routine that carefully mixes the pairs from $\calp_{M}$ and $\calp_{L}$ and that computes an orderly structured set $\calb'$.
A similar reasoning can be used if $\calp_{M}$ has more than most $k$ components.

\begin{restatable}{lem}{lemoutliersfindospolytime}
    \label{lem:outliers-find-os-polytime}
    Given $b_{L}^{(s^{*})}, b_{R}^{(s^{*})}$, in polynomial time we can compute a value $\lambda \in \{b_{L}^{(s^{*})}, b_{R}^{(s^{*})}\}$ and a set $\calb'$ such that $\calb'$ is orderly structured for a solution $(\lambda,\gamma,\alpha)$ to $(D')$.
\end{restatable}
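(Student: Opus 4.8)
\emph{Proof plan.}
The plan is to produce $\calb'$ by suitably interleaving two of the three candidate sets of tight pairs that the subroutine produces, namely the sets $\calp_{L}$, $\calp_{R}$, $\calp_{M}$ obtained by running it with $\lambda=b_{L}^{(s^{*})}$, with $\lambda=b_{R}^{(s^{*})}$, and with an interior value $\lambda_{0}\in(b_{L}^{(s^{*})},b_{R}^{(s^{*})})$, respectively, together with the dual solutions $(\lambda,\gamma,\alpha)$ built along the way. Each run is polynomial, and the binary-search invariant gives that $\calp_{M}$ is the same for every interior $\lambda_{0}$, that the subroutine stops during phase $s^{*}-1$ for every interior $\lambda_{0}$, that $|\comp(\calp_{L})|>k'$ while $|\comp(\calp_{R})|\le k'$, and that all three sets leave at most $m$ points uncovered, since the subroutine only halts once that is the case.

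The second step is a continuity/phase-structure argument exploiting that $g_{s^{*}-1}$ is affine on $(b_{L}^{(s^{*})},b_{R}^{(s^{*})})$. I claim that in the run with $\lambda=b_{L}^{(s^{*})}$ every pair of $\calp_{M}$ is tight and every point of $\out(\calp_{M})$ is tight. Indeed, $\calp_{M}$ consists of the pairs that become tight in phases $1,\dots,s^{*}-2$ together with a prefix, in the fixed order, of the batch of pairs that become tight at the end of phase $s^{*}-1$; at $\lambda=b_{L}^{(s^{*})}$ this entire batch -- possibly enlarged by extra coincidences at the boundary -- becomes tight simultaneously at the end of phase $s^{*}-1$, so in particular the pairs of $\calp_{M}$ are tight. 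Moreover, every point still uncovered at the start of phase $s^{*}-1$ has had its $\alpha$-value raised in every phase so far and hence holds the common, maximal $\alpha$-value when the subroutine stops; since $\out(\calp_{M})$ (and even $\out(\calp_{M}\setminus\{(i^{*},r^{*})\})$ for the special pair chosen below) is contained in this set of points, all of them are tight. Using that the extra pairs this run adds can only cover more points, one also checks that the run with $\lambda=b_{L}^{(s^{*})}$ itself halts during phase $s^{*}-1$. Symmetric statements relate $\calp_{M}$ and $\calp_{R}$ at $\lambda=b_{R}^{(s^{*})}$.

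We then split on $|\comp(\calp_{M})|$. If $|\comp(\calp_{M})|\le k'$, we fix $\lambda:=b_{L}^{(s^{*})}$ with the dual solution $(\lambda,\gamma,\alpha)$ from that run, and we take $(i^{*},r^{*})$ to be the last pair the subroutine added while producing $\calp_{M}$, so that $\calp_{M}\setminus\{(i^{*},r^{*})\}$ leaves more than $m$ points uncovered. We order $\calb'$ so that a first block of its pairs equals $\calp_{M}\setminus\{(i^{*},r^{*})\}$ (this block will be $\calb'_{\ell'}$), a further block then brings us to a set with more than $k'$ components built from the pairs of $\calp_{L}$ (the endpoint of this block will be $\calb'_{\ell}$), and finally the pair $(i^{*},r^{*})$ together with any remaining pairs of $\calp_{M}\cup\calp_{L}$ is appended; the reordering inside the middle block is chosen precisely so that $\calb'_{\ell}$, which omits $(i^{*},r^{*})$, still has more than $k'$ components, while $\calb'_{\ell'}\cup\{(i^{*},r^{*})\}=\calp_{M}$ has at most $k'$. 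If instead $|\comp(\calp_{M})|>k'$, we proceed symmetrically with $\lambda:=b_{R}^{(s^{*})}$, using $\calp_{R}$ in the role played above by $\calp_{M}$ and $\calp_{M}$ in the role of $\calp_{L}$.

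It remains to verify the three conditions of Definition~\ref{defn:outliers-orderly-structured}. Tightness is exactly the content of the continuity step: every pair placed in $\calb'$ lies in $\calp_{M}\cup\calp_{L}$ (resp.\ $\calp_{M}\cup\calp_{R}$) and is tight in the chosen dual solution, and $\out(\calb'_{\ell'})=\out(\calp_{M}\setminus\{(i^{*},r^{*})\})$ consists of tight points. The outlier-count condition holds because $\calb'_{\ell'}\cup\{(i^{*},r^{*})\}=\calp_{M}$ leaves at most $m$ uncovered, hence so does every $\calb'_{h'}\cup\{(i^{*},r^{*})\}$ with $\ell'\le h'\le\ell$, while by the choice of $(i^{*},r^{*})$ and the stopping rule every prefix $\calb'_{h}$ with $h<\ell$ still omits $(i^{*},r^{*})$ and leaves more than $m$ uncovered. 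The component-count condition holds because $\calb'_{\ell'}\cup\{(i^{*},r^{*})\}=\calp_{M}$ has at most $k'$ components and $\calb'_{\ell}$ was chosen to have more than $k'$; the slack $\ell'\le\ell$ is needed precisely because stacking the pairs of $\calp_{L}$ on those of $\calp_{M}$ may push the component count above $k'$ only after several additions. The main obstacle is this interleaving: one must exhibit a single ordering and a single special pair $(i^{*},r^{*})$ that simultaneously witnesses the outlier bound and the component bound, and reconcile the requirement of \emph{exact} tightness -- transparent at $\lambda=b_{L}^{(s^{*})}$ for the extra pairs but a priori only at interior $\lambda$ for the uncovered points -- via the phase-boundary analysis; pinning down the positions $\ell'\le\ell$, and handling the possibility that $(i^{*},r^{*})$ already lies in $\calp_{L}$ or that $\calp_{M}$ and $\calp_{L}$ overlap only partially, requires a careful but elementary case distinction.
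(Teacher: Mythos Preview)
Your overall strategy---pair $\calp_{M}$ with $\calp_{L}$ (or $\calp_{R}$), use continuity to transfer tightness to the endpoint~$\lambda$, and build an ordered set in which a distinguished pair $(i^{*},r^{*})$ controls both the outlier and the component count---matches the paper. The gap is in the ordering you actually propose and in the choice of $(i^{*},r^{*})$.

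You fix $(i^{*},r^{*})$ as the last pair added to $\calp_{M}$, set $\calb'_{\ell'}=\calp_{M}\setminus\{(i^{*},r^{*})\}$, and then append pairs from $\calp_{L}$ until the component count exceeds $k'$, calling that prefix $\calb'_{\ell}$. Your verification of \ref{defn:os3} then asserts that for every $\ell'<h<\ell$ we still have $|\out(\calb'_{h})|>m$, ``by the choice of $(i^{*},r^{*})$ and the stopping rule.'' This does not follow: for such $h$ the set $\calb'_{h}$ equals $(\calp_{M}\setminus\{(i^{*},r^{*})\})$ together with some extra pairs of $\calp_{L}$, and those extra pairs can easily cover enough additional points to drive $|\out(\calb'_{h})|$ down to at most $m$ (indeed $\calp_{L}$ alone already leaves at most $m$ points uncovered). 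The subroutine's stopping rule governs $\calp_{M}$ and $\calp_{L}$ separately, not your hybrid prefixes. Relatedly, you do not argue why one \emph{can} reach more than $k'$ components by appending $\calp_{L}$-pairs on top of $\calp_{M}\setminus\{(i^{*},r^{*})\}$; adding pairs can merge components, and nothing rules out that every such extension has at most $k'$ components.

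The paper avoids both issues by \emph{not} fixing $(i^{*},r^{*})$ in advance and by letting a covering procedure determine the prefixes. Starting from the common part $\calp'=(\calp_{L}\setminus\{p_{s}\})\cap(\calp_{M}\setminus\{p'_{s'}\})$, it feeds in a sequence of pairs and stops the moment at most $m$ points are uncovered; thus every proper prefix of the output automatically has $|\out(\cdot)|>m$, which is exactly \ref{defn:os3}. It then considers a one-parameter family of sequences $S^{(i)}=(p'_{i},\dots,p'_{s'},p_{1},\dots,p_{s})$ interpolating between $\calp_{M}$ (at $i=1$) and $\calp_{L}$ (at $i=s'+1$); since the component count of the resulting $\calq^{(i)}$ goes from at most $k'$ to more than $k'$ along this family, some index $i^{*}$ witnesses the crossing, and $p'_{i^{*}}$ becomes the special pair. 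This sliding search is the missing idea in your proposal: it simultaneously manufactures the outlier condition for all prefixes and locates $(i^{*},r^{*})$ so that \ref{defn:os4} holds, rather than hoping a fixed stacking of $\calp_{L}$ on $\calp_{M}$ will do both.
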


The lemmas above yield the proof of Lemma~\ref{lem:outliers-compute-so}.
\begin{proof}[Proof of Lemma~\ref{lem:outliers-compute-so}]
    Lemma~\ref{lem:outliers-compute-so} is a direct consequence of Lemmas~\ref{lem:outliers-affine-func-g}, \ref{lem:outliers-small-big-lambda}, \ref{lem:outliers-affine-func-gs}, \ref{lem:outliers-alg-ends}, and~\ref{lem:outliers-find-os-polytime}.
\end{proof}

%%%%%%%%%%%%%%%%%%%%%%%%%%%%%% Bibliography
\bibliographystyle{abbrv}
\bibliography{Arxiv-k-sum-radii.bib}

\begin{thebibliography}{10}

\bibitem{Ahmadian2020}
S.~Ahmadian, A.~Norouzi-Fard, O.~Svensson, and J.~Ward.
\newblock Better guarantees for k-means and euclidean k-median by primal-dual algorithms.
\newblock {\em SIAM Journal on Computing}, 49(4):FOCS17--97--FOCS17--156, 2020.

\bibitem{Ahmadian2016}
S.~Ahmadian and C.~Swamy.
\newblock {Approximation Algorithms for Clustering Problems with Lower Bounds and Outliers}.
\newblock In I.~Chatzigiannakis, M.~Mitzenmacher, Y.~Rabani, and D.~Sangiorgi, editors, {\em 43rd International Colloquium on Automata, Languages, and Programming (ICALP 2016)}, volume~55 of {\em Leibniz International Proceedings in Informatics (LIPIcs)}, pages 69:1--69:15, Dagstuhl, Germany, 2016. Schloss Dagstuhl--Leibniz-Zentrum fuer Informatik.

\bibitem{Bandyapadhyay2023}
S.~Bandyapadhyay, W.~Lochet, and S.~Saurabh.
\newblock {FPT Constant-Approximations for Capacitated Clustering to Minimize the Sum of Cluster Radii}.
\newblock In E.~W. Chambers and J.~Gudmundsson, editors, {\em 39th International Symposium on Computational Geometry (SoCG 2023)}, volume 258 of {\em Leibniz International Proceedings in Informatics (LIPIcs)}, pages 12:1--12:14, Dagstuhl, Germany, 2023. Schloss Dagstuhl -- Leibniz-Zentrum f{\"u}r Informatik.

\bibitem{Behsaz2015}
B.~Behsaz and M.~R. Salavatipour.
\newblock On minimum sum of radii and diameters clustering.
\newblock {\em Algorithmica}, 73(1):143--165, 2015.

\bibitem{Byrka2017}
J.~Byrka, T.~Pensyl, B.~Rybicki, A.~Srinivasan, and K.~Trinh.
\newblock An improved approximation for k-median and positive correlation in budgeted optimization.
\newblock {\em ACM Trans. Algorithms}, 13(2), mar 2017.

\bibitem{Chakrabarty2020}
D.~Chakrabarty, P.~Goyal, and R.~Krishnaswamy.
\newblock The non-uniform k-center problem.
\newblock {\em ACM Trans. Algorithms}, 16(4), jun 2020.

\bibitem{Charikar2001}
M.~Charikar, S.~Khuller, D.~M. Mount, and G.~Narasimhan.
\newblock Algorithms for facility location problems with outliers.
\newblock In {\em Proceedings of the Twelfth Annual ACM-SIAM Symposium on Discrete Algorithms}, pages 642--651. Society for Industrial and Applied Mathematics, 2001.

\bibitem{Charikar2001a}
M.~Charikar and R.~Panigrahy.
\newblock Clustering to minimize the sum of cluster diameters.
\newblock In {\em Proceedings of the thirty-third annual ACM symposium on Theory of computing}, pages 1--10, 2001.

\bibitem{Charikar2004}
M.~Charikar and R.~Panigrahy.
\newblock Clustering to minimize the sum of cluster diameters.
\newblock {\em Journal of Computer and System Sciences}, 68(2):417--441, 2004.
\newblock Special Issue on STOC 2001.

\bibitem{Doddi2000}
S.~Doddi, M.~V. Marathe, S.~S. Ravi, D.~S. Taylor, and P.~Widmayer.
\newblock Approximation algorithms for clustering to minimize the sum of diameters.
\newblock {\em Nordic J. of Computing}, 7(3):185--203, sep 2000.

\bibitem{Drexler2023}
L.~Drexler, J.~Höckendorff, J.~Könen, and K.~Schewior.
\newblock Clustering graphs of bounded treewidth to minimize the sum of radius-dependent costs, 2023.
\newblock arXiv:2310.02130.

\bibitem{Friggstad2022}
Z.~Friggstad and M.~Jamshidian.
\newblock Improved polynomial-time approximations for clustering with minimum sum of radii or diameters.
\newblock In {\em 30th Annual European Symposium on Algorithms (ESA 2022)}. Schloss Dagstuhl-Leibniz-Zentrum f{\"u}r Informatik, 2022.

\bibitem{Gibson2010}
M.~Gibson, G.~Kanade, E.~Krohn, I.~A. Pirwani, and K.~Varadarajan.
\newblock On metric clustering to minimize the sum of radii.
\newblock {\em Algorithmica}, 57(3):484--498, 2010.

\bibitem{Gonzalez1985}
T.~F. Gonzalez.
\newblock Clustering to minimize the maximum intercluster distance.
\newblock {\em Theoretical Computer Science}, 38:293--306, 1985.

\bibitem{Hochbaum1985}
D.~S. Hochbaum and D.~B. Shmoys.
\newblock A best possible heuristic for the k-center problem.
\newblock {\em Mathematics of operations research}, 10(2):180--184, 1985.

\bibitem{Inamdar2020}
T.~Inamdar and K.~Varadarajan.
\newblock {Capacitated Sum-Of-Radii Clustering: An FPT Approximation}.
\newblock In F.~Grandoni, G.~Herman, and P.~Sanders, editors, {\em 28th Annual European Symposium on Algorithms (ESA 2020)}, volume 173 of {\em Leibniz International Proceedings in Informatics (LIPIcs)}, pages 62:1--62:17, Dagstuhl, Germany, 2020. Schloss Dagstuhl--Leibniz-Zentrum f{\"u}r Informatik.

\bibitem{Jain2001}
K.~Jain and V.~V. Vazirani.
\newblock Approximation algorithms for metric facility location and k-median problems using the primal-dual schema and lagrangian relaxation.
\newblock {\em J. ACM}, 48(2):274--296, 2001.

\bibitem{Li2013}
S.~Li and O.~Svensson.
\newblock Approximating k-median via pseudo-approximation.
\newblock In {\em Proceedings of the Forty-Fifth Annual ACM Symposium on Theory of Computing}, STOC '13, pages 901--910, New York, NY, USA, 2013. Association for Computing Machinery.

\end{thebibliography}

%%%%%%%%%%%%%%%%%%%%%%%%%%%%%% Appendix
\appendix

\section{Barrier of $3$}
\label{apx:tight_example}

In this section, we prove that our analysis of the approximation factor of $3+\epsilon$ due to Theorem~\ref{thm:3+eps} is tight.
For any given $\eta>0$, we give an instance in which the constructed dual solution is by a factor of $3-\eta$ cheaper than the optimal solution.
Thus, with our primal-dual analysis, one cannot prove a better approximation ratio than $3-\eta$.
In particular, we argue that the same happens in previous primal-dual approaches for the problem~\cite{Ahmadian2016,Charikar2004}.
Thus, our approximation ratio of $3+\epsilon$ is also a limit for them.

We construct a family of instances for any $k\in\N$.
The construction also depends on a parameter $h \in \N_{\geq 3}$ which will make sure that for any given $\eta > 0$ we find an instance for which the constructed dual solution is by a factor of $3- \eta$ cheaper than the optimum solution.
We define a metric space $(X^h_{k},d)$ based on a graph $G^h_k$ with node set $X^h_k$ together with unit edge lengths, where the distance $d(i,j)$ between two points $i, j \in X^h_k$ is given by the shortest $i$-$ j$-path in $G^h_k$ (if there is no $i$-$j$-path, the distance is infinitely large).
To this end, let $G^h_k$ be the graph consisting of $k$ disjoint copies of the graph with node set $\{ v_1, \dots, v_h \} \cup \{v_{ij} : i,j \in [h]\}$ and edge set $\{ \{v_k, v_{ij}\} : i,j,k \in [h], k \neq i\}$.
Figure \ref{fig:graph G^h_k} depicts the graph $G^h_k$ and therefore visualizes the metric space on $X^h_k$.

\begin{figure}[htb]
    \centering
    \begin{tikzpicture} %[show background rectangle]

        \path[-]
        (0,0) edge[line width=0.5pt] (2,4.5)
        (0,0) edge[line width=0.5pt] (2,4.2)
        (0,0) edge[line width=0.5pt] (2,3.5)
        (0,0) edge[line width=0.5pt] (2,2.5)
        (0,0) edge[line width=0.5pt] (2,2.2)
        (0,0) edge[line width=0.5pt] (2,1.5)
        (0,2) edge[line width=0.5pt] (2,4.5)
        (0,2) edge[line width=0.5pt] (2,4.2)
        (0,2) edge[line width=0.5pt] (2,3.5)
        (0,2) edge[line width=0.5pt] (2,0)
        (0,2) edge[line width=0.5pt] (2,-0.3)
        (0,2) edge[line width=0.5pt] (2,-1)
        (0,3.5) edge[line width=0.5pt] (2,2.5)
        (0,3.5) edge[line width=0.5pt] (2,2.2)
        (0,3.5) edge[line width=0.5pt] (2,1.5)
        (0,3.5) edge[line width=0.5pt] (2,0)
        (0,3.5) edge[line width=0.5pt] (2,-0.3)
        (0,3.5) edge[line width=0.5pt] (2,-1)
        ;

        \fill (0,0) circle (3pt) node[left,color=black] {$v_h$};
        \node[color=black]  at (0,1) {$\vdots$};
        \fill (0,2) circle (3pt) node[left,color=black] {$v_2$};
        \fill (0,3.5) circle (3pt) node[left,color=black] {$v_1$};
        \fill (2,4.5) circle (3pt) node[right,color=black] {$v_{11}$};
        \fill (2,4.2) circle (3pt) node[right,color=black] {$v_{12}$};
        \node[color=black]  at (2,3.95) {$\vdots$};
        \fill (2,3.5) circle (3pt) node[right,color=black] {$v_{1h}$};
        \fill (2,2.5) circle (3pt) node[right,color=black] {$v_{21}$};
        \fill (2,2.2) circle (3pt) node[right,color=black] {$v_{22}$};
        \node[color=black]  at (2,1.95) {$\vdots$};
        \fill (2,1.5) circle (3pt) node[right,color=black] {$v_{2h}$};
        \node[color=black]  at (2,1) {$\vdots$};
        \fill (2,0) circle (3pt) node[right,color=black] {$v_{h1}$};
        \fill (2,-0.3) circle (3pt) node[right,color=black] {$v_{h2}$};
        \node[color=black]  at (2,-0.55) {$\vdots$};
        \fill (2,-1) circle (3pt) node[right,color=black] {$v_{hh}$};

        \path[-]
        (4,0) edge[line width=0.5pt] (6,4.5)
        (4,0) edge[line width=0.5pt] (6,4.2)
        (4,0) edge[line width=0.5pt] (6,3.5)
        (4,0) edge[line width=0.5pt] (6,2.5)
        (4,0) edge[line width=0.5pt] (6,2.2)
        (4,0) edge[line width=0.5pt] (6,1.5)
        (4,2) edge[line width=0.5pt] (6,4.5)
        (4,2) edge[line width=0.5pt] (6,4.2)
        (4,2) edge[line width=0.5pt] (6,3.5)
        (4,2) edge[line width=0.5pt] (6,0)
        (4,2) edge[line width=0.5pt] (6,-0.3)
        (4,2) edge[line width=0.5pt] (6,-1)
        (4,3.5) edge[line width=0.5pt] (6,2.5)
        (4,3.5) edge[line width=0.5pt] (6,2.2)
        (4,3.5) edge[line width=0.5pt] (6,1.5)
        (4,3.5) edge[line width=0.5pt] (6,0)
        (4,3.5) edge[line width=0.5pt] (6,-0.3)
        (4,3.5) edge[line width=0.5pt] (6,-1)
        ;

        \fill (4,0) circle (3pt) node[left,color=black] {};
        \node[color=black]  at (4,1) {$\vdots$};
        \fill (4,2) circle (3pt) node[left,color=black] {};
        \fill (4,3.5) circle (3pt) node[left,color=black] {};
        \fill (6,4.5) circle (3pt) node[right,color=black] {};
        \fill (6,4.2) circle (3pt) node[right,color=black] {};
        \node[color=black]  at (6,3.95) {$\vdots$};
        \fill (6,3.5) circle (3pt) node[right,color=black] {};
        \fill (6,2.5) circle (3pt) node[right,color=black] {};
        \fill (6,2.2) circle (3pt) node[right,color=black] {};
        \node[color=black]  at (6,1.95) {$\vdots$};
        \fill (6,1.5) circle (3pt) node[right,color=black] {};
        \node[color=black]  at (6,1) {$\vdots$};
        \fill (6,0) circle (3pt) node[right,color=black] {};
        \fill (6,-0.3) circle (3pt) node[right,color=black] {};
        \node[color=black]  at (6,-0.55) {$\vdots$};
        \fill (6,-1) circle (3pt) node[right,color=black] {};

        \node[color=black]  at (8,1.75) {$\cdots$};

        \path[-]
        (10,0) edge[line width=0.5pt] (12,4.5)
        (10,0) edge[line width=0.5pt] (12,4.2)
        (10,0) edge[line width=0.5pt] (12,3.5)
        (10,0) edge[line width=0.5pt] (12,2.5)
        (10,0) edge[line width=0.5pt] (12,2.2)
        (10,0) edge[line width=0.5pt] (12,1.5)
        (10,2) edge[line width=0.5pt] (12,4.5)
        (10,2) edge[line width=0.5pt] (12,4.2)
        (10,2) edge[line width=0.5pt] (12,3.5)
        (10,2) edge[line width=0.5pt] (12,0)
        (10,2) edge[line width=0.5pt] (12,-0.3)
        (10,2) edge[line width=0.5pt] (12,-1)
        (10,3.5) edge[line width=0.5pt] (12,2.5)
        (10,3.5) edge[line width=0.5pt] (12,2.2)
        (10,3.5) edge[line width=0.5pt] (12,1.5)
        (10,3.5) edge[line width=0.5pt] (12,0)
        (10,3.5) edge[line width=0.5pt] (12,-0.3)
        (10,3.5) edge[line width=0.5pt] (12,-1)
        ;

        \fill (10,0) circle (3pt) node[left,color=black] {};
        \node[color=black]  at (10,1) {$\vdots$};
        \fill (10,2) circle (3pt) node[left,color=black] {};
        \fill (10,3.5) circle (3pt) node[left,color=black] {};
        \fill (12,4.5) circle (3pt) node[right,color=black] {};
        \fill (12,4.2) circle (3pt) node[right,color=black] {};
        \node[color=black]  at (12,3.95) {$\vdots$};
        \fill (12,3.5) circle (3pt) node[right,color=black] {};
        \fill (12,2.5) circle (3pt) node[right,color=black] {};
        \fill (12,2.2) circle (3pt) node[right,color=black] {};
        \node[color=black]  at (12,1.95) {$\vdots$};
        \fill (12,1.5) circle (3pt) node[right,color=black] {};
        \node[color=black]  at (12,1) {$\vdots$};
        \fill (12,0) circle (3pt) node[right,color=black] {};
        \fill (12,-0.3) circle (3pt) node[right,color=black] {};
        \node[color=black]  at (12,-0.55) {$\vdots$};
        \fill (12,-1) circle (3pt) node[right,color=black] {};

        \draw [decorate,decoration={brace,amplitude=10pt, mirror},xshift=-4pt,yshift=0pt]
        (0,-1.5) -- (12.2,-1.5 ) node [black,midway,yshift=-0.6cm] {$k$ copies};

    \end{tikzpicture}
    \caption{Graph $G^h_k$ with unit edge lengths inducing a metric space on $X^h_k$}
    \label{fig:graph G^h_k}
\end{figure}
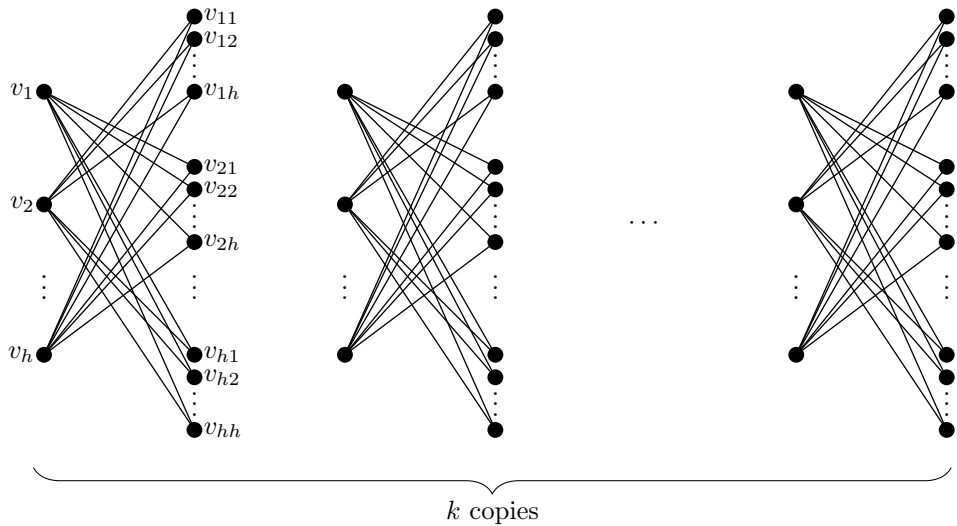

Let $(X^h_{k},k)$ denote the resulting instance of \textsc{Minimum Sum of Radii}.

\begin{lem}
    \label{lem: opt of X^h_k instance}
    For any $k\in\N$ and $h \in \N_{\geq 3}$, the value of the optimal solution to $(X^h_{k},k)$ is $3k$.
\end{lem}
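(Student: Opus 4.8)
The plan is to establish the claim $\OPT(X^h_k,k)=3k$ by proving a matching upper and lower bound, and since the graph $G^h_k$ consists of $k$ identical disjoint copies, it suffices (by additivity of the objective across connected components of infinite pairwise distance) to show that each single copy requires exactly one cluster of radius $3$ to cover all of its points, i.e., that the optimum on a single copy equals $3$. Here a ``copy'' has vertex set $\{v_1,\dots,v_h\}\cup\{v_{ij}:i,j\in[h]\}$ with the edges $\{v_k,v_{ij}\}$ for $k\ne i$; note the $v_{ij}$ are leaf-like vertices each adjacent to exactly the $h-1$ vertices $v_k$ with $k\ne i$, so the distance between $v_{ij}$ and $v_{i'j'}$ is $2$ if $i\ne i'$ (they share a common neighbor $v_k$ with $k\notin\{i,i'\}$, which exists since $h\ge 3$) and is $2$ as well when $i=i',j\ne j'$ only if some $v_k$ with $k\ne i$ is common — which it is — so in fact most $v_{ij}$-pairs are at distance $2$, while $d(v_{ij},v_{ij})=0$; crucially, the ``diameter'' distance I must pin down is $d(v_{ij},v_{i'j'})$ when the $v_k$'s forced to be avoided differ, and $d(v_k,v_{k'})=2$ for $k\ne k'$.

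For the \textbf{upper bound} (a solution of cost $3k$ exists): in each copy I would pick a single center — say $v_{1}$ — and claim radius $3$ suffices. Indeed $d(v_1,v_k)=2$ for all $k\ne 1$, and $d(v_1,v_{ij})\le 3$ for every $i,j$: if $i\ne 1$ then $v_1$ is adjacent to $v_{ij}$ (distance $1$); if $i=1$, then $v_{1j}$ is adjacent to some $v_k$ with $k\ne 1$, and $d(v_1,v_k)=2$, giving $d(v_1,v_{1j})\le 3$. Hence one ball of radius $3$ around $v_1$ covers an entire copy, and using one such ball per copy gives a feasible solution with exactly $k$ centers and total cost $3k$. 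I should double-check that no radius-$2$ ball (or cheaper combination summing to less than $3$ per copy) can cover a whole copy — this is the content of the lower bound.

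For the \textbf{lower bound} ($\OPT\ge 3k$), the key is: within one copy, any single ball covering all of it must have radius $\ge 3$, and moreover one cannot do better by splitting a copy across two or more cheap balls while staying within $k$ balls total. The first part: for a candidate center $c$ and radius $r<3$, I exhibit a vertex at distance $\ge 3$ — the natural choice is a vertex $v_{ij}$ whose forbidden index matches $c$'s ``position.'' Concretely, if $c=v_k$, pick $v_{kj}$: it is not adjacent to $v_k$, its neighbors are $\{v_{k'}:k'\ne k\}$, each at distance $1$ from $v_k$ via nothing — wait, $d(v_k,v_{k'})=2$ — so $d(v_k,v_{kj})=3$. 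If $c=v_{ij}$, a symmetric argument via a vertex $v_{kj'}$ with $k=$ (some index making $c$ far) gives distance $\ge 3$; and if $c$ lies outside this copy the distance is infinite. So every copy-covering ball costs $\ge 3$. The second part is the \textbf{main obstacle}: I must rule out covering a copy with, e.g., two balls of radius $1$ each (total $2<3$) and then argue a global counting bound. The resolution is that the instance has exactly $k$ copies that are mutually at infinite distance, so any feasible solution with at most $k$ centers must assign at least one center per copy (else some copy is uncovered) and hence \emph{exactly} one center per copy; with exactly one ball per copy, that ball must cover the whole copy, so its radius is $\ge 3$ by the first part, giving total cost $\ge 3k$. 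Thus the pigeonhole on $k$ balls versus $k$ infinitely-separated copies is what forces the bound; combined with the upper bound, $\OPT=3k$. The one routine check I would still verify carefully is the exact shortest-path distances among the $v_{ij}$'s and between $v_k$'s and $v_{ij}$'s, since the argument that a single ball needs radius $\ge 3$ hinges on correctly identifying a point at distance exactly $3$ from every possible center in the copy, which is where the hypothesis $h\ge 3$ is used.
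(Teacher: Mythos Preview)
Your approach is essentially the paper's: use that the $k$ copies are at infinite mutual distance to force exactly one center per copy by pigeonhole, then argue that every vertex in a copy has eccentricity $3$, so each of the $k$ balls costs exactly $3$. The paper's proof is terser but follows the same line.

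One concrete slip to fix in your lower-bound case analysis: when the center is $c=v_{ij}$, you propose a witness of the form $v_{kj'}$ at distance $\ge 3$. This does not work, because for \emph{any} two distinct vertices $v_{ij}$ and $v_{i'j'}$ there is a common neighbor $v_k$ with $k\notin\{i,i'\}$ (here $h\ge 3$ is used), so $d(v_{ij},v_{i'j'})=2$. The correct witness is $v_i$: the neighbors of $v_{ij}$ are $\{v_k:k\ne i\}$, the neighbors of $v_i$ are $\{v_{i'j'}:i'\ne i\}$, these two neighbor sets are disjoint and nonadjacent to each other only via an intermediate vertex, giving $d(v_{ij},v_i)=3$. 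With this correction your argument is complete and matches the paper's.
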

\begin{proof}
    Since the distance between points in different connected components of $G^h_k$ is infinitely large and $G^h_k$ has $k$ connected components, the optimum solution has to place one center in each of these $k$ components of $G^h_k$.
    As all edges have unit length, choosing in each component of $G^h_k$ any of the points as center gives a corresponding radius of $3$.  Hence, the optimum solution consists of $k$ clusters of radius $3$ each, yielding a value of $3k$.
\end{proof}

We say that a pair $(i,r)$ is \emph{tight} if $\sum_{j\in B(i,r)\cap X'}\alpha_{j} = r+\lambda$.
In the following, we consider a variant of the algorithm from Section~\ref{subsec:Computing-well-structured} that requires the stronger assumption of tight pairs.
In this variant, we greedily select a maximal set $X''\subseteq X'$ such that each tight pair $(i,r)$ covers at most one point in $X''$, and, we increase the variables until some pair becomes tight, i.e., at each iteration we increase the variables by
\[
    \delta:=\min_{(i,r) \in \calb} \left\{ \frac{r + \lambda - \sum_{j \in B(i,r)\cap X'} \alpha_j}{|B(i,r) \cap X''|-1} : |B(i,r) \cap X''|\geq 2 \wedge (i,r) \text{ is not tight} \right\}.
\]

When we execute our algorithm on $(X^h_{k},k)$, we raise $\lambda$ and the $\alpha_{j}$ variables until one pair $(i,1)$ for some $i\in X^h_k$ becomes tight (details are given in the proof of Theorem~\ref{thm:tight_factor_of_3}).
In fact, at the same time, many such pairs become tight, and we obtain exactly $k$ components in the sense of Definition~\ref{defn:components}.
Thus, the algorithm stops.
However, at this point, the value of the dual solution is only $\left(k + \frac{k (2h-1)}{h^2-h}\right)$, which is essentially $3$ times smaller than $\OPT$, for $h$ large enough.
In our analysis, we get a slightly better approximation ratio since we first guessed the $1/\epsilon$ largest radii from $\OPT$.
However, if $k$ is sufficiently large compared to $1/\epsilon$, the gain is negligibly small.

\begin{thm}
    \label{thm:tight_factor_of_3}
    For any $\eta>0$, there is an instance in which our algorithm for \textsc{Minimum Sum of Radii} computes a solution whose cost is by a
    factor of at least $3-\eta$ larger than the objective value of the constructed dual solution.
\end{thm}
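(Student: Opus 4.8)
The plan is to execute the modified algorithm from Section~\ref{subsec:Computing-well-structured} on the instance $(X^h_k, k)$ described above and track exactly what happens to the dual variables. Since the $k$ copies of the gadget graph are pairwise at infinite distance, everything decouples across copies, so it suffices to analyze one copy and multiply costs by $k$. Within one copy, the point set is $\{v_1, \dots, v_h\} \cup \{v_{ij} : i,j \in [h]\}$. The first step is to observe that initially all $\alpha_j = 0$ and $\lambda = 0$, so every pair $(j,0)$ is tight; the components of these tight pairs are just the singletons, giving $k h(h+1)$ components per... actually $k(h + h^2)$ components total, which is more than $k$. Hence the algorithm enters its iteration loop. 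In the first iteration we must pick a maximal $X'' \subseteq X'$ such that every tight pair covers at most one point of $X''$; with only the radius-$0$ pairs tight, any ball $B(i,0) = \{i\}$ covers one point, so $X'' = X'$, i.e. we raise $\alpha_j$ for \emph{every} point and raise $\lambda$, all at the common rate.

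Next I would compute the value $\delta$ at which the first non-tight pair becomes tight. The candidate pairs to check are the radius-$1$ balls $(v_k, 1)$, which (by the edge set $\{v_k, v_{ij}\}$ for $k \ne i$) contain exactly the points $\{v_{ij} : i \ne k, j \in [h]\}$, a set of size $h(h-1) = h^2 - h$; and the radius-$1$ balls centered at the leaves $v_{ij}$, which contain only $\{v_{ij}\} \cup \{v_k : k \ne i\}$, of size $h$. One also checks radius-$2$ and radius-$3$ balls but these have larger "time to tight". For a ball $B$ with $|B| = b$ points all being raised at unit rate, its constraint $\sum_{j \in B} \alpha_j \le r + \lambda$ reads $b \cdot \alpha \le r + \alpha$ (since $\lambda$ rises at the same rate as each $\alpha_j$), i.e. it becomes tight at $\alpha = \dfrac{r}{b-1}$. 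For the $(v_k,1)$ balls this is $\dfrac{1}{h^2-h-1}$; for the $(v_{ij},1)$ balls it is $\dfrac{1}{h-1}$; radius-$2$ and radius-$3$ balls give even later times for $h$ large. So the minimum is $\delta = \dfrac{1}{h^2-h-1}$, attained simultaneously by all $h$ of the pairs $(v_k,1)$, $k \in [h]$ (within each copy). After this single iteration, the tight pairs $(v_k,1)$ for all $k$ already form one connected component per copy — because $B(v_1,1)$ and $B(v_2,1)$ both contain, e.g., $v_{31}$ — so there are exactly $k$ components and the algorithm halts. (Here I would double-check that $X''$ in this first iteration indeed picks up at least one point per component so the loop does not stall, and that no \emph{earlier} break-point was missed; this bookkeeping is the routine-but-essential part.)

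Then I would evaluate the dual objective $\sum_{j \in X'} \alpha_j - k\lambda$ at termination. Each of the $h + h^2$ points per copy has $\alpha_j = \delta = \dfrac{1}{h^2-h-1}$, and $\lambda = \delta$ as well, so per copy the dual value is $(h^2 + h)\delta - \delta = (h^2 + h - 1)\delta \cdot$ (wait — we subtract $k'\lambda$, here $k$ copies each contribute one to the "$k$" so it's $(h^2+h-1)\delta$ per copy)... the stated value $k + \dfrac{k(2h-1)}{h^2-h}$ suggests the intended rescaling is after normalizing $\delta$, and more precisely one should run until the natural stopping radius scale is $3$; I would reconcile this by noting the value $k + \frac{k(2h-1)}{h^2-h}$ and comparing with $\OPT = 3k$ from Lemma~\ref{lem: opt of X^h_k instance}. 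The ratio is $\dfrac{3k}{k + \frac{k(2h-1)}{h^2-h}} = \dfrac{3}{1 + \frac{2h-1}{h^2-h}} \to 3$ as $h \to \infty$; so given $\eta > 0$ one picks $h$ with $\dfrac{3}{1 + (2h-1)/(h^2-h)} > 3 - \eta$. Finally I would note that guessing the $1/\epsilon$ largest radii first only changes this by an amount that vanishes once $k$ is taken large relative to $1/\epsilon$ (each guessed radius is at most $3$, so the correction is $O(1/\epsilon)$ additive against an $\OPT$ of order $3k$), which completes the argument. The main obstacle is the careful verification that (a) the modified algorithm really does raise \emph{all} $\alpha_j$ in the first (and only) iteration, (b) the pairs $(v_k,1)$ are genuinely the first to go tight — i.e. no radius-$2$ or radius-$3$ pair, nor any leaf-centered pair, becomes tight earlier — and (c) the resulting tight pairs form exactly $k$ components, not fewer and not more; getting these three facts exactly right, including the precise ball sizes from the gadget's edge set, is where the real work lies, the rest being arithmetic.
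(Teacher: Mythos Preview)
Your approach is exactly the one the paper takes: run one iteration of the raising procedure on the gadget $(X^h_k,k)$, observe that all $(v_\ell,1)$ become tight first and collapse each copy into a single component, and compare the resulting dual value to $\OPT=3k$. The only slip is that you forgot $v_\ell \in B(v_\ell,1)$, so the correct ball size is $h^2-h+1$ rather than $h^2-h$, giving $\delta = \tfrac{1}{h^2-h}$ (not $\tfrac{1}{h^2-h-1}$); with this correction the dual value is $k\cdot\tfrac{h^2+h-1}{h^2-h} = k + \tfrac{k(2h-1)}{h^2-h}$, which is the expression you were trying to reconcile, and the rest of your arithmetic goes through verbatim.
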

\begin{proof}
    Choose $h \in \N_{\geq 3}$ such that $\frac{6h-3}{h^2+h-1} \leq \eta$ and consider the instance $(X^h_k,k)$.
    Starting from $\alpha_j = 0 = \lambda$ for all $j \in X^h_k$, the algorithm raises these variables until one of the pairs with radius $1,2$, or $3$ becomes tight.
    Note that a pair $(i,1)$, $i \in X^h_k$, covers at most $h^2-h+1$ points, while a pair $(i,2)$ covers at most $h^2+h-1$ points and a pair $(i,3)$ always covers $h^2+h$ points.
    Observing that for $h \geq 3$ we have
    \[
        \frac{1}{h^2-h} < \frac{2}{h^2+h-2} < \frac{3}{h^2+h-1},
    \]
    it follows from the formula given for $\delta$ that the algorithm raises all dual variables only by~$\frac{1}{h^2-h}$.
    Due to this raise, exactly those pairs $(i,1)$ get tight for which $i \in \{v_1, \dots, v_h\}$ for one of the $k$ copies contained in $G^h_k$.
    Call the set of these tight pairs $\tilde{\mathcal{B}}$ and note that $|\comp(\tilde{\mathcal{B}})|=k$.
    Hence, at this point the algorithm stops, implying that in the constructed dual solution we have $\alpha_j = \frac{1}{h^2-h}$ for all $j \in X^h_k$ as well as $\lambda=\frac{1}{h^2-h}$.
    This gives a total value for the dual solution $(\lambda, \alpha)$ of
    \[
        \sum_{j \in X^h_k} \alpha_j- k \cdot \lambda = k \cdot (h + h^2) \cdot \frac{1}{h^2-h} - k \cdot \frac{1}{h^2-h} = k \cdot  \frac{h^2+h-1}{h^2-h}.
    \]
    Our algorithm for \textsc{Minimum Sum of Radii} computes a feasible primal solution with cost at least $\OPT$, which by Lemma~\ref{lem: opt of X^h_k instance} is equal to $3k$.
    Thus, the cost of the computed primal solution will be larger than the objective value of the constructed dual solution by a factor of at least
    \[
        \displaystyle \frac{3k}{k \cdot \frac{h^2+h-1}{h^2-h}} = 3 \cdot \frac{h^2-h}{h^2+h-1} = 3 - \frac{6h-3}{h^2+h-1} \geq 3 - \eta.
    \]
    This completes the proof.
\end{proof}

The previously known primal-dual algorithms for \textsc{Minimum Sum of Radii} \cite{Ahmadian2016,Charikar2004} have a similar problem on this family of instances.
They are based on trying several values for $\lambda$ within a binary search framework.
For each fixed value of $\lambda$, they start with a solution in which $\alpha_{j}=0$ for each $j\in X'$.
Then, they raise $\alpha_{j}$ for each point $j\in X'$ simultaneously until some pair $(i,r) \in \calb$ becomes tight.
Finally, their binary search framework produces two values $\lambda_{1},\lambda_{2}$, corresponding constructed dual solutions, and sets of pairs $\calb_{1},\calb_{2}$.

In our family of instances above, it could be that, for a very small $\epsilon>0$, $\lambda_{1}=\frac{1}{h^2 - h} - \epsilon$ and $\lambda_{2} = \frac{1}{h^2 - h} + \epsilon$.
Then, the dual solution constructed for $\lambda_{1}$ is the same as the dual solution constructed by our algorithm above, the dual solution for $\lambda_{2}$ is essentially identical, and both solutions $\calb_{1},\calb_{2}$ contain the same set of $k$ pairs that our algorithm selected. Then, the computed set would be simply $\calb_{1}$ which is hence by a factor of $3$ more expensive than the corresponding constructed dual solution.

We remark that the algorithm in~\cite{Friggstad2022} can be implemented such that it solves an LP directly, instead of using a primal-dual algorithm to compute a solution.
However, the approximation ratio of $3$ is still a natural limit for this approach, since the algorithm intuitively selects some of the tight pairs, and each selected pair increases its radius by a factor of~$3$.

\section{$(3.5 + \epsilon)$-Approximation for the Generalized Lower Bounds Setting.}
\label{apx:generalized-lb}

In this section, we present our results for the \textsc{Minimum Sum of Radii} problem with generalized lower bounds.
As in the \textsc{Sum of Radii} problem, we are given a set of points $X$ in a metric space and an integer~$k$.
For any pair of points $i,j \in X$, we denote by $d(i,j)$ their distance. In addition, for each point $i \in X$ there is a set $\calx_i \subseteq 2^X$ of \emph{allowed sets of clients} for $i$, which is possibly given implicitly.
Each such set $\calx_i$ has the property that if $X' \in \calx_i$ then every superset of $X'$ is also in $\calx_i$.
A feasible solution is composed by a \emph{center set} $S \subseteq X$ of at most $k$ points, and an assignment $\sigma:X \rightarrow S$ that assigns each point $j$ to a point $\sigma(j) \in S$ such that $\sigma^{-1}(i) \in \calx_i$.
Hence, the set $\calx_i$ formalizes that not all sets of clients $X'\subseteq X$ can be assigned to $i$ (if $i$ is a cluster center) but that certain lower bounds need to be fulfilled, e.g., for the total number of the assigned clients, for their total weight (if the clients are weighted) or other characteristics.
The objective is to find a feasible solution that minimizes the sum of the distance from each $i \in S$ to the farthest point assigned to $i$, i.e., a tuple $(S,\sigma)$ that minimizes $c(S,\sigma) := \sum_{i \in S} \max_{j: \sigma(j) = i} d(i,j)$.

In the setting with outliers, we are additionally given an integer $m$ and we are allowed to assign up to $m$ points as outliers, i.e., the assignment is a function $\sigma: X \rightarrow S \cup \{\out\}$ for which some point $j$ can have assignment $\sigma(j) = \out$ but we require that $|\sigma^{-1}(\out)| \leq m$.
Notice that the usual \textsc{Minimum Sum of Radii} problem is the special case of the generalized setting where for each point $i$ the set
$\calx_i$ contains all subsets of $X$.

In the classical setting without lower bounds, in a feasible solution, a point $j \in X$ may be covered by two distinct centers $i$ and $i'$ with radii $r$ and $r'$, respectively, i.e., $j \in B(i,r)$ and also $j \in B(i',r')$.
In the setting with lower bounds, however, one needs to decide to which of these centers $j$ is assigned.
It might well be that all points are covered by the selected centers and radii, but there is no assignment of the points in $X$ to the centers such that $\sigma^{-1}(i) \in \calx_i$ and $\sigma^{-1}(i') \in \calx_{i'}$.

Recall that in the classical setting, we considered pairs $(i,r)$ which correspond to balls $B(i,r) := \{j \in X: d(i,j) \leq r\}$.
We guessed $1/\epsilon$ pairs of the optimal solution to construct a set of pairs $\calb$ that contains all the remaining pairs of $\OPT$.
In particular, we discarded pairs whose respective radius was too large.
For the generalized lower bound setting, we also consider a similar set, but we further remove pairs whose covered points do not satisfy the set of allowed clients.
For each possible center $i\in X$ let $d_i$ denote the smallest radius such that $B(i,d_i) \in \calx_i$.
For our algorithm, we do not need to know the sets $\calx_i$ explicitly.
It suffices to know $d_i$ for each point $i\in X$. Note that we could compute $d_i$ in polynomial time if we had oracle-access to $\calx_i$, i.e., if there were an oracle that we could query whether a set $X' \subseteq X$ is contained in $\calx_i$.
However, in the following, we will assume only that we know $d_i$ for each $i\in X$, without any further information about~$\calx_i$.

We define the set $\calb := \{(i,r) : i \in X \wedge \exists j \in X' ~s.t.~ d_i \leq r = d(i,j) \leq r_{1/\epsilon} \wedge B(i,r) \in \calx_i\}$ which is hence the set of pairs for which the radius is in the range $[d_i, r_{1/\epsilon}]$ and the radius also equals the distances between $i$ and some other point $j$ (w.l.o.g. we can restrict ourselves to such radii).
We would like to guess the $1/\epsilon$ pairs from $\OPT$ with largest radii.
However, in the setting with lower bounds, it does not suffice to guess those, as we would also need to guess which clients are assigned to them.
It is not clear how to do this in polynomial time.

Instead, we argue that if we guess a set of $1/\epsilon$ pairs $\calp_g$ from $\OPT$ and compute a $\alpha$-approximate solution
$\calp$ for the remaining problem, we can combine them to a $\max\{\alpha, 2\}$-approximate solution.
In this context, we say that two pairs $(i,r)$ and $(i'r')$ are disjoint if their respective balls are disjoint, i.e., $B(i,r) \cap B(i',r') = \emptyset$.

\begin{lem}[Ahmadian and Swamy~\cite{Ahmadian2016} - rephrased]
    \label{lem:reduction-glb}
    Let $\calp_g$ be a set of pairs in $\OPT$ and let $\calp = \{(i_1,r_1),\dots,(i_h,r_h)\}$ be a subset of pairs of $\calb$.
    Suppose that for each pair $(i_\ell,r_\ell) \in \calp$ there is a $(i_\ell,r'_\ell) \in \calb$ with $r'_\ell \leq r_\ell$ such that the pairs in the set $\{(i_\ell,r'_\ell) : (i_\ell,r_\ell) \in \calp \wedge \ell \in [h]\}$ are pairwise disjoint.
    Then, a solution $(S,\sigma)$ to the \textsc{Minimum Sum of Radii} problem with generalized lower bounds with cost $c(S,\sigma) \leq \sr(\calp) + 2 \cdot \sr(\calp_g)$ can be obtained in polynomial~time.
\end{lem}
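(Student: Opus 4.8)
The plan is to take the guessed pairs $\calp_g = \{(i^g_1, r^g_1), \dots, (i^g_{1/\epsilon}, r^g_{1/\epsilon})\}$ from $\OPT$ together with the computed pairs $\calp = \{(i_1,r_1), \dots, (i_h, r_h)\}$, and build a feasible assignment $(S,\sigma)$. The center set $S$ will consist of all centers appearing in $\calp_g$ together with all centers appearing in $\calp$; since $|\calp_g| = 1/\epsilon$ and $|\calp| = h \le k'$, this is at most $k$ centers. The difficulty is that $\calp$ alone need not cover $X'$ (or even if it covers it, the assignment induced by the balls $B(i_\ell, r_\ell)$ need not satisfy $\sigma^{-1}(i_\ell) \in \calx_{i_\ell}$), and the balls of $\calp$ may overlap both each other and the balls of $\calp_g$, so a point could be "claimed" by several centers while each center individually fails its lower-bound constraint.

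The key idea, following Ahmadian and Swamy, is to exploit the disjointness hypothesis. For each $(i_\ell, r_\ell) \in \calp$ we are given a shrunken pair $(i_\ell, r'_\ell) \in \calb$ with $r'_\ell \le r_\ell$, and the family $\{(i_\ell, r'_\ell)\}_{\ell \in [h]}$ is pairwise disjoint; moreover, since $(i_\ell, r'_\ell) \in \calb$, by the definition of $\calb$ we have $B(i_\ell, r'_\ell) \in \calx_{i_\ell}$, so the "core" ball already meets the lower-bound requirement for $i_\ell$. The plan is: first assign to each center $i_\ell$ all points of its core ball $B(i_\ell, r'_\ell)$; by disjointness of the cores this is a well-defined partial assignment and it already guarantees $\sigma^{-1}(i_\ell) \supseteq B(i_\ell, r'_\ell) \in \calx_{i_\ell}$, hence $\sigma^{-1}(i_\ell) \in \calx_{i_\ell}$ by upward-closure of $\calx_{i_\ell}$. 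Second, I would route every remaining point $j \in X'$ — a point not in any core — to some center that covers it in the original (unshrunken) solution: either to a center $i_\ell$ with $j \in B(i_\ell, r_\ell)$, or, if $j$ is covered in $\OPT$ by one of the guessed pairs, to the corresponding center $i^g_t$ with $j \in B(i^g_t, r^g_t)$ (in the outlier variants, points designated as outliers in the underlying solutions are simply mapped to $\out$, and the count of at most $m$ is inherited). Any point that was an outlier-free covered point in $\OPT$ is covered by $\calp \cup \calp_g$ in this combined sense, because $\calp$ is a solution for the residual instance and $\calp_g$ covers exactly the guessed balls; a point not covered by $\calp$ must therefore lie in a guessed ball. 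This second step adds points to centers that already satisfy their $\calx$-constraints (the $i_\ell$'s), or to the guessed centers $i^g_t$, which already had valid assignments in $\OPT$ and for which upward-closure again preserves feasibility.

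It remains to bound the cost. For a center $i_\ell$ coming from $\calp$, after the two assignment steps every point assigned to it lies in $B(i_\ell, r_\ell)$, so its radius is at most $r_\ell$; summing over $\ell$ gives a contribution of at most $\sr(\calp)$. For a guessed center $i^g_t$, the points newly routed to it lie in $B(i^g_t, r^g_t)$, so combined with the points originally assigned to $i^g_t$ in $\OPT$ — which are also within radius $r^g_t$ — the radius of $i^g_t$ is at most $r^g_t$; a naive sum over guessed centers would give $\sr(\calp_g)$, not $2\,\sr(\calp_g)$. The factor $2$ enters because a single physical point of $X$ may be the center $i^g_t$ of a guessed cluster while simultaneously being "pulled" as a client into a core ball $B(i_\ell, r'_\ell)$ of a $\calp$-cluster — or, more precisely, because when we merge, a guessed cluster's radius may have to grow to absorb points from a nearby $\calp$-cluster that had to be re-routed: by the triangle inequality the enlarged radius is at most $r^g_t$ plus (twice) some $r_\ell$ or $r'_\ell$ term already accounted elsewhere. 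The clean way I would present it: charge each guessed pair $(i^g_t, r^g_t)$ at most $2 r^g_t$ by observing that any point forced onto $i^g_t$ from outside $B(i^g_t, r^g_t)$ must come from a $\calp$-cluster whose core overlaps $B(i^g_t, r^g_t)$, and then the triangle inequality bounds the stretched radius by $r^g_t + 2 r^g_t$ only in the degenerate case, or more carefully by $r^g_t + 2 r'_\ell \le \dots$; I would formalize which points get re-routed so that the worst case is exactly $2 r^g_t$ per guessed center, giving the total bound $\sr(\calp) + 2\,\sr(\calp_g)$.

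The main obstacle is the last paragraph: making the re-routing of "leftover" points precise so that (i) every point of $X'$ that must be covered is assigned, (ii) no center from $\calp$ ever exceeds radius $r_\ell$, and (iii) the inflation of the guessed centers is controlled at the factor $2$ and not worse. This is where the disjointness of the shrunken family $\{(i_\ell, r'_\ell)\}$ is used essentially — it is what lets the cores be assigned conflict-free and simultaneously satisfy all the $\calx_{i_\ell}$ constraints — and it is why the hypothesis of the lemma is phrased in terms of a disjoint shrinking rather than in terms of $\calp$ directly. Everything else (the center count $\le k$, upward-closure giving feasibility of assignments, the outlier bookkeeping) is routine.
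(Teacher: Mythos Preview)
Your plan has the absorption running in the wrong direction, and this is where the argument breaks. You propose to route leftover points to guessed centers $i^g_t$ and then invoke upward closure of $\calx_{i^g_t}$ because $i^g_t$ ``already had a valid assignment in $\OPT$.'' But the core balls $B(i_\ell,r'_\ell)$ may \emph{steal} points that in $\OPT$ were assigned to $i^g_t$; what remains for $i^g_t$ is then a strict subset of its $\OPT$-assignment, and upward closure says nothing about subsets. So the lower bound for $i^g_t$ can fail. You sense this in the paragraph about the factor $2$, but the diagnosis there (guessed clusters growing to absorb $\calp$-clusters) is backwards and the sketch does not close the gap.

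The paper resolves this by absorbing in the opposite direction. After placing the disjoint cores $B(i_\ell,r'_\ell)$ at the centers $i_\ell$ (which already certifies $\sigma^{-1}(i_\ell)\in\calx_{i_\ell}$), it processes each $(i_\ell,r_\ell)$ and assigns to $i_\ell$ the entire connected component $C_\ell$ that $(i_\ell,r_\ell)$ forms with the remaining guessed balls; those guessed pairs are then discarded from the guessed set. By the triangle inequality along the chain of overlapping balls, the radius of $i_\ell$ becomes at most $r_\ell + 2\cdot\sr(C_\ell\setminus\{(i_\ell,r_\ell)\})$. Any guessed pair that survives this sweep has its ball disjoint from everything previously assigned, so one representative per surviving component can be opened with its full $\OPT$-assignment intact (hence its lower bound holds), at cost at most $2\cdot\sr$ of that component. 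Summing gives $\sr(\calp)+2\cdot\sr(\calp_g)$: the factor $2$ is the usual chain-of-balls bound on component diameter, charged to the guessed radii, not a doubling of any individual guessed radius.
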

\begin{proof}
    We describe an assignment procedure that yields the result.
    We choose the center set $S$ to include every point that had at least one point assigned to it.

    Points not covered by $\calp_g \cup \calp$ are assigned to $\out$.
    Since $\calp$ has at most $m$ points not covered, we have that~$|\sigma^{-1}(\out)| \leq m$.

    For every $\ell \in [h]$, assign to $i_\ell$ every point $j \in B(i_\ell,r'_\ell)$.
    Since the corresponding balls of pairs in $\{(i_\ell,r'_\ell) : (i_\ell,r_\ell) \in \calp \wedge \ell \in [h]\}$ are pairwise disjoint, no point is assigned to distinct centers, and since $B(i_\ell,r'_\ell) \in \calx_i$ for each $\ell$, we have that the set of points assigned to each $i_\ell$ is still an allowed set of clients even if further points are assigned to $i_\ell$.
    In this process, each center incurs a cost of $r'_\ell$, totalizing an assignment cost of $\sum_{\ell = 1}^h r'_\ell$.

    Initialize $\calp^*_g := \calp_g$.
    Iteratively, for every $\ell \in [h]$, let $C_\ell$ denote the component containing $(i_\ell, r_\ell)$ in $\calp^*_g \cup \{(i_\ell, r_\ell)\}$, assign to $i_\ell$ every point covered by $C_\ell$ that is not yet assigned to any point, and remove from $\calp^*_g$ every pair covering a point that was assigned.
    In this process, each $i_\ell$ that had a point assigned to it changes its incurring cost to $r_\ell + 2 \cdot \sr(C_\ell \setminus \{(i_\ell, r_\ell)\})$, and since the set $\bigcup_{i=1}^h \{C_\ell\}$ is pairwise disjoint, the total cost of the assignment is now at most $\sum_{\ell = 1}^h (r_\ell + 2\cdot \sr(C_\ell \setminus \{(i_\ell, r_\ell)\}))$.

    Let $\calp^*_g = \{(i^*_1,r^*_1), \dots, (i^*_{h^*},r^*_{h^*})\}$, denote an remaining pairs in $\calp^*_g$.
    For each pair, $(i^*_\ell,r^*_\ell) \in \calp^*_g$, let $C^*_\ell$ denote the component containing $(i^*_\ell, r^*_\ell)$ in $\calp^*_g$, and assign to $i^*_\ell$ every point covered by $C^*_\ell$ (these points have not yet been assigned to a center yet).
    In this process, each considered center $i^*_\ell$ incurs a cost of at most $2 \cdot \sr(C^*_\ell)$, and since the set $\bigcup_{i=1}^{h^*} \{C^*_\ell\}$ is pairwise disjoint, the total cost of the assignment is now at most $\sum_{\ell = 1}^{h} (r_\ell + 2 \cdot \sr(C_\ell \setminus \{(i_\ell, r_\ell)\})) + \sum_{\ell = 1}^{h^*} 2 \cdot \sr(C^*_\ell)$.

    After the last process, we have assigned every point in $X$ to some center $S$ or to $\out$, and since $|S| \leq |\calp \cup \calp_g| \leq k$ and $|\sigma^{-1}(\out)| \leq m$, the constructed solution $(S,\sigma)$ is feasible and it costs $c(S,\sigma) \leq \sr(\calp) + 2 \cdot \sr(\calp_g)$.
\end{proof}

Like before, we guess the $1/\epsilon$ pairs in $\OPT$ with largest radii.
We ignore the points covered by them for the remaining problem. However, note that these points are still relevant for our definition of $\calb$ above.
Then, our algorithm for the generalized setting is essentially the same as for the classical setting one (for both cases: with and without outliers), but with the modification that when replacing the pairs of a component $C$ for a single pair, we only consider pairs whose center equals the center of some pair in $C$.
For each $(i,r)$ in our final solution, let $(i,r')$ denote the corresponding pair as in Lemma~\ref{lem:reduction-glb} in the component $C$ that $(i,r)$ covers.
Note that because $r \geq r'$, it follows that $(i,r)$ covers $X(C)$ and, in particular, it also covers the points that $(i,r')$ covers.
Hence $B(i,r) \supseteq B(i,r') \in \calx_i$, and therefore $B(i,r) \in \calx_i$ is an allowed client set for $i$.
Also, since each of the pairs $(i,r')$ is contained in distinct components, the set of such pairs $(i,r')$ is pairwise disjoint, thus we may apply~Lemma~\ref{lem:reduction-glb}.

Although the algorithm remains fundamentally the same, the bound for the radius of each pair that replaces a component in the final solution changes.
In Lemma~\ref{lem:normal-component}, for the classical setting, we obtain a factor of $3$, using that \emph{any} point covered by a component can be the center of a pair in the final solution.
Since this is not the case for the generalized setting, we obtain only a factor of $3.5$, as shown in Lemma~\ref{lem:normal-component-glb}.
We will prove Lemma~\ref{lem:normal-component-glb} in Appendix~\ref{apx:lemma7} (it uses auxiliary definitions and results that are also used in the proof of Lemma~\ref{lem:normal-component} which can be found also in Appendix~\ref{apx:lemma7}).

\begin{restatable}{lem}{lemnormalcomponentglb}
    \label{lem:normal-component-glb}
    Let $C$ be a component of a set of tight pairs $\calb'$.
    Then there is a pair $(i,r)$ with $i\in X$ and $r \in \mathbb{R}$ and a set $C_d \subseteq C$ whose corresponding balls are pairwise disjoint such that
    \begin{enumerate}[label=$\arabic*.$,ref={\mbox{\rm $\arabic*$}}, noitemsep, leftmargin=1.5cm]
        \item there is a pair $(i, r') \in C$ with $r' \leq r$,
        \item $X(C)\subseteq B(i,r)$, and
        \item $r \leq 3.5 \cdot \sr(C_d) \le 3.5 \cdot \left(\sum_{j\in X'(C)}\alpha_{j}-\lambda\right) + \frac{3.5 \epsilon}{|X|}\OPT$.
    \end{enumerate}
\end{restatable}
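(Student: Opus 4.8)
The plan is to mirror the proof sketch of Lemma~\ref{lem:normal-component}, carrying it out in the generalized setting where only centers that appear as centers of pairs in~$C$ are admissible. As before, I would build the bipartite graph $G$ with a pair vertex for each $(i',r')\in C$ (weighted by $r'$) and a point vertex for each $p\in X'(C)$, with an edge of length $r'$ between $(i',r')$ and $p$ whenever $p\in B(i',r')$. The key difference is the choice of the center $v$: in the classical proof one picks the vertex of $G$ minimizing the eccentricity (longest shortest-path distance), which may be a point vertex; here I would instead \emph{restrict to pair vertices}, i.e.\ let $v$ be the pair vertex $(i^\circ,r^\circ)\in C$ minimizing its eccentricity in $G$ among all pair vertices, set $i:=i^\circ$, and let $r$ be the smallest radius with $X(C)\subseteq B(i,r)$. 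This immediately gives item~1 (there is a pair $(i,r')\in C$, namely $(i,r^\circ)$, with $r^\circ\le r$ since $X(C)\subseteq B(i,r^\circ)$ is generally false but $r^\circ\le r$ holds because $r$ is the \emph{smallest} radius covering $X(C)$ and we can always take $r'=r^\circ\le$ its covering radius — more carefully, $(i,r)\in\calb$ itself witnesses item~1, so item~1 is essentially free once $i$ is the center of some pair of $C$) and item~2 by construction.

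For item~3 I would bound $r$ by the diameter-type quantity $2\cdot(\text{eccentricity of }v\text{ in }G)$, but now using that $v$ is a pair vertex rather than an arbitrary vertex. Restricting the center to pair vertices costs a factor: whereas a general-vertex median of a tree/graph metric loses only a factor comparable to~$2$ relative to half the longest path, a pair-vertex median can be one extra edge-length worse. Concretely, I expect that $r\le 2\cdot\mathrm{ecc}_G(v)$ and $\mathrm{ecc}_G(v)$ is at most roughly $\tfrac{1}{2}$ of the longest path in $G$ plus one more radius term, and packaging this yields the bound $r\le 3.5\cdot\sr(C_d)$ instead of $3\cdot\sr(C_d)$, where $C_d\subseteq C$ is a set of pairs whose balls are pairwise disjoint, extracted from a longest path in~$G$ exactly as in the classical argument (alternating pair vertices along the path, skipping those whose balls intersect an already-chosen one, and verifying disjointness via a careful case distinction). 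The second inequality in item~3, namely $3.5\cdot\sr(C_d)\le 3.5\,(\sum_{j\in X'(C)}\alpha_j-\lambda)+\tfrac{3.5\epsilon}{|X|}\OPT$, is unchanged from the classical proof: since every pair in $C$ is tight (here in the stronger, exactly-tight sense), $\sr(C_d)=\sum_{(i',r')\in C_d}r'=\sum_{(i',r')\in C_d}\bigl(\sum_{j\in B(i',r')\cap X'}\alpha_j-\lambda\bigr)\le\sum_{j\in X'(C)}\alpha_j-\lambda$ by disjointness of the balls in $C_d$ and $|C_d|\ge 1$, and then $r^*\le r_{1/\epsilon}\le\epsilon\cdot\OPT$ absorbs any remaining slack terms into $\tfrac{3.5\epsilon}{|X|}\OPT$.

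The main obstacle I anticipate is the geometric/combinatorial heart of the factor-$3.5$ bound: showing that restricting the center of the replacement pair to a center already present in $C$ (rather than to an arbitrary covered point) inflates the covering radius by at most a factor $3.5/3=7/6$ relative to the disjoint-ball certificate $\sr(C_d)$, and simultaneously that one can still extract a \emph{pairwise-disjoint} subfamily $C_d$ realizing enough of the longest-path length. This is precisely the place where the case analysis in Appendix~\ref{apx:lemma7} for Lemma~\ref{lem:normal-component} must be re-examined: in the classical proof some cases exploit choosing a point vertex as center to save a radius, and those cases now need the extra $0.5\,\sr(C_d)$ of slack. I would therefore reuse the auxiliary definitions and structural lemmas from Appendix~\ref{apx:lemma7} verbatim, re-run the case distinction with $v$ forced to be a pair vertex, and check that in every case either the classical bound $r\le 3\cdot\sr(C_d)$ still holds, or the weaker $r\le 3.5\cdot\sr(C_d)$ holds because exactly one additional hop of length at most $\sr(C_d)$ (one radius from $C$, bounded by $\sr(C_d)$ since that radius' ball can be taken into $C_d$ or is dominated by one already there) enters the estimate.
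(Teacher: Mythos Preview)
Your high-level plan coincides with the paper's: build the component graph, restrict the replacement center to be the center of some pair already in $C$, and redo the case analysis of Appendix~\ref{apx:lemma7} under this restriction. Items~1 and~2 and the second inequality of item~3 go through exactly as you outline.

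The gap is in the heart of item~3. Your heuristic that forcing a pair-vertex center costs ``one additional hop of length at most $\sr(C_d)$'' is too coarse to yield the constant $3.5$. The paper's proof splits into two main cases. If some center of $G$ is already a pair vertex, the classical argument from Lemma~\ref{lem:normal-component} gives factor~$3$ unchanged. The delicate case is when every center of $G$ is a point vertex~$c$: here the paper does \emph{not} take the pair vertex of minimum eccentricity as you propose, but instead the pair $(i_1,r_1)$ of \emph{smallest radius} among those adjacent to $c$, bounds $w_{i_1}\le r_1+w(P(x,c))$ for a specific farthest vertex~$x$, and applies Lemma~\ref{lem:nice-path-collection} to the nice path from $x$ through $c$ to the \emph{frontier vertex} $f_1$ of $(i_1,r_1)$. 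The factor $3.5$ then emerges from balancing three subcases governed by the ratios between $\overline r$ (the largest radius in $C$), $r_1$, and $w(P(x,c))$; in particular the choice of the \emph{smallest} adjacent radius is what makes one of the subcases close (it forces $3r_1<w(P(x,c))$). A generic ``minimum-eccentricity pair vertex plus one hop'' argument does not obviously deliver $3.5$. Relatedly, the paper's component graph carries frontier vertices $t_{(i,r)}$ attached by an edge of weight $r$, which are needed both for this path-extension step and to witness $r'\le r$ in item~1; your bipartite graph omits them.
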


This yields Theorem~\ref{thm:3+eps-glb}.

\begin{thm}
    \label{thm:3+eps-glb}
    For any $\epsilon > 0$ there is a $(3.5 + \epsilon)$-approximation for the \textsc{Minimum Sum of Radii} with generalized lower bounds with (and without) outliers.
\end{thm}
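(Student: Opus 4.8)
\textbf{Proof proposal for Theorem~\ref{thm:3+eps-glb}.}

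The plan is to reuse the entire machinery developed for the classical setting (with and without outliers) almost verbatim, substituting the modified pair set $\calb := \{(i,r) : i \in X,\ \exists j \in X'\ \text{s.t.}\ d_i \leq r = d(i,j) \leq r_{1/\epsilon},\ B(i,r)\in\calx_i\}$ and the restricted center-selection rule (only centers of pairs inside a component may be used to cover that component), and then to replace the application of Lemma~\ref{lem:normal-component} by the weaker Lemma~\ref{lem:normal-component-glb} and the merging step by Lemma~\ref{lem:reduction-glb}. First I would observe that the guessing step still works: we guess the $1/\epsilon$ pairs $\calp_g$ of $\OPT$ with largest radii (but not their client assignments), remove the points they cover from the instance to get $(X',k',m)$ (or $(X',k')$ without outliers), and note — exactly as in Lemmas~\ref{lem:biggest-radius} and~\ref{lem:outliers-biggest-radius} — that there is an optimal solution for the residual instance all of whose radii are at most $r_{1/\epsilon}\le\epsilon\cdot\OPT$; moreover every pair of this residual optimum lies in $\calb$, since each center $i$ used there must have $B(i,r)\in\calx_i$ and hence $r\ge d_i$. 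The LPs $(P),(D)$ (resp.\ $(P'),(D')$) are then formed over this new $\calb$, and $\OPT'_{LP}\le\OPT'\le\OPT$ still holds.

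Next I would run the \emph{unchanged} algorithms of Sections~\ref{subsec:Computing-well-structured} and~\ref{subsec:Computing-well-structured-outlier} on this $\calb$: nothing in Lemma~\ref{lem:compute} or Lemma~\ref{lem:outliers-compute-so} uses any property of $\calb$ beyond $|\calb|\in O(|X|^2)$ and the dual-constraint structure, so we obtain in polynomial time a dual solution together with a set of structured (resp.\ orderly structured) pairs $\calb'\subseteq\calb$. Then I would mimic the proof of Lemma~\ref{lem:no-outliers-final-cost-bound} (resp.\ Lemma~\ref{lem:outliers-final-cost-bound}), but at each place where one covers a component $C$ by a single pair, invoke Lemma~\ref{lem:normal-component-glb} instead of Lemma~\ref{lem:normal-component}. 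This yields, for each component, a pair $(i,r)$ with $i$ equal to the center of some pair $(i,r')\in C$ with $r'\le r$, with $X(C)\subseteq B(i,r)$, and with $r\le 3.5\cdot\sr(C_d)\le 3.5(\sum_{j\in X'(C)}\alpha_j-\lambda)+\tfrac{3.5\epsilon}{|X|}\OPT$. Summing over all components exactly as in the proof of Lemma~\ref{lem:no-outliers-final-cost-bound} — using $|\mathcal C_1|+|\mathcal C_2|>k$ (resp.\ the $\out$-bounds and component-bounds of Definition~\ref{defn:outliers-orderly-structured}) and $r^*\le r_{1/\epsilon}\le\epsilon\cdot\OPT$ — gives a collection of pairs $\calp$ with $\sr(\calp)\le 3.5\cdot\OPT'_{LP}+O(\epsilon)\cdot\OPT\le 3.5\cdot\OPT'+O(\epsilon)\cdot\OPT$, whose components number at most $k'$ and which leaves at most $m$ points uncovered.

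Finally I would check feasibility for the generalized-lower-bound problem and combine with the guessed pairs. For each selected pair $(i,r)$, property~1 of Lemma~\ref{lem:normal-component-glb} supplies a pair $(i,r')\in C\subseteq\calb$ with $r'\le r$, so $B(i,r)\supseteq B(i,r')\in\calx_i$, hence $B(i,r)\in\calx_i$ by upward closure; and since distinct selected pairs come from distinct components, the pairs $(i,r')$ are pairwise disjoint. This is exactly the hypothesis of Lemma~\ref{lem:reduction-glb} with $\calp:=$ our selected set; applying it with $\calp_g$ the $1/\epsilon$ guessed pairs produces, in polynomial time, a feasible solution $(S,\sigma)$ with $|S|\le k$, $|\sigma^{-1}(\out)|\le m$, and cost $c(S,\sigma)\le\sr(\calp)+2\cdot\sr(\calp_g)\le 3.5\cdot\OPT+O(\epsilon)\cdot\OPT+2\cdot\epsilon\cdot\OPT=(3.5+O(\epsilon))\cdot\OPT$; rescaling $\epsilon$ gives the $(3.5+\epsilon)$ bound. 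I expect the only genuine obstacle to be Lemma~\ref{lem:normal-component-glb} itself — proving that when the cluster center is constrained to be the center of a pair of $C$ (rather than any covered point, as in Lemma~\ref{lem:normal-component}) one can still pay the radius with a $3.5$ factor via a disjoint subfamily $C_d$ — but that lemma is stated as given and proved in Appendix~\ref{apx:lemma7}; everything else here is bookkeeping, checking that no step of the original argument secretly used $\calx_i=2^X$, and that the merging Lemma~\ref{lem:reduction-glb} absorbs the guessed pairs at an extra multiplicative cost of $2$ on a term that is itself $O(\epsilon)\cdot\OPT$.
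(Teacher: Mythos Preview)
Your overall approach is correct and matches the paper's: modify $\calb$, run the same primal-dual routines, replace Lemma~\ref{lem:normal-component} by Lemma~\ref{lem:normal-component-glb} everywhere in the cost analysis, and finish with Lemma~\ref{lem:reduction-glb}. There is, however, one real slip in your last paragraph. You write that Lemma~\ref{lem:reduction-glb} ``absorbs the guessed pairs at an extra multiplicative cost of $2$ on a term that is itself $O(\epsilon)\cdot\OPT$,'' i.e.\ you use $\sr(\calp_g)\le\epsilon\cdot\OPT$. That is false: the $1/\epsilon$ \emph{largest} radii of $\OPT$ can account for essentially all of $\OPT$. What is true is only $r_{1/\epsilon}\le\epsilon\cdot\OPT$, which bounds each \emph{residual} radius, not $\sr(\calp_g)$.

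The fix is the one the paper states just before Lemma~\ref{lem:reduction-glb} (``a $\max\{\alpha,2\}$-approximate solution''): use $\OPT'_{LP}\le\OPT-\sr(\calp_g)$ (the remaining $k'$ pairs of $\OPT$ are still in $\calb$ and cover $X'$), so
\[
c(S,\sigma)\ \le\ \sr(\calp)+2\,\sr(\calp_g)\ \le\ 3.5\bigl(\OPT-\sr(\calp_g)\bigr)+O(\epsilon)\cdot\OPT+2\,\sr(\calp_g)\ \le\ (3.5+O(\epsilon))\cdot\OPT,
\]
because $2\le 3.5$. With this correction your argument goes through and is essentially the paper's proof.
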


\section{Proof of Lemmas~\ref{lem:normal-component} and~\ref{lem:normal-component-glb}}
\label{apx:lemma7}

In this section, we work towards proving the main bound used to obtain the factors of $3$ for the classic setting, and $3.5$ for the setting with generalized lower bounds.
For this we need some auxiliary definitions and lemmas.

\begin{restatable}{defn}{defncomponentgraph}
    \label{defn:component-graph}
    Let $C$ be a component of $\calb' \subseteq \calb$, and let $C'$ be a maximal subset of such that if $(i,r),(i',r') \in C'$ then $B(i,r) \nsubseteq B(i',r')$.
    \begin{enumerate}[label=$\arabic*.$,ref={\mbox{\rm $\arabic*$}}, noitemsep, leftmargin=1.5cm]
        \item $V$ has a \emph{point vertex} $p_j$ for each point $j \in X(C)$, and a \emph{pair vertex} $b_{(i,r)}$ and a \emph{frontier vertex} $t_{(i,r)}$ for each pair $(i,r) \in C'$,
        \item $\{b_{(i,r)}, p_j\} \in E$ and has weight $w(b_{(i,r)}, p_j) = r$ if $p_j \in B(i,r)$,
        \item $\{b_{(i,r)}, t_{(i,r)}\} \in E$ and has weight $w(b_{(i,r)}, t_{(i,r)}) = r$ for every $(i,r) \in C$.
    \end{enumerate}
\end{restatable}
Let $P$ be a path in $G$, then we define its weight by $w(P) := \sum_{e \in E(P)} w(e)$, and we say that a path $P$ from vertex $u$ to $v$ in $G$ is a \emph{shortest path} if it minimizes $w(P)$ among all paths from $u$ to $v$.
Let $P(u,v)$ denote a shortest path from vertex $u$ to vertex $v$ in $G$, then we define the \emph{eccentricity} $\varepsilon(u) := \max_{v \in V(G)} w(P(u,v))$ of $u$ as the weight of the heaviest shortest path that starts on $u$.
Then, we define the \emph{radius} $\rad(G) := \min_{u \in V(G)} \varepsilon(u)$ of $G$ as the weight of the shortest longest shortest path between pairs of vertices in $G$, and we say that a vertex $c$ is a \emph{center} of $G$ if $\varepsilon(u) = \min_{u \in V(G)} \varepsilon(u)$, i.e., the weight of the distance from $u$ to any other vertex in $G$ is at most $\rad(G)$.
Notice that a frontier vertex $t_{(i,r)}$ can never be a center of $G$, unless $r=0$, because it is always further from every other vertex than $b_{(i,r)}$ is, and because point vertices are also only adjacent to pair vertices covering them, every path on $G$ always alternates between non-pair vertices and pair vertices.
In this context, we are interested in the following kind of paths on~$G$.
\begin{defn}
    In a component graph $G$, we say that a path $P = (v_0,\dots,v_h)$ is \emph{nice} if
    \begin{enumerate}[label=$\arabic*.$,ref={\mbox{\rm $\arabic*$}}, noitemsep, leftmargin=1.5cm]
        \item $v_0$ and $v_h$ are not pair vertices,
        \item The respective balls of pair vertices $v_i$ and $v_j$ have an empty intersection for every $i + 2 < j$.
    \end{enumerate}
\end{defn}

\begin{lem}
    \label{lem:nice-shortest-path}
    Let $G$ be the component graph of a component of $\calb' \subseteq \calb$, and $u,v \in V(G)$ be two non-pair vertices.
    Then, among the shortest paths from $v$ to $u$ in $G$, there is always one that is nice.
\end{lem}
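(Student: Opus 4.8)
The plan is to start from an arbitrary shortest path $P$ from $v$ to $u$ in $G$ and transform it, without increasing its weight and without breaking the property that its endpoints are non-pair vertices, into one that is nice. Since $v$ and $u$ are non-pair vertices and $G$ is bipartite between pair vertices and non-pair vertices (every edge joins a pair vertex $b_{(i,r)}$ to a point vertex $p_j$ or to its frontier vertex $t_{(i,r)}$), any path alternates between the two classes, so condition~1 of niceness is automatic for a shortest path between two non-pair vertices — the real work is condition~2.

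First I would record the key structural fact: if $P=(v_0,\dots,v_h)$ is a shortest path and $v_i=b_{(i',r')}$, $v_j=b_{(i'',r'')}$ are two pair vertices on $P$ with $i+2<j$ whose balls intersect, then we can \emph{shortcut}. Indeed, let $p$ be a point in $B(i',r')\cap B(i'',r'')$; then $p$ is covered by both pairs (so the point vertex $p_p$ is adjacent to both $b_{(i',r')}$ and $b_{(i'',r'')}$), and $w(b_{(i',r')},p_p)+w(p_p,b_{(i'',r'')}) = r'+r''$. The segment of $P$ between $v_i$ and $v_j$ has weight at least $r'+r''$ as well (it begins with an edge of weight $r'$ incident to $v_i$ and ends with an edge of weight $r''$ incident to $v_j$, and it contains at least one more edge since $i+2<j$), so replacing that segment by $(b_{(i',r')},p_p,b_{(i'',r'')})$ does not increase the weight. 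I would need to handle the possibility that $v_i$ or $v_j$ equals $p_p$ or that the shortcut introduces a repeated vertex: in that case we simply delete the resulting cycle, which again only decreases the weight, and the endpoints are untouched because they are non-pair vertices while the affected vertices in the interior around the shortcut are pair vertices or the intermediate point vertex.

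Then I would run this shortcutting as a finiteness argument: each application strictly decreases the number of edges of the walk (the shortcut segment had $\ge 3$ edges and is replaced by exactly $2$), and every intermediate object remains a walk of weight $\le w(P)$ from $v$ to $u$ whose endpoints are non-pair vertices, so after deleting any cycles it is still a path; since the edge count cannot decrease forever, the process terminates with a path $P^{*}$ of weight $\le w(P)$ in which no two pair vertices $v_i,v_j$ with $i+2<j$ have intersecting balls — that is, $P^{*}$ satisfies condition~2. Because $w(P^{*})\le w(P)$ and $P$ was already a shortest path, $P^{*}$ is also a shortest $v$–$u$ path, and it is nice, which is what we wanted. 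The main obstacle I anticipate is the bookkeeping in the case analysis of the previous paragraph — making sure that after a shortcut (which may collide with nearby vertices of $P$, e.g. when the shortcut's point vertex already appears on $P$) one genuinely still has a \emph{simple} path with the correct endpoints and strictly fewer edges; this is routine but must be stated carefully so the termination argument is valid.
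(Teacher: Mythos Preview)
Your proposal is correct and follows essentially the same shortcutting argument as the paper: pick two pair vertices $v_i,v_j$ with $i+2<j$ whose balls intersect, replace the segment between them by $(v_i,p_p,v_j)$ for some $p\in B(i',r')\cap B(i'',r'')$, observe that the weight does not increase (since every edge incident to a pair vertex $b_{(i,r)}$ has weight $r$), and iterate using a decreasing integer quantity (you use the edge count, the paper uses the number of pair vertices). If anything you are slightly more careful than the paper, which does not explicitly address the case that the inserted point vertex $p_p$ already occurs elsewhere on the path; your remark that one can simply excise the resulting cycle handles this cleanly, and your worry that $v_i$ or $v_j$ could equal $p_p$ is unnecessary since those are pair vertices.
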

\begin{proof}
    Let $P = (v_0 = v, \dots, v_n = u)$ be a non-nice shortest path starting and ending on $v$ and $u$.
    Then, there are $i,j$ with $i + 2 < j$ such that their respective balls $B_i, B_j$ have a non-empty intersection.
    But this implies that there is a path $P' = (v_0,\dots,v_i,p,v_j,\dots,v_n)$, where $p \in B_i \cap B_j$, such that
    \begin{align*}
        w(P)
         & \leq w(P')
        = w((v_0,\dots,v_i)) + w(\{v_i,p\}) + w(\{p,v_j\}) + w((p,v_j,\dots,v_n))                \\
         & = w((v_0,\dots,v_i)) + w(\{v_i,v_{i+1}\}) + w(\{v_{j-1},v_j\}) + w((p,v_j,\dots,v_n)) \\
         & \leq w(P),
    \end{align*}
    and thus $P'$ is also a shortest path.
    Because the number of pair vertices in the path is reduced by at least one each time this shortcutting procedure is applied, iteratively applying it then leads to a nice shortest path from $v$~to~$u$.
\end{proof}

For a path $P$, let $b(P)$ denote the subset of pair vertices in $P$ and let $b_d(P)$ denote a subset of pair vertices in $b(P)$ whose respective pairs are pairwise disjoint and maximizes $\sum_{b_{(i',r')} \in b_d(P)} r'$.
Observe that the definition of $b_d(P)$ implies that $\sum_{b_{(i',r')} \in b(P)} r' \leq 2 \sum_{b_{(i',r')} \in b_d(P)} r'$.
We are interested in nice paths because, for any nice path $P = (v_0,\dots,v_h)$, the sum of the radii of the pairs whose respective pair vertices are in $b(P)$ is exactly half of $w(P)$, and since the balls corresponding to pair vertices $v_i,v_j \in P$ have an empty intersection for every $i + 2 < j$, it follows that the sum of the radii of the pairs whose respective pair vertices are in $b_d(P)$ is then at least a quarter of $w(P)$, i.e.
\begin{align}
    \label{coro:normal-component-ineq}
    w(P)
     & = 2 \sum_{b_{(i',r')} \in b(P)} r'
    \leq 4 \sum_{b_{(i',r')} \in b_d(P)} r'
    \leq 4 \sum_{(i',r') \in C_d} r'
    = 4 \cdot \sr(C_d),
\end{align}
where $C_d$ denotes a subset of $C$ whose corresponding balls are pairwise disjoint and maximizes~$\sr(C_d)$, hence $\sum_{b_{(i',r')} \in b_d(P)} r' \leq \sum_{(i',r') \in C_d} r'$.

We say that a set $\calp = P_1,\dots,P_h$ of paths is \emph{nice} if every path in $\calp$ is nice and no pair vertex in $P_i$ has a common adjacent point vertex with a pair vertex in $P_j$ for every $i\neq j$.
Also, we define the weight $w(\calp) := \sum_{P \in \calp} w(P)$.

\begin{lem}
    \label{lem:nice-path-collection}
    Let $P$ be a nice path in the component graph $G$ of component $C$ and $(\overline{i},\overline{r})$ be the pair with the largest radius in $C$.
    Then there is a nice set of paths $\calp$ such that
    \begin{align*}
        4 \cdot \sr(C_d)
         & \geq w(\calp)
        \geq \rad(G) + \min\left\{ \frac{1}{2} w(P) - \frac{1}{2} \overline{r}, w(P) - 3 \overline{r} \right\}.
    \end{align*}
\end{lem}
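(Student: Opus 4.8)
The plan is to build the nice set of paths $\calp$ by starting from the given nice path $P$ and then prepending a short path connecting an endpoint of $P$ to a center $c$ of $G$. Concretely, let $c$ be a center of $G$, so $\varepsilon(c) = \rad(G)$; by the remark after Definition~\ref{defn:component-graph} we may take $c$ to be either a point vertex or a pair vertex, but not a frontier vertex (unless its radius is $0$). Let $v_0$ and $v_h$ be the two (non-pair) endpoints of $P$. I would consider shortest paths $P(c,v_0)$ and $P(c,v_h)$ in $G$; by Lemma~\ref{lem:nice-shortest-path} each of these can be chosen to be nice (after possibly replacing $c$ by an adjacent point vertex in the degenerate cases, which only changes weights by at most $\overline r$). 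Each has weight at most $\rad(G)$. The set $\calp$ will then consist of $P$ together with (a suitably truncated piece of) one of these two connecting paths — whichever can be attached while keeping the whole collection nice, i.e.\ so that no pair vertex on the connector shares an adjacent point vertex with a pair vertex of $P$.

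The key step is the case distinction that produces the $\min\{\frac12 w(P) - \frac12\overline r,\ w(P) - 3\overline r\}$ term. Write $P = (v_0,\dots,v_h)$ and let $P'$ be the chosen connector from $c$ to one endpoint, say $v_0$. If $P'$ can be attached cleanly (no shared adjacent point vertices between its pair vertices and those of $P$), then $\calp = \{P, P'\}$ works if $P$ and $P'$ are ``independent'' enough; but in general the connector may pass close to $P$. In that case I would truncate $P'$ at the first point where it would violate niceness, losing at most the weight of one or two pair-edges, i.e.\ at most $2\overline r$ in weight, and then argue that the remaining structure still reaches far enough. The two branches of the minimum correspond to: (i) the case where essentially the whole of $P'$ survives and we only lose about half of $P$'s weight plus a $\overline r$ term coming from an endpoint adjustment (giving $\frac12 w(P) - \frac12 \overline r$); and (ii) the case where the connector collides with $P$ early, and we instead keep a long sub-path of $P$ itself that starts near the center, absorbing up to $3\overline r$ of slack from the at most three pair-edges discarded near the collision point and at the endpoints (giving $w(P) - 3\overline r$). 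In both branches one adds the $\rad(G)$ contribution coming from the connector reaching from the center.

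For the upper bound $4\cdot\sr(C_d) \ge w(\calp)$: since $\calp$ is nice, every path $Q\in\calp$ is nice, so by inequality~\eqref{coro:normal-component-ineq} we have $w(Q) = 2\sum_{b_{(i',r')}\in b(Q)} r' \le 4\sum_{b_{(i',r')}\in b_d(Q)} r'$. Because $\calp$ is nice, pair vertices on distinct paths of $\calp$ have no common adjacent point vertex, hence their pairs are disjoint; combined with the within-path disjointness of each $b_d(Q)$, the set $\bigcup_{Q\in\calp} \{(i',r') : b_{(i',r')}\in b_d(Q)\}$ consists of pairwise disjoint pairs of $C$, so its total radius is at most $\sr(C_d)$ by maximality of $C_d$. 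Summing over $Q\in\calp$ then yields $w(\calp) \le 4\sum_{Q}\sum_{b_d(Q)} r' \le 4\cdot\sr(C_d)$.

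The main obstacle I expect is the careful bookkeeping in the case distinction: one has to argue that truncating the connector path (or the path $P$) to restore niceness costs only the claimed $O(\overline r)$ in weight, which requires controlling how many consecutive pair vertices can be ``bad'' — here the fact that on a nice path non-adjacent pair vertices have disjoint balls is what keeps the number of discarded edges bounded by a small constant, and pinning down exactly that constant (and hence the coefficients $\tfrac12$, $\tfrac12$, $3$ in the statement) is the delicate part. Everything else (existence of a center, niceness of shortest connectors via Lemma~\ref{lem:nice-shortest-path}, the final disjointness-to-$C_d$ comparison) is routine given the earlier lemmas.
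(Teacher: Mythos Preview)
Your construction has the central inequality going the wrong way. You take a center $c$ of $G$ and a shortest connector $P'$ from $c$ to an endpoint of $P$; but such a connector has weight \emph{at most} $\rad(G)$, not at least $\rad(G)$. Consequently $w(\calp)=w(P)+w(P')$ can be as small as $w(P)$ (take $c=v_0$), and there is no mechanism in your argument to force a $\rad(G)$ contribution into the lower bound. The phrase ``the $\rad(G)$ contribution coming from the connector reaching from the center'' is precisely where the sign is wrong: reaching \emph{from} the center gives short paths, not long ones.

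The paper's proof runs in the opposite direction. It fixes a vertex $x'$ near the midpoint of $P$ and looks at a vertex $y$ \emph{farthest from $x'$}; since every vertex has eccentricity at least $\rad(G)$, one gets $w(P(x',y))\ge\rad(G)$ automatically. The case split is then on where the shortest $y$--$x'$ path first becomes adjacent to $P$: if this first contact $y'$ is adjacent only to one half of $P$ (say $P(v,x')$), one splices $P(y,y')$ onto the \emph{other} half $P(x'',x)$ to form a single nice path of weight $\ge \rad(G)+\tfrac12 w(P)-\tfrac12\overline r$; if $y'$ is adjacent to both halves, then $y'$ sits within $O(\overline r)$ of $x'$, and one takes $\calp=\{P,\,P(y,y'')\}$ as two paths whose pair vertices share no adjacent point vertex, giving $w(\calp)\ge w(P)+\rad(G)-3\overline r$. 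Your upper-bound argument $w(\calp)\le 4\cdot\sr(C_d)$ via the niceness of $\calp$ is fine and matches the paper's inequality~\eqref{coro:normal-component-ineq}; what needs to be replaced is the source of the $\rad(G)$ term.
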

\begin{proof}
    For any $u,v \in V$, let $P(u,v)$ denote a shortest path from $u$ to $v$, and assume, w.l.o.g., that for any shortest path $P(u,v)$ and $u',v' \in P(u,v)$ we have $P(u',v') \subseteq P(u,v)$.
    By Lemma~\ref{lem:nice-shortest-path}, we may assume that $P(u,v)$ is nice and so is every subpath of $P(u,v)$.

    Let $P = P(v,x)$,  let $x'$ be the vertex closest to the ``middle'' of the path $P(v,x)$, i.e., $x' \in P(v,x)$ is a vertex that minimizes $\left|\frac{1}{2} w(P(v,x) - w(P(v,x')) \right|$, and let $r_{x'}$ denote the weight of the heaviest edges incident to $x'$ in this path.
    Then,
    \begin{align}
        w(P(v,x'))
         & \geq \frac{1}{2} w(P(v,x)) - \frac{1}{2} r_{x'}, \text{ and }
        w(P(x',x))
        \geq \frac{1}{2} w(P(v,x)) - \frac{1}{2} r_{x'}.
        \label{lem:nice-path-collection:ineq-1}
    \end{align}

    Let $y$ be a farthest vertex from $x'$ in $G$ ($y$ is either a frontier vertex or a point vertex), and let $y'$ denote the first vertex in the path $P(y,x')$ that is adjacent to a vertex in $P(v,x)$.
    Since $y$ is a farthest vertex from $x'$, we have that $P(y,x') \geq \rad(G)$.
    Next, we analyze three distinct cases: (i) $y'$ is not adjacent to any vertex in $P(x',x)$; (ii) $y'$ is not adjacent to any vertex in $P(v,x')$; and (iii) $y'$ is adjacent to a vertex in $P(v,x')$ and to a vertex in $P(x',x)$.
    The three cases are depicted in Figure~\ref{fig: construction of paths}.

    \begin{figure}[h!]
        \centering
        \begin{minipage}[t]{.3\textwidth} %4
            \centering
            \begin{tikzpicture}[scale=0.9]

                % constructed path
                \path[-,color=cyan!50]
                (0.5,1.5) edge[line width=3pt] (2.4,0.3)
                (2.5,-0.4) edge[line width=3pt] (2.4,0.3)
                (2.5,-0.4) edge[line width=3pt] (3.1,-0.6)
                (3.3,-1.2) edge[line width=3pt] (3.1,-0.6)
                (3.3,-1.2) edge[line width=3pt] (2.9,-1.7)
                (0.5,1.5) edge[line width=3pt] (1,2.2)
                (1.8,2) edge[line width=3pt] (1,2.2)
                (1.8,2) edge[line width=3pt] (2.6,2.3)
                (2.8,2.9) edge[line width=3pt] (2.6,2.3)
                (2.8,2.9) edge[line width=3pt] (3.5,3)
                (4.3,3.6) edge[line width=3pt] (3.5,3)
                (4.3,3.6) edge[line width=3pt] (3.5,4.2)
                ;

                % path P(v,x)
                \path[-]
                (0,0) edge[line width=0.8pt] (0.5,0.7)
                (0.5,1.5) edge[line width=0.8pt] (0.5,0.7)
                (0.5,1.5) edge[line width=0.8pt] (1,2.2)
                (1.8,2) edge[line width=0.8pt] (1,2.2)
                (1.8,2) edge[line width=0.8pt] (2.6,2.3)
                (2.8,2.9) edge[line width=0.8pt] (2.6,2.3)
                (2.8,2.9) edge[line width=0.8pt] (3.5,3)
                (4.3,3.6) edge[line width=0.8pt] (3.5,3)
                (4.3,3.6) edge[line width=0.8pt] (3.5,4.2)
                ;

                % path P(x',y)
                \path[-]
                (1.8,2) edge[line width=0.8pt] (1.8,1.3)
                (2.5,1) edge[line width=0.8pt] (1.8,1.3)
                (2.5,1) edge[line width=0.8pt] (2.4,0.3)
                (2.5,-0.4) edge[line width=0.8pt] (2.4,0.3)
                (2.5,-0.4) edge[line width=0.8pt] (3.1,-0.6)
                (3.3,-1.2) edge[line width=0.8pt] (3.1,-0.6)
                (3.3,-1.2) edge[line width=0.8pt] (2.9,-1.7)
                ;

                % possible edges from y
                \path[dashed]
                (0.5,0.7) edge[line width=0.8pt] (2.4,0.3)
                (0.5,1.5) edge[line width=0.8pt] (2.4,0.3)
                ;

                \fill (0,0) circle (3pt) node[left,color=black] {$v$};
                \fill (1.8,2) circle (3pt) node[above,color=black] {$x'$};
                \fill (3.5,4.2) circle (3pt) node[above,color=black] {$x$};
                \fill (2.4,0.3) circle (3pt) node[right,color=black] {$y'$};
                \fill (2.9,-1.7) circle (3pt) node[left,color=black] {$y$};
                \fill (0.5,1.5) circle (3pt) node[left,color=black] {$x''$};

            \end{tikzpicture}
            \subcaption*{\textbf{Case 1:} $y'$ not adjacent to any vertex in $P(x',x)$.}
        \end{minipage}
        \begin{minipage}[t]{.3\textwidth}
            \centering
            \begin{tikzpicture}[scale=0.9]

                % constructed path
                \path[-,color=cyan!50]
                (0,0) edge[line width=3pt] (0.5,0.7)
                (0.5,1.5) edge[line width=3pt] (0.5,0.7)
                (0.5,1.5) edge[line width=3pt] (1,2.2)
                (1.8,2) edge[line width=3pt] (1,2.2)
                (1.8,2) edge[line width=3pt] (2.6,2.3)
                (2.8,2.9) edge[line width=3pt] (2.6,2.3)
                (2.8,2.9) edge[line width=3pt] (3.5,3)
                (3.5,3) edge[line width=3pt] (2.4,0.3)
                (2.5,-0.4) edge[line width=3pt] (2.4,0.3)
                (2.5,-0.4) edge[line width=3pt] (3.1,-0.6)
                (3.3,-1.2) edge[line width=3pt] (3.1,-0.6)
                (3.3,-1.2) edge[line width=3pt] (2.9,-1.7)
                ;

                % path P(v,x)
                \path[-]
                (0,0) edge[line width=0.8pt] (0.5,0.7)
                (0.5,1.5) edge[line width=0.8pt] (0.5,0.7)
                (0.5,1.5) edge[line width=0.8pt] (1,2.2)
                (1.8,2) edge[line width=0.8pt] (1,2.2)
                (1.8,2) edge[line width=0.8pt] (2.6,2.3)
                (2.8,2.9) edge[line width=0.8pt] (2.6,2.3)
                (2.8,2.9) edge[line width=0.8pt] (3.5,3)
                (4.3,3.6) edge[line width=0.8pt] (3.5,3)
                (4.3,3.6) edge[line width=0.8pt] (3.5,4.2)
                ;

                % path P(x',y)
                \path[-]
                (1.8,2) edge[line width=0.8pt] (1.8,1.3)
                (2.5,1) edge[line width=0.8pt] (1.8,1.3)
                (2.5,1) edge[line width=0.8pt] (2.4,0.3)
                (2.5,-0.4) edge[line width=0.8pt] (2.4,0.3)
                (2.5,-0.4) edge[line width=0.8pt] (3.1,-0.6)
                (3.3,-1.2) edge[line width=0.8pt] (3.1,-0.6)
                (3.3,-1.2) edge[line width=0.8pt] (2.9,-1.7)
                ;

                % possible edges from y
                \path[dashed]
                (3.5,3) edge[line width=0.8pt] (2.4,0.3)
                (4.3,3.6) edge[line width=0.8pt] (2.4,0.3)
                ;

                \fill (0,0) circle (3pt) node[left,color=black] {$v$};
                \fill (1.8,2) circle (3pt) node[above,color=black] {$x'$};
                \fill (3.5,4.2) circle (3pt) node[above,color=black] {$x$};
                \fill (2.4,0.3) circle (3pt) node[right,color=black] {$y'$};
                \fill (2.9,-1.7) circle (3pt) node[left,color=black] {$y$};
                \fill (3.5,3) circle (3pt) node[above,color=black] {$x''$};

            \end{tikzpicture}
            \subcaption*{\textbf{Case 2:} $y'$ not adjacent to any vertex in $P(v,x')$.}
        \end{minipage}
        \begin{minipage}[t]{.3\textwidth} %6
            \centering
            \begin{tikzpicture}[scale=0.9]

                % constructed path
                \path[-,color=cyan!50]
                (0,0) edge[line width=3pt] (0.5,0.7)
                (0.5,1.5) edge[line width=3pt] (0.5,0.7)
                (0.5,1.5) edge[line width=3pt] (1,2.2)
                (1.8,2) edge[line width=3pt] (1,2.2)
                (1.8,2) edge[line width=3pt] (2.6,2.3)
                (2.8,2.9) edge[line width=3pt] (2.6,2.3)
                (2.8,2.9) edge[line width=3pt] (3.5,3)
                (4.3,3.6) edge[line width=3pt] (3.5,3)
                (4.3,3.6) edge[line width=3pt] (3.5,4.2)
                (2.5,-0.4) edge[line width=3pt] (3.1,-0.6)
                (3.3,-1.2) edge[line width=3pt] (3.1,-0.6)
                (3.3,-1.2) edge[line width=3pt] (2.9,-1.7)
                ;

                % path P(v,x)
                \path[-]
                (0,0) edge[line width=0.8pt] (0.5,0.7)
                (0.5,1.5) edge[line width=0.8pt] (0.5,0.7)
                (0.5,1.5) edge[line width=0.8pt] (1,2.2)
                (1.8,2) edge[line width=0.8pt] (1,2.2)
                (1.8,2) edge[line width=0.8pt] (2.6,2.3)
                (2.8,2.9) edge[line width=0.8pt] (2.6,2.3)
                (2.8,2.9) edge[line width=0.8pt] (3.5,3)
                (4.3,3.6) edge[line width=0.8pt] (3.5,3)
                (4.3,3.6) edge[line width=0.8pt] (3.5,4.2)
                ;

                % path P(x',y)
                \path[-]
                (1.8,2) edge[line width=0.8pt] (1.8,1.3)
                (2.5,1) edge[line width=0.8pt] (1.8,1.3)
                (2.5,1) edge[line width=0.8pt] (2.4,0.3)
                (2.5,-0.4) edge[line width=0.8pt] (2.4,0.3)
                (2.5,-0.4) edge[line width=0.8pt] (3.1,-0.6)
                (3.3,-1.2) edge[line width=0.8pt] (3.1,-0.6)
                (3.3,-1.2) edge[line width=0.8pt] (2.9,-1.7)
                ;

                % possible edges from y
                \path[dashed]
                (0.5,0.7) edge[line width=0.8pt] (2.4,0.3)
                (0.5,1.5) edge[line width=0.8pt] (2.4,0.3)
                (3.5,3) edge[line width=0.8pt] (2.4,0.3)
                (4.3,3.6) edge[line width=0.8pt] (2.4,0.3)
                ;

                \fill (0,0) circle (3pt) node[left,color=black] {$v$};
                \fill (1.8,2) circle (3pt) node[above,color=black] {$x'$};
                \fill (3.5,4.2) circle (3pt) node[above,color=black] {$x$};
                \fill (2.4,0.3) circle (3pt) node[right,color=black] {$y'$};
                \fill (2.9,-1.7) circle (3pt) node[left,color=black] {$y$};
                \fill (0.5,1.5) circle (3pt) node[left,color=black] {$x_1$};
                \fill (3.5,3) circle (3pt) node[above,color=black] {$x_2$};
                \fill (2.5,-0.4) circle (3pt) node[left,color=black] {$y''$};

            \end{tikzpicture}
            \subcaption*{\textbf{Case 3:} $y'$ adjacent to vertex in $P(v,x')$ and $P(x',x)$.}
        \end{minipage}
        \caption{Construction of a set of nice paths as described in Lemma \ref{lem:nice-path-collection}.}
        \label{fig: construction of paths}
    \end{figure}

    \textbf{Case 1:} $y'$ is not adjacent to any vertex in $P(x', x)$.
    Let $x''$ be a vertex in $P(v,x')$ closest to $x'$ that is adjacent to $y'$.
    Let $P(y,y') = (y_1 = y, \cdots, y_q = y')$ and $P(x'',x) = (x_1 = x'', \cdots, x_p = x)$, and consider the nice path $P = (y_1 = y, \cdots, y_q = y', x_1 = x'', \cdots, x_p = x)$.
    Then, by the triangular inequality
    \begin{align*}
        w(P)
         & \stackrel{\phantom{\scriptsize \mathrm{(ineq.\ref{lem:nice-path-collection:ineq-1})}}}{=} w(P(y,y')) + w(y',x'') + w(P(x'',x))                      \\
         & \stackrel{\phantom{\scriptsize \mathrm{(ineq.\ref{lem:nice-path-collection:ineq-1})}}}{=} w(P(y,y')) + w(y',x'') + w(P(x'',x')) + w(P(x',x))        \\
         & \stackrel{\phantom{\scriptsize \mathrm{(ineq.\ref{lem:nice-path-collection:ineq-1})}}}{\geq} w(P(y,x')) + w(P(x',x))                                \\
         & \stackrel{\scriptsize \mathrm{(ineq.\ref{lem:nice-path-collection:ineq-1})}}{\geq} \rad(G) + \frac{1}{2} w(P(v,x')) - \frac{1}{2} r_{x'}            \\
         & \stackrel{\phantom{\scriptsize \mathrm{(ineq.\ref{lem:nice-path-collection:ineq-1})}}}{\geq} \rad(G) + \frac{1}{2} w(P) - \frac{1}{2} \overline{r}.
    \end{align*}

    \textbf{Case 2:} $y'$ is not adjacent to any vertex in $P(v, x')$.
    This case is analogous to the previous one.
    Let $x''$ be a vertex in $P(x',x)$ closest to $x'$ that is adjacent to $y'$.
    Let $P(x'',v) = (x_1 = x'', \cdots, x_p = v)$, and consider the nice path $P = (y_1 = y, \cdots, y_q = y', x_1 = x'', \cdots, x_p = v)$.
    Then, by the triangular inequality
    \begin{align*}
        w(P)
         & \stackrel{\phantom{\scriptsize \mathrm{(ineq.\ref{lem:nice-path-collection:ineq-1})}}}{=} w(P(y,y')) + w(y',x'') + w(P(x'',v))                      \\
         & \stackrel{\phantom{\scriptsize \mathrm{(ineq.\ref{lem:nice-path-collection:ineq-1})}}}{=} w(P(y,y')) + w(y',x'') + w(P(x'',x')) + w(P(x',v))        \\
         & \stackrel{\phantom{\scriptsize \mathrm{(ineq.\ref{lem:nice-path-collection:ineq-1})}}}{\geq} w(P(y,x')) + w(P(x',v))                                \\
         & \stackrel{\scriptsize \mathrm{(ineq.\ref{lem:nice-path-collection:ineq-1})}}{\geq} \rad(G) + \frac{1}{2} w(P(v,x')) - \frac{1}{2} r_{x'}            \\
         & \stackrel{\phantom{\scriptsize \mathrm{(ineq.\ref{lem:nice-path-collection:ineq-1})}}}{\geq} \rad(G) + \frac{1}{2} w(P) - \frac{1}{2} \overline{r}.
    \end{align*}

    \textbf{Case 3:} $y'$ is adjacent to a vertex in $P(v, x')$ and to a vertex in $P(x',x)$.
    First, we show that we may assume w.l.o.g. that $y'$ is a pair vertex.
    Suppose that $y'$ is a point vertex.
    Then, there is a pair vertex $b_1 \in P(v,x')$ and $b_2 \in P(x',x)$ to which $y'$ is adjacent to.
    Since $P(v,x) = (x_0 = v,\dots,x_h = x)$ is nice, it must be that $x_i = b_1$ and $x_{i+2} = b_2$ for some $i$.
    And because from the construction of $G$ we have that $w(b_1,x_{i+1}) = w(b_1,y')$ and $w(x_{i+1},b_2) = w(y',b_2)$, we can replace point vertex $x_{i+1}$ for $y'$ and still maintain that that $P(v,x)$ is a nice shortest path.

    Now, assume that $y' = b_{(i',r')}$ is a pair vertex, and let $x_1 \in P(v,x')$ be a point vertex that is adjacent to $y'$ and is closest to $x'$ and let $x_2 \in P(x',x)$ be a point vertex that is adjacent to $y'$ and is closest to $x'$.
    Let $w_{x'}$ denote the shorter between $w(P(x_1,x'))$ and $w(P(x',x_2))$.
    Then, by the triangular inequality
    \begin{align}
        \label{lem:nice-path-collection:ineq-2}
        2 w_{x'}
         & \leq w(P(x_1,x')) + w(P(x',x_2))
        = w(P(x_1,x_2)) \nonumber           \\
         & \leq w(P(x_1,y')) + w(P(y',x_2))
        = w(x_1,y') + w(y',x_2)
        = 2 r'
        \leq 2 \overline{r}.
    \end{align}
    Let $y''$ be the point vertex that precedes $y'$ in $P(y,y')$, and consider the path $P(y, y'')$.
    Then,
    \begin{align}
        \label{lem:nice-path-collection:ineq-3}
        w(P(y,y''))
         & = w(P(y,x')) - w_{x'} - 2 r'
        \stackrel{\scriptsize \mathrm{(ineq.\ref{lem:nice-path-collection:ineq-2})}}{\geq} \rad(G) - 3 \overline{r}.
    \end{align}
    Notice that no point vertex in $G$ is adjacent simultaneously to a pair vertex in $P(v,x)$ and to a pair vertex in $P(y, y'')$, and also notice that both paths are nice.
    Then,
    \begin{align*}
        w(P(v,x)) + w(P(y, y''))
         & \stackrel{{\scriptsize \mathrm{(ineq.\ref{lem:nice-path-collection:ineq-3})}}}{\geq} w(P) + \rad(G) - 3 \overline{r}.
    \end{align*}

    Therefore, considering the three cases analyzed, we can always find a nice set $\calp$ whose weight is at least
    \begin{align*}
        w(\calp) \geq \rad(G) + \min\left\{  \frac{1}{2} w(P) - \frac{1}{2} \overline{r}, w(P) - 3 \overline{r} \right\},
    \end{align*}
    and because of inequality~(\ref{coro:normal-component-ineq}), we also have that $4 \cdot \sr(C_d) \geq w(\calp)$, completing the proof.
\end{proof}

We now have the necessary ingredients to prove Lemmas~\ref{lem:normal-component} and~\ref{lem:normal-component-glb}.

\lemnormalcomponent*
\begin{proof}
    Choose $r = \rad(G)$ and $i$ as the corresponding point of a center $c$ of $G$, i.e., $i = i'$ if $c = p_{i'}$ is a point vertex or $i = i'$ if $c = b_{(i',r')}$ is a pair vertex (recall that frontier vertices are never centers of $G$).
    Since $w(c,v) \geq d(i,j)$ for every vertex $v$ and its corresponding point $j \in X$, it follows that $X(C) \subseteq B(i,r)$.
    Recall that $C_d$ denotes a pairwise disjoint subset of $C$ that maximizes $\sr(C_d)$.
    Since the pairs in $C_d$ are pairwise disjoint and almost tight, we have that
    \begin{align*}
        \sr(C_d)
         & = \sum_{(i',r') \in C_d} r'
        \leq \left( \sum_{j\in X'(C)} \alpha_{j} - |C_d| \cdot \lambda + |C_d| \cdot \mu \right)
        \stackrel{{\scriptsize \mathrm{(Lem.\ref{lem:biggest-radius})}}}{\leq} \left( \sum_{j\in X'(C)} \alpha_{j} - \lambda \right) + \epsilon \frac{\OPT}{|X|}.
    \end{align*}
    Thus, to complete the proof it only remains to show that
    \begin{align}
        \label{lem:normal-component:ineq-1}
        \rad(G) \leq 3 \cdot \sr(C_d).
    \end{align}

    If there is a pair $(i',r') \in C$ such that $r' \geq \frac{1}{3} \rad(G)$, then choosing $C_d = \{(i',r')\}$ we have~(\ref{lem:normal-component:ineq-1}).
    Otherwise, we have that $r' < \frac{1}{3} \rad(G)$ for every pair $(i',r') \in C$.
    Let $(\overline{i},\overline{r})$ denote a pair with largest radius in $C$ and let $\overline{f}$ denote its respective frontier vertex.
    Let $x$ be a vertex in $G$ that is farthest from $b_{(\overline{i},\overline{r})}$ ($x$ is either a frontier vertex or a point vertex), then
    \begin{align}
        \label{lem:normal-component:ineq-2}
        w(P(\overline{f}, x))
         & = \overline{r} + w(P(b_{(\overline{i},\overline{r})}, x))
        \geq \overline{r} + \rad(G).
    \end{align}

    From Lemma~\ref{lem:nice-path-collection}, we have that
    \begin{align*}
        \rad(G)
         & \stackrel{{\scriptsize \mathrm{(Lem.\ref{lem:nice-path-collection})}}}{\leq} 4 \cdot \sr(C_d) - \min\left\{ \frac{1}{2} w(P(\overline{f}, x)) - \frac{1}{2} \overline{r}, w(P(\overline{f}, x)) - 3 \overline{r} \right\}                                                      \\
         & \stackrel{{\scriptsize \mathrm{(ineq.\ref{lem:normal-component:ineq-2})}}}{\mathmakebox[\widthof{$\stackrel{{\scriptsize \mathrm{(Lem.\ref{lem:nice-path-collection})}}}{\leq}$}]{\leq}} 4 \cdot \sr(C_d) - \min\left\{ \frac{1}{2} \rad(G), \rad(G) - 2 \overline{r} \right\} \\
         & \stackrel{\phantom{\scriptsize \mathrm{(ineq.\ref{lem:normal-component-glb:ineq-7})}}}{\leq} 4 \cdot \sr(C_d) - \min\left\{ \frac{1}{2} \rad(G), \rad(G) - 2 \cdot \frac{1}{3} \rad(G) \right\}                                                                                \\
         & \stackrel{\phantom{\scriptsize \mathrm{(ineq.\ref{lem:normal-component-glb:ineq-7})}}}{=} 4 \cdot \sr(C_d) - \frac{1}{3} \rad(G),
    \end{align*}
    which implies inequality~(\ref{lem:normal-component:ineq-1}), finishing the proof.
\end{proof}

From Appendix~\ref{apx:generalized-lb}, recall that in the setting with the generalized lower bounds we only replace the pairs in a component $C$ for a single pair $(i,r)$ if there is a pair $(i,r') \in C$, this is, there is a pair in $C$ whose center is colocated with the new pair $(i,r)$.
This restriction worsens our previous bound of $3$, stated in Lemma~\ref{lem:normal-component}, to $3.5$, as seen in Lemma~\ref{lem:normal-component-glb}.

\lemnormalcomponentglb*
\begin{proof}
    Consider the component graph $G$ of $C$.
    In this general setting, consider placing a pair centered on a point $i'$ only if there is pair vertex $b_{(i',r')}$ in $G$.
    This is because if $b_{(i',r')}$ is in $G$, then $(i',r') \in \calb$, which implies that any pair also centered in $i'$ that covers at least the same points as $(i',r')$ has radius at least $r'$.

    Let $w_{i'} := \max_{v \in V} w(P(b_{(i',r')},v))$ denote the longest distance from pair vertex $b_{(i',r')}$ to some vertex in $G$ (recall that from the definition of $G$, there is at most only one pair vertex for each $i'$).
    Then any pair $(i,r)$ centered on $i'$ for which $b_{(i',r')}$ is in $G$ covers $X(C)$ if $r \geq w_{i'}$ because $w_{i'} \geq d(i,j)$ for every point $j \in X$, hence $B(i',r') \subseteq B(i',w_{i'})$.

    Recall that $C_d$ denotes a pairwise disjoint subset of $C$ that maximizes $\sr(C_d)$.
    Since the pairs in $C_d$ are pairwise disjoint and almost tight, we have that
    \begin{align*}
        \sr(C_d)
         & = \sum_{(i',r') \in C_d} r'
        \leq \left( \sum_{j\in X'(C)} \alpha_{j} - |C_d| \cdot \lambda + |C_d| \cdot \mu \right)
        \stackrel{{\scriptsize \mathrm{(Lem.\ref{lem:biggest-radius})}}}{\leq} \left( \sum_{j\in X'(C)} \alpha_{j} - \lambda \right) + \epsilon \frac{\OPT}{|X|}.
    \end{align*}
    Then, to finish the proof, it remains only to show that there is a point $i'$, for which $b_{(i',r')}$ is in $G$, such that
    \begin{align}
        \label{lem:normal-component-glb:ineq-1}
        w_{i'} \leq \frac{7}{2} \cdot \sr(C_d).
    \end{align}
    We break the case into two main cases.

    \textbf{Case 1:} Assume that a center of $G$ is a pair vertex $b_{(i'',r'')}$.
    Then choosing $i = i''$ and $r = \rad(G) = w_{i''}$ yields a pair $(i,r)$ that covers $X(C)$.
    We apply the same reasonings as in Lemma~\ref{lem:normal-component}.

    \textbf{Case 1.1:} Assume that there is a pair $(i',r') \in C$ such that $r' \geq \frac{1}{3} \rad(G)$.
    Then
    \begin{align*}
        w_{i}
         & = \rad(G)
        \leq 3 r'
        \leq 3 \cdot \sr(C_d)
        \leq \frac{7}{2} \cdot \sr(C_d),
    \end{align*}
    and we have inequality~(\ref{lem:normal-component-glb:ineq-1}).

    \textbf{Case 1.2:} Assume that $r' < \frac{1}{3} \rad(G)$ for every pair $(i',r') \in C$.
    Let $(\overline{i},\overline{r})$ denote a pair with largest radius in $C$ and let $\overline{f}$ denote its respective frontier vertex.
    Let $x$ be a vertex in $G$ that is farthest from $b_{(\overline{i},\overline{r})}$ ($x$ is either a frontier vertex or a point vertex), then
    \begin{align}
        \label{lem:normal-component-glb:ineq-2}
        w(P(\overline{f}, x))
         & = \overline{r} + w(P(b_{(\overline{i},\overline{r})}, x))
        \geq \overline{r} + \rad(G).
    \end{align}
    From Lemma~\ref{lem:nice-path-collection}, we have that
    \begin{align*}
        w_{i}
         & \stackrel{\phantom{\scriptsize \mathrm{(ineq.\ref{lem:normal-component:ineq-2})}}}{=} \rad(G)
        \stackrel{{\scriptsize \mathrm{(Lem.\ref{lem:nice-path-collection})}}}{\leq} 4 \cdot \sr(C_d) - \min\left\{ \frac{1}{2} w(P(\overline{f}, x)) - \frac{1}{2} \overline{r}, w(P(\overline{f}, x)) - 3 \overline{r} \right\} \\
         & \stackrel{{\scriptsize \mathrm{(ineq.\ref{lem:normal-component:ineq-2})}}}{\leq} 4 \cdot \sr(C_d) - \min\left\{ \frac{1}{2} \rad(G), \rad(G) - 2 \overline{r} \right\}                                                 \\
         & \stackrel{\phantom{\scriptsize \mathrm{(ineq.\ref{lem:normal-component:ineq-2})}}}{\leq} 4 \cdot \sr(C_d) - \min\left\{ \frac{1}{2} \rad(G), \rad(G) - 2 \cdot \frac{1}{3} \rad(G) \right\}                            \\
         & \stackrel{\phantom{\scriptsize \mathrm{(ineq.\ref{lem:normal-component:ineq-2})}}}{=} 4 \cdot \sr(C_d) - \frac{1}{3} w_i,
    \end{align*}
    which implies that $w_{i} \leq 3 \cdot \sr(C_d) \leq \frac{7}{2} \cdot \sr(C_d)$, and we have inequality~(\ref{lem:normal-component-glb:ineq-1}).
    This proves the claim for Case~1.

    \textbf{Case 2:} Assume that every center of $G$ is a point vertex (recall that frontier vertices cannot be a center of $G$), and let $c$ denote one of these point vertices.
    We show that there is always a pair vertex that is not too far away from the center of $G$ and that the distance from this pair vertex to every other vertex in $G$ is not that much more than $\rad(G)$.
    Let $(i_1,r_1)$ be a pair in $C$ with smallest radius such that $b_{(i_1,r_1)}$ is adjacent to $c$.

    First, let us bound $w_{i_1}$.
    For some vertex $x$, if there is a point vertex $x' \neq c$ that is adjacent to a pair vertex in $P(c,x)$ and to $b_{(i_1,r_1)}$, then we show that the distance from $i_1$ to $x$ is relatively short.
    The first pair vertex in $P(c,x')$ has radius at least $r_1$, therefore $w(P(c,x')) \geq 2 r_1$, and by the triangular inequality, we have that
    \begin{align*}
        w(P(i_1,x))
         & \leq r_1 + w(P(x',x))
        = r_1 + w(P(c,x)) - w(P(c,x'))     \\
         & \leq r_1 + \rad(G) - w(P(c,x')) \\
         & \leq \rad(G) - r_1.
    \end{align*}
    Since $b_{(i_1,r_1)}$ is not a center of $G$, there must be a vertex that is farther than a distance $\rad(G)$ from it.
    Thus, there must be vertex $x$ such that there is no point vertex, besides $c$, that is adjacent to a pair vertex in $P(x,c)$ and to $b_{(i_1,r_1)}$ simultaneously, and $w(P(c,x)) > \rad(G) - r_1$.
    Thus, let $x$ be a point vertex be the farthest vertex from $c$ such that there is no point vertex, besides $c$, that is simultaneously adjacent to a pair vertex in $P(x,c)$ and to $b_{(i_1,r_1)}$, then,
    \begin{align}
        \label{lem:normal-component-glb:ineq-3}
        w_{i_1} \leq w(P(i_1,x)) = r_1 + w(P(x,c)).
    \end{align}
    From Lemma~\ref{lem:nice-shortest-path}, we may assume that $P(x,c)$ is nice, thus it can be extended to another nice path $P(x,f_1)$ through the edges $\{c,b_{(i_1,r_1)}\}$ and $\{b_{(i_1,r_1)}, f_1\}$, where $f_1$ is the frontier vertex of $(i_1,r_1)$.
    Then,
    \begin{align}
        \label{lem:normal-component-glb:ineq-4}
        w(P(x, f_1))
         & = w(P(x,c)) + w(c,b_{(i_1,r_1)}) + w(b_{(i_1,r_1)}, f_1)
        = w(P(x,c)) + 2 r_1.
    \end{align}

    \textbf{Case 2.1:} Assume that $r_1 + w(P(x,c)) \leq \frac{7}{2} \overline{r}$.
    Then
    \begin{align*}
        w_{i_1}
         & \stackrel{{\scriptsize \mathrm{(ineq.\ref{lem:normal-component-glb:ineq-3})}}}{\leq} r_1 + w(P(x,c))
        \leq \frac{7}{2} \overline{r}
        \leq \frac{7}{2} \sr(C_d),
    \end{align*}
    and we have proven inequality (\ref{lem:normal-component-glb:ineq-1}).

    \textbf{Case 2.2:} Assume that $r_1 + w(P(x,c)) > \frac{7}{2} \overline{r}$.
    This case is further broken into two subcases.

    \textbf{Case 2.2.1:} Assume that $\overline{r} \geq \frac{1}{5} w(P(x,f_1))$.
    Then
    \begin{align}
        \label{lem:normal-component-glb:ineq-5}
        \min \left\{ \frac{1}{2} w(P(x,f_1)) - \frac{1}{2} \overline{r}, w(P(x,f_1)) - 3 \overline{r} \right\}
         & = w(P(x,f_1)) - 3 \overline{r},
    \end{align}
    which implies that
    \begin{align*}
        \frac{7}{2} \cdot \sr(C_d)
         & \stackrel{{\scriptsize \mathrm{(Lem.\ref{lem:nice-path-collection})}}}{\mathmakebox[\widthof{$\stackrel{{\scriptsize \mathrm{(ineq.\ref{lem:normal-component-glb:ineq-7})}}}{\geq}$}]{\geq}} \frac{7}{8} \left( \rad(G) + \min \left\{ \frac{1}{2} w(P(x,f_1)) - \frac{1}{2} \overline{r}, w(P(x,f_1)) - 3 \overline{r} \right\} \right) \\
         & \stackrel{{\scriptsize \mathrm{(eq.\ref{lem:normal-component-glb:ineq-5})}}}{\mathmakebox[\widthof{$\stackrel{{\scriptsize \mathrm{(ineq.\ref{lem:normal-component-glb:ineq-7})}}}{\geq}$}]{=}} \frac{7}{8} \left( \rad(G) + w(P(x,f_1)) - 3 \overline{r} \right)                                                                        \\
         & \stackrel{{\scriptsize \mathrm{(eq.\ref{lem:normal-component-glb:ineq-4})}}}{\mathmakebox[\widthof{$\stackrel{{\scriptsize \mathrm{(ineq.\ref{lem:normal-component-glb:ineq-7})}}}{\geq}$}]{=}} \frac{7}{8} \left( \rad(G) + w(P(x,c)) + 2 r_1 - 3 \overline{r} \right)                                                                  \\
         & \stackrel{\phantom{\scriptsize \mathrm{(ineq.\ref{lem:normal-component-glb:ineq-7})}}}{\geq} \frac{7}{8} \left( 2 \cdot w(P(x,c)) + 2 r_1 - 3 \overline{r} \right)                                                                                                                                                                       \\
         & \stackrel{\phantom{\scriptsize \mathrm{(ineq.\ref{lem:normal-component-glb:ineq-7})}}}{>} \frac{7}{8} \left( 2 \cdot w(P(x,c)) + 2 r_1 - 3 \cdot \frac{2}{7}(r_1 + w(P(x,c))) \right)                                                                                                                                                    \\
         & \stackrel{\phantom{\scriptsize \mathrm{(ineq.\ref{lem:normal-component-glb:ineq-7})}}}{=} w(P(x,c)) + r_1                                                                                                                                                                                                                                \\
         & \stackrel{{\scriptsize \mathrm{(ineq.\ref{lem:normal-component-glb:ineq-3})}}}{\mathmakebox[\widthof{$\stackrel{{\scriptsize \mathrm{(ineq.\ref{lem:normal-component-glb:ineq-7})}}}{\geq}$}]{\geq}} w_{i_1}.
    \end{align*}
    Thus, we have proven inequality (\ref{lem:normal-component-glb:ineq-1}).

    \textbf{Case 2.2.2:} Assume that $\overline{r} < \frac{1}{5} w(P(x,f_1))$.
    Then
    \begin{align}
        \label{lem:normal-component-glb:ineq-6}
        \min \left\{ \frac{1}{2} w(P(x,f_1)) - \frac{1}{2} \overline{r}, w(P(x,f_1)) - 3 \overline{r} \right\}
         & = \frac{1}{2} w(P(x,f_1)) - \frac{1}{2} \overline{r},
    \end{align}
    and also because $r_1 \leq \overline{r}$ and because of inequality~(\ref{lem:normal-component-glb:ineq-4}), we have that
    \begin{align}
        \label{lem:normal-component-glb:ineq-7}
        3 r_1
         & < w(P(x,c)).
    \end{align}
    Then,
    \begin{align*}
        \frac{7}{2} \cdot \sr(C_d)
         & \stackrel{\phantom{\scriptsize \mathrm{(ineq.\ref{lem:normal-component-glb:ineq-7})}}}{>} \frac{16}{5} \sr(C_d)                                                                                                                                                                                                                          \\
         & \stackrel{{\scriptsize \mathrm{(Lem.\ref{lem:nice-path-collection})}}}{\mathmakebox[\widthof{$\stackrel{{\scriptsize \mathrm{(ineq.\ref{lem:normal-component-glb:ineq-7})}}}{\geq}$}]{\geq}} \frac{4}{5} \left( \rad(G) + \min \left\{ \frac{1}{2} w(P(x,f_1)) - \frac{1}{2} \overline{r}, w(P(x,f_1)) - 3 \overline{r} \right\} \right) \\
         & \stackrel{{\scriptsize \mathrm{(eq.\ref{lem:normal-component-glb:ineq-6})}}}{\mathmakebox[\widthof{$\stackrel{{\scriptsize \mathrm{(ineq.\ref{lem:normal-component-glb:ineq-7})}}}{\geq}$}]{=}} \frac{4}{5} \left( \rad(G) + \frac{1}{2} w(P(x,f_1)) - \frac{1}{2} \overline{r} \right)                                                  \\
         & \stackrel{\phantom{\scriptsize \mathrm{(ineq.\ref{lem:normal-component-glb:ineq-7})}}}{>} \frac{4}{5} \left( \rad(G) + \frac{1}{2} w(P(x,f_1)) - \frac{1}{2} \cdot \frac{1}{5} w(P(x,f_1)) \right)                                                                                                                                       \\
         & \stackrel{{\scriptsize \mathrm{(eq.\ref{lem:normal-component-glb:ineq-4})}}}{\mathmakebox[\widthof{$\stackrel{{\scriptsize \mathrm{(ineq.\ref{lem:normal-component-glb:ineq-7})}}}{\geq}$}]{=}} \frac{4}{5} \left( \rad(G) + \frac{1}{2} w(P(x,c)) + r_1 - \frac{1}{10} (w(P(x,c) + 2 r_1)) \right)                                      \\
         & \stackrel{\phantom{\scriptsize \mathrm{(ineq.\ref{lem:normal-component-glb:ineq-7})}}}{\geq} \frac{4}{5} \left( \frac{3}{2} w(P(x,c)) + r_1 - \frac{1}{10} (w(P(x,c) + 2 r_1)) \right)                                                                                                                                                   \\
         & \stackrel{\phantom{\scriptsize \mathrm{(ineq.\ref{lem:normal-component-glb:ineq-7})}}}{=} \frac{28}{25} w(P(x,c)) + \frac{16}{25} r_1                                                                                                                                                                                                    \\
         & \stackrel{{\scriptsize \mathrm{(ineq.\ref{lem:normal-component-glb:ineq-7})}}}{\mathmakebox[\widthof{$\stackrel{{\scriptsize \mathrm{(ineq.\ref{lem:normal-component-glb:ineq-7})}}}{\geq}$}]{>}} w(P(x,c)) + r_1                                                                                                                        \\
         & \stackrel{{\scriptsize \mathrm{(ineq.\ref{lem:normal-component-glb:ineq-3})}}}{\mathmakebox[\widthof{$\stackrel{{\scriptsize \mathrm{(ineq.\ref{lem:normal-component-glb:ineq-7})}}}{\geq}$}]{\geq}} w_{i_1},                                                                                                                            \\
    \end{align*}
    and we have proven inequality (\ref{lem:normal-component-glb:ineq-1}).

    Since we always have that $w_{i_1} \leq \frac{7}{2}$ in Case~2, we have proven that inequality~(\ref{lem:normal-component-glb:ineq-1}) holds, and thus completing the proof.
\end{proof}

\section{Proof of Lemma~\ref{lem:outliers-final-cost-bound}}

To prove Lemma~\ref{lem:outliers-final-cost-bound}, we need results that are similar to Lemma~\ref{lem:normal-component}, which works for the version without outliers.
The following Lemmas~\ref{lem:outliers-dual-cost-bound} and~\ref{lem:outliers-parital-cost-bound} serve this purpose.

\begin{lem}
    \label{lem:outliers-dual-cost-bound}
    Let $\calb' \subseteq \calb$ be a set of pairwise disjoint tight pairs and let $U \subseteq X'$ be a set of points not covered by $\calb'$ such that every $j \in U$ is tight.
    Then
    \begin{align*}
        \sr(\calb')
         & \leq \sum_{j \in X'} \alpha_j - |\calb'| \cdot \lambda - |U| \cdot \gamma.
    \end{align*}
\end{lem}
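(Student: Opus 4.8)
The plan is to unfold the tightness conditions and then use disjointness to avoid double-counting. First I would write $\sr(\calb') = \sum_{(i,r)\in\calb'} r$ and, using that every pair $(i,r)\in\calb'$ is tight, replace each radius $r$ by $\sum_{j\in B(i,r)\cap X'}\alpha_j - \lambda$. This gives
\[
    \sr(\calb') = \sum_{(i,r)\in\calb'}\Bigl(\sum_{j\in B(i,r)\cap X'}\alpha_j\Bigr) - |\calb'|\cdot\lambda .
\]

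Next I would exploit that the pairs in $\calb'$ are pairwise disjoint: the point sets $B(i,r)\cap X'$ for $(i,r)\in\calb'$ are pairwise disjoint, so the double sum collapses to $\sum_{j\in X'(\calb')}\alpha_j$, where $X'(\calb')=X'\setminus\out(\calb')$ denotes the points of $X'$ covered by $\calb'$. Since all $\alpha_j\ge 0$ and $U$ is, by hypothesis, a subset of $X'\setminus X'(\calb')$, we can bound $\sum_{j\in X'(\calb')}\alpha_j \le \sum_{j\in X'}\alpha_j - \sum_{j\in U}\alpha_j$. Finally, since every $j\in U$ is tight we have $\alpha_j=\gamma$, hence $\sum_{j\in U}\alpha_j = |U|\cdot\gamma$, and combining the three displayed steps yields exactly
\[
    \sr(\calb') \le \sum_{j\in X'}\alpha_j - |\calb'|\cdot\lambda - |U|\cdot\gamma ,
\]
which is the claim.

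There is no real obstacle here; the statement is essentially complementary-slackness bookkeeping. The only point that needs a sentence of care is the observation that disjointness of the \emph{balls} $B(i,r)$ implies disjointness of the truncated sets $B(i,r)\cap X'$ (immediate), so that no point of $X'$ is counted twice when summing the tight constraints over $\calb'$, and that $U$ and $X'(\calb')$ are disjoint by the assumption that no point of $U$ is covered by $\calb'$ — this is what lets us add the $-|U|\cdot\gamma$ term without losing the $\alpha$-mass on $X'(\calb')$. Nonnegativity of the remaining $\alpha_j$ (those in $X'\setminus(X'(\calb')\cup U)$) is what makes the inequality go in the right direction.
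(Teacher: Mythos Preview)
Your proposal is correct and follows essentially the same argument as the paper: use tightness of each pair to write $r=\sum_{j\in B(i,r)\cap X'}\alpha_j-\lambda$, use pairwise disjointness so no $\alpha_j$ is counted twice, then add $\sum_{j\in U}\alpha_j-|U|\gamma=0$ from tightness of the points in $U$ and bound the remaining $\alpha$-mass by $\sum_{j\in X'}\alpha_j$ using nonnegativity. Your write-up is in fact slightly more explicit than the paper's about where the inequality comes from (the uncounted points in $X'\setminus(X'(\calb')\cup U)$).
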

\begin{proof}
    Since pairs in $\calb'$ are tight, it follows that $r = \sum_{j \in B(i,r)} \alpha_j - \lambda$ for each pair $(i,r) \in \calb'$, and the pairs are pairwise disjoint, we have that no point is charged more than once this way, and since only points contained in these balls are charged, it follows that a value of $\gamma$ can be charged from each point in $U$.
    Hence,
    \begin{align*}
        \sr(\calb')
         & = \sum_{(i,r) \in \calb'} r + 0
        = \sum_{(i,r) \in \calb'} \left( \sum_{j \in B(i,r)} \alpha_j - \lambda \right) + \sum_{j \in U} \alpha_j - |U| \cdot \gamma
        \leq \sum_{j \in X'} \alpha_j - |\calb'| \cdot \lambda - |U| \cdot \gamma,
    \end{align*}
    completing the proof.
\end{proof}

Let $\calb' \subset \calb$ be a set of pairs, then we denote by $\creplaced{\calb'}$ the set of pairs obtained by replacing each component $C \in \comp(\calb')$ for a cheapest pair $(i,r)$ such that $X(C) \subseteq B(i,r)$.
Also, let $\OPT'_{LP}$ denote the value of an optimal fractional solution to $(D')$.

\begin{lem}
    \label{lem:outliers-parital-cost-bound}
    Let $\calb_1, \calb_2 \subseteq \calb$ be sets of tight pairs.
    If $|\comp(\calb_1)| \geq k'$ and $|\out(\calb_1)| \geq m$, and every $j \in \out(\calb_1)$ is tight, then
    \begin{align*}
        \sr(\creplaced{\calb_1 \cup \calb_2})
         & \leq 3 \cdot \OPT'_{LP} + 3 \cdot \sr(\calb_2)
    \end{align*}
\end{lem}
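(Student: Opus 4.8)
The plan is to bound $\sr(\creplaced{\calb_1\cup\calb_2})$ component by component, pay for each chosen radius with the disjoint witness sets supplied by Lemma~\ref{lem:normal-component}, and then account for these witnesses against the dual objective of $(\lambda,\gamma,\alpha)$ plus the $3\sr(\calb_2)$ surcharge. First I would write $\sr(\creplaced{\calb_1\cup\calb_2})=\sum_{C\in\comp(\calb_1\cup\calb_2)} r_C$, where $r_C$ is the radius of the cheapest pair covering $X(C)$. Since all pairs in $\calb_1\cup\calb_2$ are tight, Lemma~\ref{lem:normal-component} gives, for each component $C$, a set $C_d\subseteq C$ of tight pairs whose balls are pairwise disjoint with $r_C\le 3\cdot\sr(C_d)$ (the $\mu$-slack term is irrelevant since we only use the first inequality of that lemma).

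Next I would collect $E:=\bigcup_{C\in\comp(\calb_1\cup\calb_2)}C_d$. The balls of $E$ are pairwise disjoint: within each component by the choice of $C_d$, and across components because distinct components of $\calb_1\cup\calb_2$ have disjoint point sets; moreover all pairs of $E$ are tight. Partition $E$ into $E_1:=E\cap\calb_1$ and $E_2:=E\setminus\calb_1\subseteq\calb_2$, so that $\sum_C r_C\le 3\cdot\sr(E)=3\cdot\sr(E_1)+3\cdot\sr(E_2)\le 3\cdot\sr(E_1)+3\cdot\sr(\calb_2)$. For $\sr(E_1)$ I would invoke Lemma~\ref{lem:outliers-dual-cost-bound}: since $|\out(\calb_1)|\ge m$ and every point of $\out(\calb_1)$ is tight, pick $U\subseteq\out(\calb_1)$ with $|U|=m$; as $E_1\subseteq\calb_1$, no point of $U$ is covered by $E_1$, and the lemma yields $\sr(E_1)\le\sum_{j\in X'}\alpha_j-|E_1|\cdot\lambda-m\cdot\gamma$. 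Combining, $\sr(\creplaced{\calb_1\cup\calb_2})\le 3\big(\sum_{j\in X'}\alpha_j-|E_1|\cdot\lambda-m\cdot\gamma\big)+3\cdot\sr(\calb_2)$, and since $(\lambda,\gamma,\alpha)$ is feasible for $(D')$ its objective $\sum_{j\in X'}\alpha_j-k'\cdot\lambda-m\cdot\gamma$ is at most $\OPT'_{LP}$; so it remains to verify $|E_1|\ge k'$, after which the claimed bound follows because $\lambda\ge 0$.

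The main obstacle is exactly this last point: ensuring that the union of the disjoint witness sets contains at least $k'$ pairs of $\calb_1$. Since $|\comp(\calb_1)|\ge k'$, it suffices to choose, for every component $C$ of $\calb_1\cup\calb_2$, the set $C_d$ so that it meets each $\calb_1$-component contained in $C$. I would start from a maximum-$\sr$ pairwise disjoint subset of $C$ (the set used in Lemma~\ref{lem:normal-component}) and note that a $\calb_1$-component of $C$ missed by $C_d$ can only intersect $\calb_2$-balls of $C_d$, since a $\calb_1$-ball of $C_d$ intersecting it would lie in the same $\calb_1$-component. One then performs exchange steps, replacing such a $\calb_2$-ball of $C_d$ by a suitable $\calb_1$-ball of the missed component while tracking the change in $\sr(C_d)$; because a single $\calb_2$-ball may meet several $\calb_1$-components simultaneously, these exchanges must be organized globally over $\comp(\calb_1\cup\calb_2)$ rather than one component at a time, and any residual loss has to be absorbed into the unused part of the $3\sr(\calb_2)$ surcharge (using that $\calb_2$-balls which glue together many separated $\calb_1$-components must carry a large total radius). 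Making this bookkeeping precise, so that both $|E_1|\ge k'$ holds and the $3\sr(\calb_2)$ budget is never exceeded, is the technically delicate step; the outlier accounting, by contrast, is immediate from $|\out(\calb_1)|\ge m$ and the tightness of those outliers.
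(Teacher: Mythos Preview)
Your setup is correct up to the point where you need $|E_1|\ge k'$, and you correctly identify this as the crux. But the exchange argument you sketch is not carried out and is genuinely fragile: when you remove a $\calb_2$-ball $b$ from $C_d$ to insert a $\calb_1$-ball from a missed $\calb_1$-component, that new ball may intersect other balls still in $C_d$ (not $b$ itself, but its neighbors), so you do not obtain a pairwise disjoint set without further deletions; and since one $\calb_2$-ball can block several $\calb_1$-components at once, there is no evident charging that simultaneously secures $|E_1|\ge k'$ and keeps the surplus bounded by $3\sr(\calb_2)$. As written this step is a gap.

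The paper bypasses the obstacle with a cleaner decomposition. Rather than splitting $E=F'':=\bigcup_{C\in\comp(\calb_1\cup\calb_2)}C_d$ into its $\calb_1$- and $\calb_2$-parts, it introduces a \emph{different} reference set $F:=\bigcup_{C'\in\comp(\calb_1)}C'_d$, where $C'_d$ is the maximum-$\sr$ pairwise disjoint subset of each $\calb_1$-component $C'$. Then $|F|\ge|\comp(\calb_1)|\ge k'$ is automatic, $F\subseteq\calb_1$ is pairwise disjoint, and $\out(\calb_1)$ is uncovered by $F$, so Lemma~\ref{lem:outliers-dual-cost-bound} applies directly to $F$. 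The one remaining inequality is $\sr(F'')\le\sr(F)+\sr(\calb_2)$, which follows from maximality: for any component $C$ of $\calb_1\cup\calb_2$ and any $\calb_1$-component $C'\subseteq C$, the set $C_d\cap C'$ is pairwise disjoint in $C'$, hence $\sr(C_d\cap C')\le\sr(C'_d)$; summing over $C'$ and $C$ gives $\sr(F''\cap\calb_1)\le\sr(F)$, while $F''\setminus\calb_1\subseteq\calb_2$ contributes at most $\sr(\calb_2)$. This replaces your exchange procedure with a two-line maximality argument, and the bound $3\cdot\OPT'_{LP}+3\cdot\sr(\calb_2)$ then falls out exactly as in your last paragraph.
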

\begin{proof}
    For each component $C$, let $(i_C,r_C)$ denote a pair with smallest radius $r_C$ such that $X(C) \subseteq B(i_C,r_C)$, and let $C_d$ denote a subset of $C$ whose corresponding balls are pairwise disjoint and maximizes~$\sr(C_d)$.

    Define $F = \bigcup_{C \in \comp(\calb_1)} C_d$, $F' = \bigcup_{C \in \comp(\calb_2)} C_d$, and $F'' = \bigcup_{C \in \comp(\calb_1 \cup \calb_2)} C_d$, and observe that the balls corresponding to the pairs in $F$ are pairwise disjoint, and, similarly, the balls corresponding to the pairs in $F'$ and the pairs in $F''$ are also respectively pairwise disjoint.
    Also, notice that $\sr(F'') \leq \sr(F) + \sr(F') \leq \sr(F) + \sr(\calb_2)$.
    This implies that $|F| \geq |\comp(\calb_1)| \geq k'$, as $F$ contains at least a pair for each component in $\comp(\calb_1)$, and it also implies that $\out(F) \supseteq \out(\calb_1)$ because $\calb_1$ covers every point $F$ covers, hence we may apply Lemma~\ref{lem:outliers-dual-cost-bound}.
    Hence,
    \begin{align*}
        \sr(\creplaced{\calb_1 \cup \calb_2})
         & \stackrel{\phantom{\scriptsize \mathrm{(Lem.\ref{lem:outliers-dual-cost-bound})}}}{=} \sum_{C \in \comp(\calb_1 \cup \calb_2)} r_C
        \stackrel{{\scriptsize \mathrm{(Lem.\ref{lem:normal-component})}}}{\leq} \sum_{C \in \comp(\calb_1 \cup \calb_2)} 3 \cdot \sr(C_d)
        = 3 \cdot \sr(F'')                                                                                                                                                                                                   \\
         & \stackrel{\phantom{\scriptsize \mathrm{(Lem.\ref{lem:outliers-dual-cost-bound})}}}{\leq} 3 \cdot \sr(F) + 3 \cdot \sr(\calb_2)                                                                                    \\
         & \stackrel{{\scriptsize \mathrm{(Lem.\ref{lem:outliers-dual-cost-bound})}}}{\leq} 3 \left( \sum_{j \in X'} \alpha_j - \left| F \right| \cdot \lambda - |\out(\calb_1)| \cdot \gamma \right) + 3 \cdot \sr(\calb_2) \\
         & \stackrel{\phantom{\scriptsize \mathrm{(Lem.\ref{lem:outliers-dual-cost-bound})}}}{\leq} 3 \left( \sum_{j \in X'} \alpha_j - k \cdot \lambda - m \cdot \gamma \right) + 3 \cdot \sr(\calb_2)                      \\
         & \stackrel{\phantom{\scriptsize \mathrm{(Lem.\ref{lem:outliers-dual-cost-bound})}}}{\leq} 3 \cdot \OPT'_{LP} + 3 \cdot \sr(\calb_2),
    \end{align*}
    and this completes the proof.
\end{proof}

Recall that for an ordered set of pairs $\calb' = \{(i_1,r_1),\dots\}$, define $\calb'_q := \{(i_{q'}, r_{q'}) \in \calb' : q' \leq q \}$.

\defnoutliersorderlystructured*

Because the removal of a tight pair from $\calb'_\ell$ can decrease the number of components by at most~one, we write for reference the following extra property that is a direct consequence of~\ref{defn:os4} of orderly structured sets:
\begin{enumerate}[label=OS$\arabic*.$,ref={\mbox{\rm OS$\arabic*$}}, noitemsep, leftmargin=1.5cm]
    \setcounter{enumi}{3}
    \item \label{defn:os5} $|\comp(\calb'_{\ell-1})| \geq k$.
\end{enumerate}

\lemoutliersfinalcostbound*
\begin{proof}
    The proof follows by applying Lemma~\ref{lem:outliers-parital-cost-bound} in four distinct cases.
    For all cases, we note that from~\ref{defn:os2} we always have that $\alpha_j = \gamma$ for every $j \in \out(\calb'_i)$ and every $i \geq \ell - 1$.

    \textbf{Case 1:} Assume that $\ell' = \ell$.
    Then, we have that
    \begin{align*}
        |\comp(\calb'_{\ell-1})| \stackrel{({\scriptsize\ref{defn:os5}})}{\geq} & {\mathmakebox[\widthof{$m$}]{k}} \stackrel{({\scriptsize\ref{defn:os4}})}{\geq} |\comp(\calb'_{\ell-1} \cup \{(i_\ell,r_\ell),(i^*,r^*)\})|, \text{ and} \\
        |\out(\calb'_{\ell-1})| \stackrel{({\scriptsize\ref{defn:os3}})}{>}     & m \stackrel{({\scriptsize\ref{defn:os3}})}{\geq} |\out(\calb'_{\ell-1} \cup \{(i_\ell,r_\ell),(i^*,r^*)\})|.
    \end{align*}
    Thus, $\creplaced{\calb'_{\ell} \cup \{(i^*,r^*)\}}$ is a feasible solution and its cost can be bounded:
    \begin{align*}
        \sr(\creplaced{\calb'_{\ell} \cup \{(i^*,r^*)\}})
         & = \sr(\creplaced{\calb'_{\ell-1} \cup \{(i_\ell,r_\ell),(i^*,r^*)\}})
        \stackrel{{\scriptsize \mathrm{(Lem.\ref{lem:outliers-parital-cost-bound})}}}{\leq} 3 \cdot \OPT'_{LP} + 3 (r_\ell + r^*) \\
         & \leq 3 \cdot \OPT'_{LP} + O(\epsilon) \cdot \OPT.
    \end{align*}

    \textbf{Case 2:} Assume that $\ell' = \ell - 1$.
    Then, we have that
    \begin{align*}
        |\comp(\calb'_{\ell-1})| \stackrel{({\scriptsize\ref{defn:os5}})}{\geq} & {\mathmakebox[\widthof{$m$}]{k}} \stackrel{({\scriptsize\ref{defn:os4}})}{\geq} |\comp(\calb'_{\ell-1} \cup \{(i^*,r^*)\})|, \text{ and} \\
        |\out(\calb'_{\ell-1})| \stackrel{({\scriptsize\ref{defn:os3}})}{>}     & m \stackrel{({\scriptsize\ref{defn:os3}})}{\geq} |\out(\calb'_{\ell-1} \cup \{(i^*,r^*)\})|.
    \end{align*}
    Thus, $\calb'_{\ell-1}\cup \{(i^*,r^*)\}$ is a feasible solution and its cost can be bounded:
    \begin{align*}
        \sr(\creplaced{\calb'_{\ell-1} \cup \{(i^*,r^*)\}})
         & \stackrel{{\scriptsize \mathrm{(Lem.\ref{lem:outliers-parital-cost-bound})}}}{\leq} 3 \cdot \OPT'_{LP} + 3 \cdot r^*
        \leq 3 \cdot \OPT'_{LP} + O(\epsilon) \cdot \OPT.
    \end{align*}

    \textbf{Case 3:} Assume that $\ell' \leq \ell-2$ and that $|\comp(\calb'_{\ell} \cup \{(i^*,r^*)\})| > k$.
    Because $|\comp(\calb'_{\ell} \cup \{(i^*,r^*)\})| > k$, from~\ref{defn:os4} we have that there exists $h \in [\ell', \ell-1]$, such that $|\comp(\calb'_{h+1} \cup \{(i^*,r^*)\})| > {\mathmakebox[\widthof{$m$}]{k}} \geq |\comp(\calb'_{h} \cup \{(i^*,r^*)\})|$.
    Observe that $(\calb'_{h+1} \cup \{(i^*,r^*)\}) \setminus (\calb'_{h} \cup \{(i^*,r^*)\}) = \{ (i_{h+1},r_{h+1}) \}$, and since the inclusion of $(i_{h+1},r_{h+1})$ increases the number of components, we have that ${B(i_{h+1},r_{h+1})}$ does not intersect the ball corresponding to any pair in $\calb'_{h} \cup \{(i^*,r^*)\}$.
    Therefore, the number of components increases by at most one and we have that $k = |\comp(\calb'_{h} \cup \{(i^*,r^*)\})|$.
    We further break this case into two subcases.

    \textbf{Case 3.1:} Assume that the $B(i^*,r^*)$ intersects the ball corresponding to some pair in $\calb'_h$.
    Then, the removal of $\{(i^*,r^*)\}$ from $\calb'_{h} \cup \{(i^*,r^*)\}$ cannot decrease the number of components and we have that $|\comp(\calb'_{h})| \geq |\comp(\calb'_{h} \cup \{(i^*,r^*)\})| = k$.
    Then, we have that
    \begin{align*}
        |\comp(\calb'_{h})| \stackrel{\phantom{({\scriptsize\ref{defn:os3}})}}{\geq} & {\mathmakebox[\widthof{$m$}]{k}} \stackrel{\phantom{({\scriptsize\ref{defn:os3}})}}{=} |\comp(\calb'_{h} \cup \{(i^*,r^*)\})|, \text{ and} \\
        |\out(\calb'_{h})| \stackrel{({\scriptsize\ref{defn:os3}})}{>}               & m \stackrel{({\scriptsize\ref{defn:os3}})}{\geq} |\out(\calb'_{h} \cup \{(i^*,r^*)\})|.
    \end{align*}
    Thus, $\calb'_{h} \cup \{(i^*,r^*)\}$ is a feasible solution and its cost can be bounded:
    \begin{align*}
        \sr(\creplaced{\calb'_{h} \cup \{(i^*,r^*)\}})
         & \stackrel{{\scriptsize \mathrm{(Lem.\ref{lem:outliers-parital-cost-bound})}}}{\leq} 3 \cdot \OPT'_{LP} + 3 \cdot r^*
        \leq 3 \cdot \OPT'_{LP} + O(\epsilon) \cdot \OPT.
    \end{align*}

    \textbf{Case 3.2:} Assume that the $B(i^*,r^*)$ does not intersect the ball corresponding to any pair in $\calb'_h$.
    Then we can replace $(i^*,r^*)$ by $(i_{h+1},r_{h+1})$ in $\calb'_h \cup \{(i^*,r^*)\}$ without increasing the number of components, because $B(i_{h+1},r_{h+1})$ also does not intersect $B(i,r)$ for any pair $(i,r) \in \calb'_h$, i.e., $k = |\comp(\calb'_h \cup \{(i^*,r^*)\})| = |\comp(\calb'_h \cup \{(i_{h+1},r_{h+1})\})|$.
    Then, we have that
    \begin{align*}
        |\comp(\calb'_h \cup \{(i_{h+1},r_{h+1})\})| \stackrel{\phantom{({\scriptsize\ref{defn:os3}})}}{=} & {\mathmakebox[\widthof{$m$}]{k}} \stackrel{\phantom{({\scriptsize\ref{defn:os3}})}}{=} |\comp(\calb'_{h} \cup \{(i^*,r^*)\})|, \text{ and} \\
        |\out(\calb'_h \cup \{(i_{h+1},r_{h+1})\})| \stackrel{({\scriptsize\ref{defn:os3}})}{>}            & m \stackrel{({\scriptsize\ref{defn:os3}})}{\geq} |\out(\calb'_{h} \cup \{(i^*,r^*)\})|.
    \end{align*}
    Furthermore, since $B(i^*,r^*)$ and $B(i_{h+1},r_{h+1})$ do not intersect, we have that
    \begin{align}
        \label{lem:outliers-final-cost-bound:ineq1}
        \sr(\creplaced{\calb'_h \cup \{(i^*,r^*)\}}) \leq \sr(\creplaced{\calb'_h \cup \{(i^*,r^*)\}}) + \sr(\{(i_{h+1,r_{h+1}})\}) = \sr(\creplaced{\calb'_h \cup \{(i_{h+1},r_{h+1}),(i^*,r^*)\}})
    \end{align}
    Thus, $\creplaced{\calb'_{h} \cup \{(i^*,r^*)\}}$ is a feasible solution and its cost can be bounded:
    \begin{align*}
        \sr(\creplaced{\calb'_{h} \cup \{(i^*,r^*)\}})
         & \stackrel{{\scriptsize \mathrm{(ineq.\ref{lem:outliers-final-cost-bound:ineq1})}}}{\mathmakebox[\widthof{$\stackrel{{\scriptsize \mathrm{(ineq.\ref{lem:outliers-final-cost-bound:ineq1})}}}{\leq}$}]{\leq}} \sr(\creplaced{\calb'_h \cup \{(i_{h+1},r_{h+1}),(i^*,r^*)\}}) \\
         & \stackrel{\phantom{\scriptsize \mathrm{(ineq.\ref{lem:outliers-final-cost-bound:ineq1})}}}{=} \sr(\creplaced{\calb'_{h+1} \cup \{(i^*,r^*)\}})                                                                                                                              \\
         & \stackrel{{\scriptsize \mathrm{(Lem.\ref{lem:outliers-parital-cost-bound})}}}{\mathmakebox[\widthof{$\stackrel{{\scriptsize \mathrm{(ineq.\ref{lem:outliers-final-cost-bound:ineq1})}}}{\leq}$}]{\leq}} 3 \cdot \OPT'_{LP} + 3 \cdot r^*                                    \\
         & \stackrel{\phantom{\scriptsize \mathrm{(ineq.\ref{lem:outliers-final-cost-bound:ineq1})}}}{\leq} 3 \cdot \OPT'_{LP} + O(\epsilon) \cdot \OPT.
    \end{align*}

    \textbf{Case 4:} Assume $\ell' \leq \ell-2$ and that $|\comp(\calb'_{\ell} \cup \{(i^*,r^*)\})| \leq k$.
    Then, we have that
    \begin{align*}
        |\comp(\calb'_{\ell-1})| \stackrel{({\scriptsize\ref{defn:os5}})}{\geq} & {\mathmakebox[\widthof{$m$}]{k}} \stackrel{({\scriptsize\ref{defn:os4}})}{\geq} |\comp(\calb'_{\ell-1} \cup \{(i_\ell,r_\ell),(i^*,r^*)\})|, \text{ and} \\
        |\out(\calb'_{\ell-1})| \stackrel{({\scriptsize\ref{defn:os3}})}{>}     & m \stackrel{({\scriptsize\ref{defn:os3}})}{\geq} |\out(\calb'_{\ell-1} \cup \{(i_\ell,r_\ell),(i^*,r^*)\})|.
    \end{align*}
    Thus $\creplaced{\calb'_{\ell-1}\cup \{(i_{\ell},r_\ell),(i^*,r^*)\}}$ is a feasible solution and its cost can be bounded:
    \begin{align*}
        \sr(\creplaced{\calb'_{\ell-1} \cup \{(i^*,r^*)\}})
         & \stackrel{{\scriptsize \mathrm{(Lem.\ref{lem:outliers-parital-cost-bound})}}}{\leq} 3 \cdot \OPT'_{LP} + 3 \cdot r^*
        \leq 3 \cdot \OPT'_{LP} + O(\epsilon) \cdot \OPT.
    \end{align*}
    This completes the proof.
\end{proof}

\section{Proof of Lemma~\ref{lem:outliers-find-os-polytime}}

Let $s^{*}$ denote the iteration in which we are given an interval $(b_{L}^{(s^{*})},b_{R}^{(s^{*})})$ such that the execution of the subroutine for any $\lambda \in (b_{L}^{(s^{*})},b_{R}^{(s^{*})})$ ends at phase $s^*-1$ and $g_{s^*-1}(\lambda)$ is affine.
Let $\calp_M$ be the computed pair when running the subroutine for $\lambda \in (b_{L}^{(s^{*})},b_{R}^{(s^{*})})$, and let $\calp_L$ and $\calp_R$ be the computed pairs for $\lambda = b_{L}^{(s^{*})}$ and $\lambda = b_{R}^{(s^{*})}$.
Ahmadian and Swamy showed that our subroutine has the following continuity property~\cite{Ahmadian2016}, which is an adaptation of the continuity property presented by Charikar and Panigrahy~\cite{Charikar2004}.
We provide a proof for completeness.

\begin{lem}[Ahmadian and Swamy~\cite{Ahmadian2016} - rephrased]
    \label{lem:outliers-continuity}
    Let $(\lambda,\gamma,\alpha)$ and $(\lambda',\gamma',\alpha')$ be solutions to $(D')$ computed by our subroutine for $\lambda$ and $\lambda'$.
    If $|\lambda' - \lambda| \leq \delta$, then
    \begin{enumerate}[label=$\arabic*.$,ref={\mbox{\rm $\arabic*$}}, noitemsep, leftmargin=1.5cm]
        \item \label{lem:outliers-continuity-1} $|\alpha'_j - \alpha_j| \leq \delta \cdot 2^{|X'|}$ for every $j \in X'$,
        \item \label{lem:outliers-continuity-2} $|\gamma' - \gamma| \leq \delta \cdot 2^{|X'|}$,
        \item \label{lem:outliers-continuity-3} if a pair $(i,r)$ is tight for $(\lambda,\gamma,\alpha)$, then $\sum_{j \in B(i,r) \cap X'} \alpha'_j \geq r + \lambda' - \delta \cdot 2^{|X'|}$, and
        \item \label{lem:outliers-continuity-4} if a pair $(i',r')$ is tight for $(\lambda',\gamma',\alpha')$, then $\sum_{j \in B(i',r') \cap X'} \alpha_j \geq r' + \lambda - \delta \cdot 2^{|X'|}$.
    \end{enumerate}
\end{lem}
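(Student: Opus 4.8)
The plan is to reduce all four claims to a single $\ell_1$-estimate, $\sum_{j\in X'}|\alpha_j-\alpha'_j|\le\delta\cdot(2^{|X'|}-1)$, on how the $\alpha$-vector produced by the subroutine moves when $\lambda$ is perturbed, and to prove that estimate by an induction over the points of $X'$. The starting point is the observation that, in any run of the subroutine with a fixed $\lambda>0$, every $\alpha_j$ grows at unit rate while $j$ is not covered by a currently tight pair and is frozen from the moment $j$ is first covered; hence at every time $t$ one has $\alpha_j(t)=\min(t_j,t)$, where $t_j$ is the time at which $j$ is first covered by a tight pair (for the $\le m$ points never covered, and for those covered only during the last phase, $t_j$ equals the termination time of the subroutine). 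In particular the final dual solution satisfies $\alpha_j=t_j$ for all $j$ and $\gamma=\max_j t_j$, so the lemma is a Lipschitz statement about the coverage/termination times as functions of $\lambda$.

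Assume w.l.o.g.\ $\lambda\le\lambda'$, put $\delta_0:=\lambda'-\lambda\le\delta$, write $t_j,t'_j$ for the coverage times in the two runs, and set $\Phi(t):=\sum_{j\in X'}|\alpha_j(t)-\alpha'_j(t)|$; note $\Phi(0)=0$ and that $\Phi$ is nondecreasing, since its derivative at time $t$ equals the number of points uncovered in exactly one of the two runs. The crucial step is: if a point $j$ is first covered at time $s$ by a pair $(i,r)$ in one run, then in the other run $j$ is covered (or the other run terminates) by time $s+\Phi(s)+\delta_0$. Indeed, $(i,r)$ is tight at time $s$ in the first run, so $\sum_{j'\in B(i,r)\cap X'}\alpha'_{j'}(s)\ge\sum_{j'\in B(i,r)\cap X'}\alpha_{j'}(s)-\Phi(s)=r+\lambda-\Phi(s)$, i.e.\ the slack of the constraint of $(i,r)$ in the other run is at most $\Phi(s)+\delta_0$ at time $s$; while $j$ remains uncovered there this slack decreases at rate at least $1$ (the point $j$ is one of its active terms), so $j$ gets covered within $\Phi(s)+\delta_0$ more time. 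Applying this in both directions, and using monotonicity of $\Phi$ together with the fact that the termination time of a run equals $t_{j^{*}}$ for some $j^{*}$ covered by a tight pair in the last phase, yields $|t_j-t'_j|\le\Phi(\min(t_j,t'_j))+\delta_0$ for every $j\in X'$.

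Now order $X'=\{j_1,\dots,j_{|X'|}\}$ so that $m_k:=\min(t_{j_k},t'_{j_k})$ is nondecreasing and put $S_k:=\sum_{l\le k}|t_{j_l}-t'_{j_l}|$. For every $l\ge k$ we have $\alpha_{j_l}(m_k)=\alpha'_{j_l}(m_k)=m_k$, so $\Phi(m_k)\le\sum_{l<k}|t_{j_l}-t'_{j_l}|=S_{k-1}$; combined with the previous inequality this gives $S_k\le 2S_{k-1}+\delta_0$, and $S_0=0$ yields $\sum_{j\in X'}|\alpha_j-\alpha'_j|=S_{|X'|}\le\delta_0(2^{|X'|}-1)\le\delta(2^{|X'|}-1)$. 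The four claims now follow at once: Item~\ref{lem:outliers-continuity-1} because each summand is bounded by the sum; Item~\ref{lem:outliers-continuity-2} because $\gamma=\max_j\alpha_j$ and $|\max_j\alpha_j-\max_j\alpha'_j|\le\max_j|\alpha_j-\alpha'_j|$; and Items~\ref{lem:outliers-continuity-3}--\ref{lem:outliers-continuity-4} because, e.g., if $(i,r)$ is tight for $(\lambda,\gamma,\alpha)$ then $\sum_{j\in B(i,r)\cap X'}\alpha'_j\ge (r+\lambda)-\delta(2^{|X'|}-1)=r+\lambda'-\delta_0-\delta(2^{|X'|}-1)\ge r+\lambda'-\delta\cdot2^{|X'|}$, and symmetrically for the reverse direction.

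The main obstacle is the bookkeeping around the last phase and the (up to $m$) uncovered points: one has to make precise that the termination time of a run is indeed attained at a point covered by a tight pair, and that the crucial step of the second paragraph stays valid when the two runs stop at different times or when a witnessing point has already been covered in the other run. Each of these requires a short case analysis, but every case is settled by the same slack-closing argument, so I expect the conceptual content to be exactly what is sketched above.
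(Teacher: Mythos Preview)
Your proposal is correct and follows essentially the same approach as the paper: both arguments order the points of $X'$ by $\min(\alpha_j,\alpha'_j)$ (equivalently, by your $m_k=\min(t_{j_k},t'_{j_k})$), use the slack-closing observation to bound the $h$-th increment by $\delta$ plus the accumulated discrepancy on the first $h-1$ points, and solve the resulting doubling recurrence to reach the $2^{|X'|}$ bound, from which all four items follow. The only cosmetic difference is that you track the $\ell_1$ partial sums $S_k$ and the running discrepancy $\Phi(t)$, whereas the paper tracks the coordinate-wise bounds $|\alpha_h-\alpha'_h|\le\delta\cdot 2^{h-1}$ directly; the termination/last-phase bookkeeping you flag as the main obstacle is likewise glossed over in the paper and is handled by the same case analysis you outline.
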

\begin{proof}
    For each $j \in X'$, let $\alpha^*_j = \min\{\alpha_j, \alpha'_j\}$, and, to simplify notation, denote each point by a number in $\{1,2,\dots,|X'|\}$ in such a way that $\alpha^*_1 \leq \alpha^*_2 \leq \dots \leq \alpha^*_{|X'|}$.
    Consider the two executions of the subroutine in parallel, one with $\lambda$ and the other with $\lambda'$, and recall that the variables $\alpha_j$ and $\alpha'_j$ start at zero for every $j \in X'$ and each one increases at every phase of the routine until a pair covering it becomes tight.

    We apply induction that $|\alpha'_j - \alpha_j| \leq \delta \cdot 2^{j-1}$.
    Assume, w.l.o.g., that $\alpha^*_1 = \alpha_1$, and let $t_1$ denote the time at which a pair covering point $1$ becomes tight for the execution with $\lambda$.
    Since this is the first phase, it is also the time at which the first phase ends.
    Since this is the end of the first phase, every variable was increasing up to this time.
    This means that at time $t_1$ we have that
    \begin{align*}
        \sum_{j \in B(i,r) \cap X'} \alpha'_j
         & = \sum_{j \in B(i,r) \cap X'} \alpha_j
        = r + \lambda
        \leq r + \lambda' + \delta,
    \end{align*}
    for every pair $(i,r)$ covering point $1$, and hence $\alpha'_1$ can increase by at most $\delta$ until the routine ends, and we conclude that $|\alpha'_1 - \alpha_1| \leq \delta \cdot 2^{0}$

    Assume that the inductive hypothesis is true for every $j \in [h - 1]$.
    Let $t_h$ be the lowest time a pair covering point $h$ becomes tight in one of the two executions, and assume that this occurs for execution with $\lambda$.
    The calculations are the same if this had occurred for the execution with $\lambda'$.
    From the inductive hypothesis we have that $|\alpha'_j \leq \alpha_j| + \delta \cdot 2^{j-1}$ for $j < h$, and since the variable $\alpha_j$ and $\alpha'_j$ for every point $j \geq h$ never stopped increasing before time $t_h$ we have that $\alpha'_j = \alpha_j$ for $j \geq h$.
    This means that time $t_h$ we have that
    \begin{align*}
        \sum_{j \in B(i,r) \cap X'} \alpha'_j
         & = \sum_{\substack{j \in B(i,r) \cap X'                      \\ j < h}} \alpha'_j + \sum_{\substack{j \in B(i,r) \cap X' \\ j \geq h}} \alpha'_j
        \leq \sum_{\substack{j \in B(i,r) \cap X'                      \\ j < h}} \left( \alpha_j + \delta \cdot 2^{j-1} \right) + \sum_{\substack{j \in B(i,r) \cap X' \\ j \geq h}} \alpha'_j \\
         & = r + \lambda + \sum_{\substack{j \in B(i,r) \cap X'        \\ j < h}} \delta \cdot 2^{j-1}
        \leq r + \lambda' + \delta + \sum_{j < h} \delta \cdot 2^{j-1} \\
         & = r + \lambda' + \delta \cdot 2^{h-1},
    \end{align*}
    for every pair $(i,r)$ containing point $h$, and hence $\alpha'_h$ can increase by at most $\delta \cdot 2^{h-1}$ until the routine ends, and we conclude that $|\alpha'_h - \alpha_h| \leq \delta \cdot 2^{h-1}$.
    Now, recall that the routine ends when the tight pairs in $\calp$ do not cover at most $m$ points in $X'$.
    This completes the induction and proves property~\ref{lem:outliers-continuity-1}.

    Now, let $(i,r)$ be a pair that is tight for $(\lambda, \gamma, \alpha)$, then
    \begin{align*}
        r + \lambda'
         & \leq r + \lambda + \delta
        = \sum_{j \in B(i,r) \cap X'} \alpha_j + \delta
        \leq \sum_{j \in B(i,r) \cap X'} (\alpha'_j + \delta \cdot 2^{j-1}) + \delta                    \\
         & \leq \sum_{j \in B(i,r) \cap X'} \alpha'_j + \sum_{i=1}^{|X'|} \delta \cdot 2^{j-1} + \delta
        \leq \sum_{j \in B(i,r) \cap X'} \alpha'_j + \delta \left( 2^{|X'|} - 1 \right) + \delta        \\
         & \leq \sum_{j \in B(i,r) \cap X'} \alpha'_j + \delta \cdot 2^{|X'|},
    \end{align*}
    and we have proven property~\ref{lem:outliers-continuity-3}
    The analogous calculations for a pair $(i',r')$ be a pair that is tight for $(\lambda', \gamma', \alpha')$, one can prove~property~\ref{lem:outliers-continuity-4}.

    Recall that the routine ends when the tight pairs in $\calp$ do not cover at most $m$ points in $X'$.
    Suppose, w.l.o.g., that at time $t$ the routine with $\lambda$ ends.
    Since there was at least one $j \in X'$ such that $\alpha_j$ was increasing up to time $t$, this execution sets $\gamma := \max_{j \in X'} \alpha_j = t$.
    Now, since $|\alpha'_j - \alpha_j| \leq \delta \cdot 2^{|X'|}$ for every $j \in X'$, we then have that
    \begin{align*}
        \gamma'
         & = \max_{j \in X'} \alpha'_j
        \leq \max_{j \in X'} \alpha_j + \delta \cdot 2^{|X'|}
        = \gamma + \delta \cdot 2^{|X'|},
    \end{align*}
    and we have proven property~\ref{lem:outliers-continuity-2}.
\end{proof}

Let $\calb'_M$, $\calb'_L$, and $\calb'_R$ be respectively the sets of all tight pairs computed by running the subroutine for $\lambda \in (b_{L}^{(s^{*})},b_{R}^{(s^{*})})$, $\lambda = b_{L}^{(s^{*})}$, and $\lambda = b_{R}^{(s^{*})}$.
Because we can choose $\lambda \in (b_{L}^{(s^{*})},b_{R}^{(s^{*})})$ to be arbitrarily close to $b_{L}^{(s^{*})}$ or to $b_{R}^{(s^{*})}$, it follows from the continuity property from Lemma~\ref{lem:outliers-continuity} that  $\calp_M \subseteq \calb'_M \subseteq \calb'_L$, and $\calp_M \subseteq \calb'_M \subseteq \calb'_R$.
Now we are ready to prove Lemma~\ref{lem:outliers-find-os-polytime}.

\lemoutliersfindospolytime*
\begin{proof}
    Assume that $|\comp(\calp_M)| \leq k$ and consider $\calp_L$, for which we have that $|\comp(\calp_L)| > k$.
    The case when $|\comp(\calp_M)| > k$ is analogous but we consider $\calp_R$ instead of $\calp_L$, as $|\comp(\calp_R)| \leq k$.

    Let $p_s \in \calp_L$ be the last pair included in $\calp_L$ and $p'_{s'} \in \calp_M$ be the last pair included in $\calp_M$ during the execution of their respective subroutines (it could be that $p_s = p'_{s'}$).
    Let $\calp' = (\calp_L \setminus \{p_s\}) \cap (\calp_M \setminus \{p'_{s'}\})$, and let $\calp_L \setminus \{p_s\} \setminus \calp_M = \{p_1,\dots,p_{s-1}\}$ and $\calp_M \setminus \calp_L = \{p'_1,\dots,p'_{s'-1}\}$ be the pairs exclusively in $\calp_L$ and in $\calp_M$ except for their respective last pairs.

    Consider the following covering procedure where we are given a sequence of the pairs in $\calp_L \cup \calp_M$.
    Starting with $\calq := \calp'$, iteratively include in $\calq$, one at a time, the pairs in the given sequence according to its ordering until at most $m$ points are not covered by the tight pairs in $\calq$.
    Then we return $\calq$.

    Because $|\out(\calp_L \setminus \{p_s\})| > m \geq |\out(\calp_L)|$, if the sequence given is $(p_1,\dots,p_{s-1},p_s)$ then the procedure ends exactly when $p_s$ is included in $\calq$, hence $\calq = \calp_L$.
    Note that, in particular, the covering procedure yields $\calq = \calp_L$ for any sequence whose first $h$ pairs correspond exactly to the sequence $(p_1,\dots,p_{s-1},p_s)$.
    Similarly, because $|\out(\calp_M \setminus \{p'_{s'}\})| > m \geq |\out(\calp_M)|$, if the sequence given is $(p'_1,\dots,p'_{s'-1},p'_{s'})$ then the procedure ends exactly when $p'_{s'}$ is included in $\calq$, hence $\calq = \calp_M$.
    Again, note that, in particular, the covering procedure yields $\calq = \calp_M$ for any sequence whose first $s'$ pairs correspond exactly to the sequence $(p'_1,\dots,p'_{s'-1},p'_{s'})$.

    Define $S^{(i)} := (p'_{i},p'_{i+1},\dots,p'_{s'},p_1,\dots,p_{s-1},p_s)$ for $i \in [s']$, and $S^{(s'+1)} := (p_1,\dots,p_{s-1},p_s)$, and let $\calq^{(i)}$ denote the set of pairs that is returned by the covering procedure when it is given the sequence $S^{(i)}$ and let its elements be ordered in the order they were included in it.
    Notice that $\calq^{(s'+1)} = \calp_L$ and that $\calq^{1} = \calq_M$ because the first $s'$ pairs is exactly the sequence $(p'_1,\dots,p'_{s'-1},p'_{s'})$.
    Since $|\comp(\calq^{(s'+1)})| > k \geq |\comp(\calq^{(1)})|$, there must be an $i^*$ for which for which $|\comp(\calq^{(i^*+1)})| > k \geq |\comp(\calq^{(i^*)})|$.

    Let $\calb' := \calq^{(i^*+1)}$, then add $p'_{i^*}$ into $\calb'$ as its new last pair in the ordering of $\calb'$.
    Since $\calb' = \{x_1,x_2,\dots\}$ is ordered, recall that $\calb'_{q} := \{x_{q'} \in \calb' : q' \leq q\}$ denotes ordered set of the first $q$ pairs in $\calb'$.
    Let $p$ be the last pair in $\calq^{i^*}$, and define $\ell' := 0$ if $p = p'_{i^*}$, otherwise define $\ell'$ to be the index of $p$ in $\calb'$, and define $\ell$ to be the index of the last pair in $\calb'$.
    Then, $1 \leq \ell$ and $0 \leq \ell' \leq \ell$, and
    \[
        |\comp(\calb'_{\ell})| > k \geq |\comp(\calb'_{\ell'} \cup \{p'_{i^*}\})|.
    \]
    Since the covering procedure stops exactly the moment the number of points not covered by the included pairs is at most $m$, we also have that
    \[
        |\out(\calb'_h)| > m \text{ for every } h < \ell, \text{ and } m \geq |\out(\calb'_{\ell'} \cup \{p'_{i^*}\})|,
    \]
    and since the inclusion of additional pairs in $\calb'_{\ell'} \cup \{p'_{i^*}\}$ can only reduce the number of points that are not covered, we have that
    \[
        m \geq |\out(\calb'_{h'} \cup \{p'_{i^*}\})| \text{ for } \ell' \leq h' \leq \ell.
    \]
    Finally, the continuity property from Lemma~\ref{lem:outliers-continuity} implies that every pair in $\calb' \subseteq \calp_L \cup \calp_M$ is tight for the dual solution $(\lambda, \gamma, \alpha)$ computed by the subroutine when $\lambda = \calb_L^{(s^*)}$.
    This is because we can choose $\lambda \in (b_{L}^{(s^{*})},b_{R}^{(s^{*})})$ to be as arbitrarely close to $b_{L}^{(s^{*})}$ and still obtain $\calp_M$.
    From the definition of the subroutine, every point not covered by $\calp_L$ is tight when it is executed for $\lambda = b_{L}^{(s^{*})}$, and every point not covered by $\calp_M$ is tight when it is executed for $\lambda \in (b_{L}^{(s^{*})},b_{R}^{(s^{*})})$.
    Thus, again, by the continuity property from Lemma~\ref{lem:outliers-continuity}, it follows that every point not covered by $\calp_L$ or $\calp_M$ is also tight for $(\lambda, \gamma, \alpha)$.
    Therefore,
    \[
        \text{every point } j \in \out(\calb'_{\ell'}) \text{ is tight and every pair } (i,r) \in \calb' \text{ is tight},
    \]
    and we conclude that $\calb'$ is an orderly structured set for the dual solution $(\lambda, \gamma, \alpha)$ and that it can be computed in polynomial time.
\end{proof}

\end{document}